\newcommand{\field}[1]{\mathbb{#1}}
\newcommand{\remove}[1]{}
\newtheorem{thm}{Theorem}[section]
\newtheorem{lem}[thm]{Lemma}
\newtheorem{define}[thm]{Definition}
\newtheorem{cor}[thm]{Corollary}
\newtheorem{obs}[thm]{Observation}
\newtheorem{prop}[thm]{Proposition}
\def\F{{\mathbb{F}}}
\def\N{{\mathbb{N}}}
\def\_{\,\,\,\,\,}
\def\poly{\textsf{poly}}
\def\sps{\Sigma\Pi\Sigma}
\def\spsp{\Sigma\Pi\Sigma\Pi}
\def\sp{\Sigma\Pi}
\newcommand{\hsp}{\Sigma\Pi\Sigma{\Pi}^{[a]}}
\newcommand{\hspn}{\Sigma\Pi\Sigma{\Pi}^{[\frac{t}{20}]}}
\newcommand{\dep}{{depth $4$}}
\def\psp{\Pi\Sigma\Pi}
\begin{document}

%\begin{titlepage}
%\clearpage\thispagestyle{empty}
%  \maketitle

\mathchardef\mhyphen="2D 

\title{The Limits of Depth Reduction for Arithmetic Formulas: \\{It's all about the top fan-in}
%\footnote{Some of the results in this paper appeared in an earlier paper~\cite{KS-depth4}.}
}
\author{Mrinal Kumar\thanks{Department of Computer Science, Rutgers University.
Email: \texttt{mrinal.kumar@rutgers.edu}.}\and
Shubhangi Saraf\thanks{Department of Computer Science and Department of Mathematics, Rutgers University.
Email: \texttt{shubhangi.saraf@gmail.com}.}}

\date{}
\maketitle
\abstract{In recent years, a very exciting and promising method for proving lower bounds for arithmetic circuits has been proposed. This method combines the method of {\it depth reduction} developed in the works of Agrawal-Vinay~\cite{AV08}, Koiran~\cite{koiran} and Tavenas~\cite{Tavenas13},  and the use of the shifted partial derivative complexity measure developed in the works of Kayal~\cite{Kayal12} and Gupta et al~\cite{GKKS12}. 
These results inspired a flurry of other beautiful results and strong lower bounds for various classes of arithmetic circuits, in particular a recent work of Kayal et al~\cite{KSS13} showing superpolynomial lower bounds for {\it regular} arithmetic formulas via an {\it improved depth reduction} for these formulas. It was left as an intriguing question if these methods could prove superpolynomial lower bounds for general (homogeneous) arithmetic formulas, and if so this would indeed be a breakthrough in arithmetic circuit complexity. 

In this paper we study the power and limitations of depth reduction and shifted partial derivatives  for arithmetic formulas. We do it via studying the class of depth 4 homogeneous arithmetic circuits. We show: (1) the first {\it superpolynomial lower bounds} for the class of homogeneous depth 4 circuits with top fan-in $o(\log n)$. The core of our result is to show  {\it improved depth reduction} for these circuits. This class of circuits has received much attention for the problem of polynomial identity testing. We give the first nontrivial lower bounds for these circuits for any top fan-in $\geq 2$.  (2) We show that improved depth reduction {\it is not possible} when the top fan-in is $\Omega(\log n)$. In particular this shows that the depth reduction procedure of Koiran and Tavenas~\cite{koiran, Tavenas13} cannot be improved even for homogeneous formulas, thus strengthening the results of Fournier et al~\cite{FLMS13} who showed that depth reduction is tight for circuits, and answering some of the main open questions of ~\cite{KSS13, FLMS13}.   Our results in particular suggest that the method of improved depth reduction and shifted partial derivatives may not be powerful enough to prove superpolynomial lower bounds for (even homogeneous) arithmetic formulas. 

}

\thispagestyle{empty}

\newpage
\pagenumbering{arabic}

%\thispagestyle{empty}
%\end{titlepage}
%\newpage
%\pagenumbering{arabic}

%The basic plan is as we discussed.

\section{Introduction}
 In a seminal paper in 1979, Valiant~\cite{Valiant79} laid out a neat theoretical framework for the study of resource bounded algebraic computation and defined the complexity classes $\VP$ and $\VNP$ as the algebraic analogs of $\P$ and $\NP$ respectively. Since then, the problem of understanding whether $\VNP$ is different from $\VP$ has been a problem of fundamental significance in Algebraic Complexity Theory. To show that $\VNP$ is different from $\VP$, it would suffice to show that the Permanent polynomial, which is a complete problem for $\VNP$~\cite{Valiant79} does not have polynomial sized arithmetic circuits. Unfortunately, not much progress has been made towards proving superpolynomial arithmetic circuit lower bounds for any explicit polynomial in spite of the intensive attention that the problem has received. In recent years much effort has been invested in proving lower bounds for restricted classes of arithmetic circuits. The hope is that understanding restricted classed might shed light on how to approach the much more general and seemingly harder problem. Small depth circuits are one such class which have been quite intensively studied from this perspective, and even for small depth circuits, we really only understand lower bounds for depth 2 circuits and some classes of depth 3 and depth 4 circuits~\cite{NW95, SW01, GrigorievKarpinski98, GKKS12, KSS13}. 

Recently a very promising and exciting new framework for proving lower bounds for arithmetic circuits
has emerged. The framework consists of two major components. Let $\mathcal C$ be the class of circuits one wants to prove lower bounds for. The first step is to show that any circuit in $\mathcal C$ can be efficiently {\it depth reduced} to a depth 4 circuit with bounded bottom fan-in ($\spsp^{[t]}$ circuit). This depth reduction procedure was introduced and developed in the works of Agrawal-Vinay~\cite{AV08}, Koiran~\cite{koiran} and Tavenas~\cite{Tavenas13}, building upon the initial depth reduction procedure of Valiant et al~\cite{VSBR83}. The second step is to prove strong lower bounds for $\spsp^{[t]}$ circuits using the {\it shifted partial derivative} complexity measure, which was developed in the works of Kayal~\cite{Kayal12} and Gupta et al~\cite{GKKS12}. Recently this framework was used successfully to prove the first superpolynomial lower bounds for {\it regular formulas}~\cite{KSS13}, and it seemed promising that such techniques could be used to prove lower bounds for more general classes such as {\it general} arithmetic formulas.  

In this paper, we successfully apply this framework to  prove the first superpolynomial lower bounds for homogeneous depth 4 circuits with bounded top fan-in. We prove our results via an {\it improved depth reduction}~\footnote{By depth reduction, we really mean a reduction to homogeneous depth 4 circuits with bounded bottom fan-in. So, it makes sense to talk of depth reduction for depth 4 circuits.}. We also show that if the bound on the top fan-in is relaxed (even by a small amount), then {\it efficient depth reduction is not possible}. In particular this suggests that the method of improved depth reduction + shifted partial derivatives seems to be not powerful enough to prove lower bounds for (even) homogeneous arithmetic formulas. This result strengthens the results in~\cite{KSS13, FLMS13}, and answers some of the main open questions posed in them. 

We now outline the major results and the sequence of events that build up to the results of this paper.  In the discussion in the rest of this section, we will refer to the class of circuits of depth 4 ($\spsp$ circuits) with bottom (product) fan-in bounded by $t$ as $\spsp^{[t]}$ circuits. 

% fan-in of the product gates in level $2$ and level $4$ bounded by $b$ and $a$ respectively by $\hspab$. Also, we will denote the class of depth 4 ($\spsp$ circuits) with fan-in of the bottom level product gates  bounded by $a$ by $\hsp$.

\vspace{2mm} 
\noindent
{\bf Depth Reduction: } In a surprising result in 2008,  Agrawal and Vinay~\cite{AV08}  showed that any homogeneous polynomial which can be computed by a polynomial sized circuit of {\it arbitrary depth} can also be computed by subexponential sized homogeneous {\it depth $4$} $\spsp$ circuit. In other words, in order to prove superpolynomial (or even exponential) lower bounds for general arithmetic circuits, it suffices to prove exponential ($\exp(\Omega(n))$) lower bounds for just depth 4 arithmetic circuits\footnote{This result came as a big surprise, and indeed nothing like this is true in the Boolean world.}! 
In a follow up paper Koiran~\cite{koiran} improved the parameters of this depth reduction theorem and showed that in order to prove superpolynomial lower bounds for general arithmetic circuits, it suffices to prove a lower bound of the form $\exp(\omega(\sqrt{n}\log^2n))$ for homogeneous $\spsp^{[\sqrt n]}$ circuits (for polynomials of degree $n$). He also showed that to prove superpolynomial arithmetic formula lower bounds, it suffices to prove a slightly weaker lower bound of the form $\exp(\omega(\sqrt{n}\log n))$ for homogeneous $\spsp^{[\sqrt n]}$ circuits. Tavenas~\cite{Tavenas13} further refined the parameters of Koiran's result and showed that for circuits lower bounds also, a lower bound of $\exp(\omega(\sqrt{n}\log n))$ would suffice. This sequence of works laid out an approach towards  separating $\VNP$ from $\VP$ by just proving strong enough lower bounds for homogeneous circuits of {\dep}. In a recent work along this line, Gupta, Kamath, Kayal and Saptharishi~\cite{GKKS13} prove that strong enough lower bounds for {\it depth $3$} circuits suffice to show superpolynomial lower bounds for circuits of {\it arbitrary} depth, although in this case, we lose the property of homogeneity that was true for the reduction to {\dep}. This loss in homogeneity seems quite severe, at least with respect to proving lower bounds, and we know only weak lower bounds for non-homogeneous depth 3 circuits~\cite{SW01}. (For the rest of the paper, this depth reduction to non-homogeneous depth 3 circuits will not be relevant.)
More precisely, the results of Tavenas~\cite{Tavenas13} and Koiran~\cite{koiran} state the following. 

\begin{thm}[\cite{koiran, Tavenas13}]\label{thm:tavenas} 
Every polynomial size circuit of degree $n$ in $N$ variables can be transformed into an equivalent homogeneous $\spsp^{[t]}$ circuit with top fan-in\footnote{For $\spsp^{[t]}$ circuits where $t = \sqrt n$, observe that an upper bound of $\exp(O(\sqrt{n}\log N))$ on the top fan-in the circuit implies the same upper bound on size, since each product gate at the second layer computes a polynomial with at most $\exp(O(\sqrt{n}\log N))$ monomials. However for other values of $t$, the top fan-in bound will be the more relevant parameter for depth reduction.} at most $\exp(O(\frac{n}{t}\log N))$.
\end{thm}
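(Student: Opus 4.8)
The plan is a two-phase argument: first reduce the given circuit to a structured form (by homogenizing it and invoking the decomposition of Valiant et al.~\cite{VSBR83}), and then repeatedly peel off factors of degree at most $t$ until the polynomial is manifestly computed by a $\spsp^{[t]}$ circuit, all the while keeping careful count of how many summands accumulate at the top. Concretely, I would first convert $C$ into a fan-in-$2$ circuit and homogenize it, costing only a $\poly(n)$ factor in size and equipping every gate with a well-defined degree; write $[v]$ for the homogeneous polynomial computed at gate $v$, so that $f=[r]$ for the output gate $r$. Observe that since the target circuit is homogeneous of degree $n$ with bottom products of degree $\le t$, every middle product gate automatically has fan-in $\le n$; so the entire content of the theorem is the stated bound on the top fan-in, i.e.\ on the number of terms in the final sum of products.

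The engine is the VSBR structural calculus. For gates $v\ge w$ with $\deg(w)\le\deg(v)$ one defines the \emph{transfer polynomial} $[v:w]$ by the usual rules: $[v:v]=1$; additivity at $+$-gates; and at a product gate $v=v_1v_2$ the rule $[v:w]=[v_1:w]\,[v_2]+[v_1]\,[v_2:w]$, which collapses to a single term when $\deg(w)$ is too large for $w$ to lie below both children. The key lemma, whose proof is essentially the VSBR argument, states that for any degree parameter $m<\deg(v)$,
\[
[v]\;=\;\sum_{w\in\mathcal{F}_m}[v:w]\cdot[w],
\]
where $\mathcal{F}_m$, the $m$-frontier, is the set of gates of degree in $(m,2m]$ all of whose children have degree $\le m$; here $\deg[v:w]=\deg(v)-\deg(w)<\deg(v)-m$, and since each $w\in\mathcal{F}_m$ is a product of two gates of degree $\le m$, the factor $[w]$ splits as a product of two factors of degree $\le m$. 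A companion chain-rule identity $[v:w]=\sum_{w'}[v:w']\,[w':w]$, summed over frontier gates $w'$ between $w$ and $v$, lets the decomposition be applied recursively to transfer polynomials too. Starting from $f=[r]$, applying this with $m$ chosen to roughly halve the largest degree present, and iterating on every factor still of degree $>t$, one reaches after $O(\log(n/t))$ rounds an identity $f=\sum_I\prod_j g_{Ij}$ with every $\deg g_{Ij}\le t$; expanding each $g_{Ij}$ (a degree-$\le t$ polynomial in $N$ variables, hence with at most $N^{O(t)}$ monomials) into its monomials yields a homogeneous $\spsp^{[t]}$ circuit. Since degree is conserved at each splitting step, every product $\prod_j g_{Ij}$ satisfies $\sum_j\deg g_{Ij}=n$ and thus has at most $n$ factors, as homogeneity demands.

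The crux, and the step I expect to be the real obstacle, is controlling the top fan-in. Each application of the structural lemma to one factor replaces it by $O(1)$ factors but branches into $\le s=N^{O(1)}$ summands, one per choice of frontier gate. Because degree is conserved and active factors have degree $>t$, at most $n/t$ factors are ever live at once, so a naive accounting over the $O(\log(n/t))$ rounds gives only $s^{O((n/t)\log(n/t))}=\exp\!\big(O(\tfrac nt\log\tfrac nt\,\log N)\big)$ summands --- this is essentially Koiran's bound, carrying a spurious logarithmic factor. Obtaining the sharp $\exp(O(\tfrac nt\log N))$ of Tavenas requires organizing the decomposition so that the $s$-fold branching is not re-paid at every level: one commits up front to an entire admissible sequence of $O(n/t)$ frontier gates --- equivalently, performs the multi-level frontier decomposition in one shot rather than level by level --- so that the total number of gate choices determining a term is $O(n/t)$ rather than $O(\tfrac nt\log\tfrac nt)$. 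Verifying that such a one-shot decomposition is correct and yields only products of degree-$\le t$ factors is where essentially all the care goes; the rest --- the homogenization bookkeeping, the monomial expansion, and checking that the $\poly(n)$ and $N^{O(t)}$ overheads are absorbed into $\exp(O(\tfrac nt\log N))$ --- is routine.
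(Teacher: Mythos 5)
The paper does not prove this theorem: it is a black-box citation of Koiran and Tavenas, so there is no in-paper argument to compare against. Evaluating the proposal on its own merits: the high-level plan (homogenize, invoke the VSBR transfer-polynomial calculus $[v:w]$ and the frontier decomposition $[v]=\sum_{w\in\mathcal F_m}[v:w][w]$, iterate until every factor has degree $\le t$, then expand each low-degree factor into its $\le N^{O(t)}$ monomials) is indeed the standard route, and you are right that the entire content of the statement is the top fan-in count. Your diagnosis of the halving recursion is also accurate: at level $i$ of the resulting ternary decomposition tree there can be $\min(3^i,n/t)$ active internal nodes (gates of degree $>t$ whose degrees sum to $\le n$), and summing over the $\log(n/t)$ levels really does give $\Theta\bigl(\tfrac nt\log\tfrac nt\bigr)$ frontier choices, i.e.\ Koiran's $\exp(O(\tfrac nt\log\tfrac nt\log N))$ rather than the claimed $\exp(O(\tfrac nt\log N))$.

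The genuine gap is exactly the step you flag and then wave at. Saying one should ``commit up front to an entire admissible sequence of $O(n/t)$ frontier gates'' names the target but supplies no argument: you do not define what makes a sequence of gates admissible, you do not prove there are only $s^{O(n/t)}$ such sequences, and you do not verify that the term associated to a fixed admissible sequence is a product of degree-$\le t$ factors (the chain-rule identity for $[v:w]$ does not obviously compose across $O(n/t)$ levels without reintroducing the per-level $s$ factor). This sharpening is the substantive content of Tavenas's contribution over Koiran's, and as written your proposal reproduces Koiran's bound and asserts, without proof, that the improvement can be made. Since you acknowledge this yourself, the proposal is best read as a correct sketch of the approach with the key quantitative step left open rather than as a proof of the stated theorem.
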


%n a follow up paper Koiran~\cite{koiran} improved the parameters of this depth reduction theorem and showed that in order to prove superpolynomial lower bounds for general arithmetic circuits, it suffices to show a lower bound of $\exp(\sqrt{n}\log^2n)$  on the top fan-in of a homogeneous $\spsp^{[\sqrt{n}]}$ circuit which computes the Permanent of an $n \times n$ matrix would imply superpolynomial lower bounds on the size of circuits of {\it arbitrary depth} for the Permanent polynomial. In the same paper Koiran also showed that in order to show formula lower bounds for the Permanent, a lower bound of $\exp(\sqrt{n}\log n)$ on the top fan-in of homogeneous $\spsp^{[\sqrt{n}]}$ circuits would suffice. Hence, we need a weaker depth four lower bound for homogeneous circuits of {\it depth $4$} in order to get superpolynomial formula lower bounds. In a recent paper, Tavenas~\cite{Tavenas13} further refined the parameters of Koiran's result and showed that for circuits also, a lower bound of $\exp(\sqrt{n}\log n)$ would suffice. This sequence of works laid out an approach towards  separating $\VNP$ from $\VP$ by just proving strong enough lower bounds for homogeneous circuits of {\dep}. In a recent work along this line, Gupta, Kamath, Kayal and Saptharishi~\cite{GKKS13} prove that strong enough lower bounds for {\it depth $3$} circuits suffice to show superpolynomial lower bounds against circuits of {\it arbitrary} depth, although in this case, we loose the property of homogeneity that was true for the reduction to {\dep}.

\vspace{2mm}
\noindent
{\bf Depth $4$ Lower Bounds and $\VNP$ vs $\VP$: } In light of the results of Agrawal-Vinay~\cite{AV08}, Koiran~\cite{koiran} and Tavenas~\cite{Tavenas13}, proving lower bounds for homogeneous {\it depth $4$} circuits seems like an extremely promising direction to pursue in order to separate $\VNP$ from $\VP$. In a breakthrough result in this direction, Gupta, Kamath, Kayal and Saptharishi~\cite{GKKS12} proved that any homogeneous $\spsp^{[\sqrt n]}$ circuit computing the permanent must have size (and top fan-in) $\exp(\sqrt{n})$. This was strengthened in a more recent work of Kayal, Saha and Saptharishi~\cite{KSS13}, where it was shown that there is an explicit family of polynomials in $\VNP$ such that any homogeneous $\spsp^{[\sqrt n]}$ circuit computing it must have size (and top fan-in) at least $\exp(\Omega(\sqrt{n}\log n))$. More precisely,

%\begin{thm}[\cite{GKKS12, KSS13}]~\label{thm:gkks} 
%For every $n$, there is an explicit polynomial in$\VNP$in $n^2+1$ variables and with degree $2n$ such that any $\sp^{[O(\frac{n}{t})]}\sp^{[t]}$ circuit computing it must have top fan-in at least $\exp(\Omega(n/t)\log n)$ for large enough $n$. 
%\end{thm}
\begin{thm}[\cite{GKKS12, KSS13}]\label{thm:gkks} 
For every $n$, there is an explicit family of polynomials in $\VNP$ in $N = \theta(n^2)$ variables and with degree $\theta(n)$ such that any homogeneous $\spsp^{[t]}$ circuit computing it must have top fan-in at least $\exp(\Omega(\frac{n}{t}\log N))$.
\end{thm}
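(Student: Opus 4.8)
The plan is to establish this lower bound via the \emph{shifted partial derivatives} complexity measure, following the approach of \cite{Kayal12, GKKS12} as sharpened in \cite{KSS13}. For a polynomial $P\in\F[x_1,\dots,x_N]$ and integers $k,\ell\ge 0$, let $\partial^{=k}P$ be the set of all order-$k$ partial derivatives of $P$, let $x^{=\ell}$ be the set of all monomials of degree $\ell$, and define $\Gamma_{k,\ell}(P)=\dim_{\F}\,\mathrm{span}\bigl(x^{=\ell}\cdot\partial^{=k}P\bigr)$; to extract the extra $\log N$ factor one works with a polynomial whose high-order derivatives are multilinear and, in the regime of small $t$, additionally projects onto a monomial ``window'' (monomials in which each variable has bounded degree) before taking the dimension. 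Since $\Gamma_{k,\ell}$ is sub-additive over sums and invariant under scaling, it suffices to (i) upper bound it on a single $\spsp^{[t]}$ product term, (ii) lower bound it on an explicit polynomial, and (iii) choose $k$, $\ell$ and the window to maximize the gap.

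For step (i), let $C=\sum_{i=1}^{s}\prod_{j}Q_{i,j}$ be homogeneous of degree $D=\Theta(n)$ with each $\deg Q_{i,j}\le t$. The product rule shows that every order-$k$ derivative of a single term $\prod_j Q_{i,j}$ lies in the span of the polynomials $g\cdot\prod_{j\in S}Q_{i,j}$, where $g$ has degree at most $kt$ and $S$ is obtained by deleting $k$ of the at most $D$ nontrivial factors; multiplying by a degree-$\ell$ monomial and taking dimensions gives $\Gamma_{k,\ell}\bigl(\prod_j Q_{i,j}\bigr)\le\binom{D}{k}\cdot W$, where $W$ is the number of (window-)monomials of degree at most $\ell+kt$. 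Summing over the $s$ terms yields $\Gamma_{k,\ell}(C)\le s\cdot\binom{D}{k}\cdot W$.

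For step (ii), I would take $P$ to be the Nisan--Wigderson design polynomial: with variables $x_{i,a}$ for $i\in[D]$ and $a\in\F_q$ (so $N=\Theta(Dq)$, and one arranges $N=\theta(n^2)$, $D=\theta(n)$), set $P=\sum_{h}\prod_{i=1}^{D}x_{i,h(i)}$ with $h$ ranging over the univariate polynomials over $\F_q$ of degree less than $D/2$. This family is explicit and lies in $\VNP$, and is homogeneous of degree $D$. The design property --- any two distinct $h,h'$ agree on fewer than $D/2$ points --- makes the order-$k$ derivatives $\partial_M P$ (for admissible degree-$k$ multilinear monomials $M$) behave like polynomials supported on nearly disjoint variable sets, so that the leading monomials (under a suitable monomial order) of the shifted derivatives $m\cdot\partial_M P$ are essentially distinct; a triangular/leading-monomial argument then gives $\Gamma_{k,\ell}(P)\ge(1-o(1))\cdot(\#\,\mathrm{admissible}\ M)\cdot W'$, with $\#\,\mathrm{admissible}\ M$ close to $\binom{D}{k}$ and $W'$ the corresponding count of (window-)monomials of degree about $\ell+D$.

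Combining, $s\ge\Gamma_{k,\ell}(P)\big/\bigl(\binom{D}{k}\,W\bigr)$, and the standard optimization --- taking $k$ of order $n/t$, $\ell$ chosen to balance $W$ against $W'$, and tuning the window width --- makes the right-hand side $\exp\bigl(\Omega(\tfrac{n}{t}\log N)\bigr)$. I expect the crux to be step (ii): one must pick the window so that it barely decreases $\Gamma_{k,\ell}(P)$ while genuinely shrinking $W$ in the circuit bound, and then run the combinatorial rank argument exploiting the design carefully enough that the loss is only $1-o(1)$. Keeping that loss under control is exactly what promotes the exponent from $\Omega(n/t)$ to $\Omega(\tfrac{n}{t}\log N)$, and is the technically delicate point in which \cite{KSS13} improves on \cite{GKKS12}.
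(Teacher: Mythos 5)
Your proposal follows the same route as the cited proof in \cite{GKKS12, KSS13}: bound the dimension of shifted partial derivatives above for a single $\spsp^{[t]}$ term using the product rule (giving $\binom{D}{k}$ surviving sub-products, each of degree at most $kt + \ell$), use sub-additivity to get a bound proportional to the top fan-in, lower bound the same measure on the Nisan--Wigderson design polynomial via the ``agreement on at most $D/2$ points'' property and a leading-monomial/triangularity argument, and optimize with $k = \Theta(n/t)$ and a large shift $\ell$. This is exactly the structure the paper imports as Theorem~\ref{thm:gkss-perm} and Theorem~\ref{thm: gkss-calc}.

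One small correction of emphasis: the ``window'' or projection onto monomials of bounded individual degree is not what produces the $\log N$ in the exponent here, and is not used in the proofs of \cite{GKKS12, KSS13} that this theorem is drawn from. That technique (\emph{projected} shifted partial derivatives) belongs to later works on homogeneous depth-4 lower bounds \emph{without} bottom fan-in restriction. For the present statement, the $\log N$ factor arises purely from the combination of (a) the NW polynomial's derivative space being within a $\poly(n)$ factor of the maximum possible $\binom{N+\ell+n-k}{N}$, and (b) the choice $\ell = \Theta(n^2 t/\log n)$, which makes the ratio of the binomials $\binom{N+\ell+n-k}{N}\big/\binom{N+\ell+k(t-1)}{N}$ of size $\exp\bigl(\Theta(\tfrac{n}{t}\log n)\bigr)$ after dividing out the $\binom{O(n/t)}{k} = 2^{O(n/t)}$ term. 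Dropping the window step makes your sketch a faithful reconstruction of the cited argument.
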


% This lower bound was strengthened in a subsequent work of Kayal, Saha and Saptharishi~\cite{KSS13}, where they construct a polynomial in $\VNP$ for which any homogeneous $\spsp^{[\sqrt{n}]}$ circuit computing it has top fan-in at least $exp(\sqrt{n}\log n)$. 
The depth reduction results combined with the lower bounds for homogeneous $\spsp^{[t]}$ circuits is indeed a remarkable collection of results. As it stands, in order to separate $\VP$ from $\VNP$, any small asymptotic improvement in the exponent on either the lower bound front or on the depth reduction front would be sufficient! In fact for any class of circuits $\cal C$ for which we can improve the depth reduction parameters of Theorem~\ref{thm:tavenas}, we would get superpolynomial lower bounds for that class using Theorem~\ref{thm:gkks}.

%At this point of time, it seemed conceivable that the technique of {\it depth reduction} to homogeneous $\spsp^{[t]}$ circuits and then bounding the shifted partial derivative complexity of the resulting $\spsp^{[t]}$ circuit might just suffice in proving superpolynomial lower bounds for general arithmetic circuits if one could show any improved depth reduction results for general circuits. 

%introduced by Kayal in ~\cite{Kayal12} and used as the primary tool in both~\cite{GKKS12} and~\cite{KSS13} could be used to separate $\VP$ from $\VNP$. 
Unfortunately, it seems that in general, we cannot hope for a better depth reduction. In a recent work, Fournier, Limaye, Malod and Srinivasan~\cite{FLMS13} gave an example of an explicit polynomial in $\VP$ (of degree $n$ and in $N= n^{O(1)}$ variables) such that any homogeneous $\spsp^{[t]}$ circuit computing it must have top fan-in at least $\exp(\Omega(\frac{n}{t}\log N))$. This immediately implies that the depth reduction parameters in the result of Tavenas~\cite{Tavenas13} are {\it tight} for circuits. This observation, along with the fact that the hard polynomial used by Kayal et al~\cite{KSS13} has a shifted partial derivative span only a polynomial factor away from the maximum possible value suggests that the technique of improving depth reduction and then using shifted partial derivatives may not be strong enough to separate $\VNP$ from $\VP$\footnote{The reason this statement is not completely formal is that we still do not know know if the upper bounds on the shifted partial derivative measure for $\spsp^{[t]}$ circuits is tight for all choices of derivatives and shifts, though the results of~\cite{FLMS13} and this paper show that they are indeed tight for many of the choices.}. In a recent result, Chillara and Mukhopadhyay~\cite{CM13} gave a clean unified way of way of lower bounding the shifted partial derivative complexities of the polynomials considered by~\cite{KSS13, FLMS13}.

 %In fact, the parameters in Tavenas's reduction are tight for all but a constant fraction of the choice of the bottom fan-in. 

%show that any homogeneous $\hsp$ circuit computing an entry in the product of $d$  $n \times n$ generic matrices, which is a polynomial in $\VP$, requires top fan-in $\exp(\frac{d}{t}\log n)$ for $1 \leq t \leq \frac{d}{320}$. This result immediately implies that the depth reduction parameters in the result of Tavenas~\cite{Tavenas13} are tight for circuits for almost all parameters of the bottom fan-in. In particular, for the special case of the bottom fan-in being $\sqrt{d}$, this shows that Tavenas's reduction cannot be improved in the exponent. This observation, along with the fact that the hard polynomial used by Kayal et al~\cite{KSS13} has a shifted partial derivative span only a polynomial factor smaller than the largest possible value for a large range of values of the order of the derivative and shift, basically implies that the technique of depth reduciton and shifted partial derivatives is not strong enough to separate $\VNP$ from $\VP$. %In fact, the parameters in Tavenas's reduction are tight for all but a constant fraction of the choice of the bottom fan-in. 
%CHECK ABOVE STATEMENT

 \vspace{2mm}
\noindent
 {\bf Formula Lower Bounds:} Even though improved depth reduction does not seem to be powerful enough to separate $\VNP$ from $\VP$, it is conceivable that it could lead to superpolynomial lower bounds for other interesting classes, for instance homogeneous arithmetic formulas, or even general arithmetic formulas. This hope was further strengthened when Kayal et al~\cite{KSS13} used these precise ideas to prove superpolynomial lower bounds for a restricted class of formulas which they called {\it regular} formulas. (Regular formulas are formulas which have alternating sum and product layers. Moreover, for every fixed layer, the fan-ins of the gates in that layer are the same and the formal degree of the formula is at most a constant times the formal degree of the polynomial being computed.) Kayal et al proved their result by showing that one can reduce any polynomial size regular formula to a $\spsp^{[t]}$ circuit (for a carefully chosen choice of $t$) of size asymptotically better in the exponent than the $\exp(\frac{n}{t}\log N)$ bound (which as we just discussed is known to be tight for circuits). This improvement in depth reduction immediately leads to superpolynomial lower bounds for regular formulas by using Theorem~\ref{thm:gkks}. 

Removing the restriction on regularity and proving superpolynomial lower bounds for general formulas or even general homogeneous formulas would be a huge step forward - it would be by far the strongest and most natural class of arithmetic circuits for which we would be able to prove lower bounds, and it would represent a real breakthrough. 
The authors of the two papers~\cite{KSS13, FLMS13} left as a tantalizing open question whether formulas (or even homogeneous formulas) can have better depth reduction than circuits (such as is true for regular formulas). If true, this would imply superpolynomial lower bounds for (homogeneous) formulas. Indeed it seemed quite likely to be true since at the face of it, regularity of formulas of formulas did not seem like such a severe restriction at all (and indeed this was argued to be the case). Perhaps it could be also be true that every formula could be reduced to a regular formula with only a polynomial blow up in size. If so, the improved depth reduction for formulas (and hence the lower bounds) would follow from the improved depth reduction of regular formulas. 

Thus to summarize, the main challenge that remained was to understand the limits of the techniques of depth reduction and shifted partial derivatives. In particular, are there any other interesting classes of circuits for which improved depth reduction is possible? Is improved depth reduction possible for arithmetic formulas?

%As the authors mention in~\cite{KSS13}, it seems plausible that the technique of shifted partials and depth reduction can be used to show lower bounds for classes of formulas larger than regular formulas. In particular, they left as an intriguing open question whether similar techniques could be used to show super polynomial lower bounds for general homogeneous formulas. 

%%%%%%%%%%%%%%%%%%%%%%%%%%%%%%%%%%%%%%%%%%%%%%%%%%%%%%%%
%%%%%%%%%%%%%%%%%%%%%%%%%%%%%%%%%%%%%%%%%%%%%%%%%%%%%%%%
%%%%%%%%%%%%%%%%%%%%%%%%%%%%%%%%%%%%%%%%%%%%%%%%%%%%%%%%

\section{Our results}

In this paper we study the power and limitations of depth reduction for arithmetic formulas. We do this via  studying depth reduction for depth 4 arithmetic circuits\footnote{Since depth 4 arithmetic circuits are also equivalent to depth 4 arithmetic formulas upto a polynomial blow up in size, we will use the term circuits and formulas interchangeably when referring to depth 4 circuits.}. Let homogeneous $\spsp(r)$ circuits be the class of homogeneous depth 4 circuits with {\it top fan-in} bounded by $r$, and with {\it no restriction on the bottom fan-in}. This is a very natural class of circuits and is quite different in nature from $\spsp^{[t]}$ circuits. 

Our results are divided into two parts. In the first part we show the first superpolynomial lower bounds for homogeneous $\spsp(r)$ circuits when $r = o(\log n)$. The core of our result is an improved depth reduction result for these circuits. (As we pointed out, when we refer to `depth reduction', we really mean a reduction to homogeneous $\spsp^{[t]}$ circuits. Thus it makes sense to talk about a depth reduction for $\spsp$ circuits as well.) $\spsp(r)$ circuits have received significant attention for the problems of polynomial identity testing and polynomial reconstruction~\cite{KMSV10, SarafV11, GuptaKL12}, however prior to this work there were no nontrivial lower bounds for this class of circuits for any value of $r \geq 2$. 

In the second part we show that efficient depth reduction {\it is not possible} for homogeneous arithmetic formulas. We show this result by studying the very simple class of formulas given by homogeneous $\spsp(\log n)$ circuits.  We show that for this class of circuits, improved depth reduction is not possible. 
This shows that improved depth reduction is unfortunately not powerful enough to prove lower bounds for homogeneous $\spsp(\log n)$ circuits, and in particular not strong enough to prove lower bounds for homogeneous arithmetic formulas, answering the main open questions of~\cite{KSS13, FLMS13}. 

Informally, our main results are the following:
\\

\noindent{\bf Main Theorem 1 (Informal):}
{\it There is an explicit family of polynomials in $\VNP$ of degree $n$ in $N= n^{O(1)}$ variables such that for $r = o(\log n)$, any polynomial size homogeneous $\spsp(r)$ circuit computing it must have superpolynomial size.}
\\

At the core of the result is the following ``depth reduction" result:
\\

\noindent{\bf Improved Depth Reduction (Slightly wishful):}\footnote{Indeed the above statement is not quite true, and our reduction turns out to be much more subtle.  We do not depth reduce to a $\spsp^{[t]}$ circuit, but one in which the sum of degrees of any $\epsilon n/t$ product gates at the bottom is at most $\epsilon n$. This is a more refined notion and a slightly more general class of circuits than $\spsp^{[t]}$ circuits. We observe that the shifted partial derivative technique does not distinguish between these two kinds of circuits, and thus we are still able to obtain our lower bounds. Thus in spirit we still get depth reduction. In fact everywhere in this paper we could replace $\spsp^{[t]}$ circuits with this slightly more general class of circuits, and none of the results would be affected.}
{\it For $r = o(\log n)$, any polynomial size homogeneous $\spsp(r)$ circuit computing a polynomial of degree $n$ and in $N$ variables is equivalent to a homogeneous $\spsp^{[t]}$ circuit of size $\exp\left(o(\frac{n}{t}\log N)\right)$ for some choice of $t$ such that $\log^2 n \leq t \leq \epsilon n$. }
\\

%\begin{remark}
%Indeed the above statement is not quite true, and our reduction turns out to be much more subtle.  We do not depth reduce to a $\spsp^{[t]}$ circuit, but one in which the sum of degrees of any $\epsilon n/t$ gates is at most blah ... this is a more refined notion and a slightly more general class of circuit, and nevertheless the shifted partial derivative technique does not distinguish between these two kinds of circuits. Thus we are still able to obtain our lower bounds. 
%\end{remark}

Observe that the parameters of the depth reduction we obtain above improve upon the parameters of depth reduction given by~\cite{koiran, Tavenas13}.

We also show that when $r = \Omega(\log(n))$, depth reduction as above is no longer true. 
\\

\noindent{\bf Main Theorem 2 (Informal)}
{\it For $r = \Omega(\log n)$, there exists an explicit family of polynomials $\{{\cal Q}_n\}_n$ computed by a $\poly(n)$ size homogeneous $\spsp(r)$ circuit (and hence homogeneous formula) of degree $n$ and in $N$ variables, such that for every $t$ such that $\omega(\log n) \leq t \leq \epsilon n$, any homogeneous $\spsp^{[t]}$ circuit computing ${\cal Q}_{n}$ must have top fan-in at least $\exp\left(\Omega(\frac{n}{t}\log N)\right)$.
}
\\

An immediate consequence of this result is that the depth reduction procedure of Tavenas~\cite{Tavenas13} is tight for homogeneous arithmetic formulas (strengthening the results of~\cite{FLMS13}). 

At the core of our result is a {\it hierarchy} theorem for homogeneous $\spsp^{[t]}$ circuits which shows that homogeneous $\spsp^{[t]}$ circuits are a much richer class than homogeneous $\spsp^{[t/20]}$ circuits. We state this result more formally in Theorem~\ref{thm:hierarchy}. 

It was shown in~\cite{KSS13} that any ABP (even non homogeneous) can be converted to a regular formula with a quasipolynomial blow up in size. If one could improve this transformation even slightly for formulas or even for homogeneous formulas, this would imply superpolynomial lower bounds for formulas/homogeneous formulas. Another consequence of our results is that such an improvement is not possible. We build upon the results of~\cite{KSS13} and show that the conversion of general formulas to regular formulas must incur a quasipolynomial blow up in size. 
\\

\noindent{\bf Theorem (Conversion to Regular Formulas is Tight)}
{\it For $r = \Omega(\log n)$, there exists an explicit family of polynomials $\{{\cal Q}_n\}_n$ computed by a $\poly(n)$ size homogeneous $\spsp(r)$ circuit (and hence also homogeneous formula) of degree $n$ and in $N= n^{O(1)}$ variables, such that any regular formula computing ${\cal Q}_n$ must have size $N^{\Omega(\log n)}$.
}
\\

%At the core of both our results is the following depth reduction theorem that we state informally below:

%\begin{thm}[Informal]
%For $r = o(\log n)$, any polynomial size homogeneous $\spsp(r)$ circuit is equivalent to a homogeneous $\spsp^{[t]}$ circuit of size $n^{o(n/t)}$ for some choice of $t$ such that $\log^n \leq t \leq \epsilon n$. 
%For $r = Omega(\log n)$, 
%there exists an explicit family of polynomials $\{{\cal Q}_n\}_n$ such that ${\cal Q}_n$ is of degree $\theta(n)$, in $n^{O(1)}$ variables, and computed by a $\poly(n)$ size homogeneous $\spsp(r)$ circuit, such that for every $t$ such that $\log^2 n \leq t \leq \epsilon n $, any $\spsp^{[t]}$ formula computing ${\cal Q}_{n}$ must have top fan-in at least $n^{\Omega(\frac{n}{t})}$.
%\end{thm} 

In the sections below we formally state our results and  elaborate on them in greater detail, as well as highlight some of the interesting corollaries of our proof techniques.

\subsection{Lower bounds for $\spsp(r)$ circuits, $r = o(\log n)$}

In the first part of the paper, we explore the limits of computation of depth 4 homogeneous circuits when the restriction for the bottom fan-in is removed. 
For the general model of (even homogeneous) $\spsp$ circuits, only extremely weak lower bounds seem to be known. Even PIT for $\spsp$ circuits is known only when the top fan-in is constant and the circuit is multilinear (in the multilinear case, the degree of the polynomials computed must anyway be bounded by the number of variables, and hence, multilinearity is a much bigger restriction than homogeneity\footnote{In all the results of this paper, the restriction of homogeneity can be replaced by the restriction that all gates in the circuit compute polynomials of degree at most $n$.}). The problem of showing lower bounds for depth 4 circuits with bounded top fan-in is hence a problem that is simpler than derandomizing PIT for the same model (at least in the black box model), and it seems to be the first crucial step in that direction. Moreover, even when the top fan-in is 2, prior to this work there were no lower bounds known. Unlike the class of depth $3$ circuits with bounded top fan-in which cannot even compute all polynomials irrespective of the size of the circuit, the class of $\spsp(r)$ circuits is complete (even for $r=1$). For more discussion on the completeness of this class, see Appendix~\ref{app:completeness}. 

We consider homogeneous $\Sigma\Pi\Sigma\Pi(r)$ circuits, which are depth 4 homogeneous circuits whose {\it top fan-in} is bounded by $r$. When $r$ is a constant we prove exponential lower bounds\footnote{In the rest of the paper, by exponential lower bound we will mean a lower bound of the form $2^{n^\epsilon}$ for some constant $\epsilon$.} for the class of  $\Sigma\Pi\Sigma\Pi(r)$ circuits, and for any $r = o(\log n)$ we show superpolynomial lower bounds for $\spsp(r)$ circuits\footnote {It is important to observe that the reduction of a polynomial sized homogeneous $\spsp$ circuit with arbitrary bottom fan-in to a homogeneous $\spsp$ circuit with bounded bottom fan-in as given by the results of~\cite{AV08,koiran} can lead to circuits of size $\exp(\Omega(n/t)\log n)$ and so Theorem~\ref{thm:gkks} does not imply any nontrivial lower bounds for it.}.
In particular, we prove the following theorem:

\begin{thm}{\label{thm:lb1}}
There exists an explicit family of polynomials in $\VNP$, $\{NW_n\}_n$, such that for each $n$, $NW_n$ has degree $\theta(n)$, and number of variables $\theta(n^2)$ and such that the following holds: Let $C$ be a homogeneous $\Sigma\Pi\Sigma\Pi(r)$ circuit that computes $NW_n$. Let $s$ be the size of $C$. Then $$s \geq \exp\left(n^{\Omega(1/r)}\log n\right).$$
\end{thm}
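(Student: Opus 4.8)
The plan is to prove the lower bound in two stages, exactly mirroring the two-component framework described in the introduction: first an \emph{improved depth reduction} from homogeneous $\spsp(r)$ circuits to homogeneous $\spsp^{[t]}$ circuits (or the slightly more general class where the sum of degrees of any $\epsilon n/t$ bottom product gates is at most $\epsilon n$), and then an appeal to the shifted partial derivative lower bound of Theorem~\ref{thm:gkks} applied to the Nisan--Wigderson polynomial $NW_n$. I would set $N = \theta(n^2)$ and choose the degree of $NW_n$ to be $\theta(n)$, so that the target lower bound $\exp(n^{\Omega(1/r)} \log n)$ is what comes out of plugging $t = n^{1 - \Omega(1/r)}$ (roughly) into the $\exp(\Omega(\tfrac{n}{t}\log N))$ bound of Theorem~\ref{thm:gkks}.

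The heart of the argument is the depth reduction. Suppose $C = \sum_{i=1}^r Q_i$ is a homogeneous $\spsp(r)$ circuit of size $s = \poly(n)$ computing $NW_n$, where each $Q_i = \prod_j \ell_{ij}$ is a product of homogeneous polynomials $\ell_{ij}$ (the $\Sigma\Pi$ computed at the third and fourth layers). Because the circuit is homogeneous, for each $i$ the degrees of the factors $\ell_{ij}$ sum to $n$. The key combinatorial observation is that with only $r$ summands, one can afford to ``guess'' structural information about each $Q_i$: for each $i$, I would split the factors of $Q_i$ into those of \emph{low degree} (at most some threshold $d$) and those of \emph{high degree} (more than $d$); there can be at most $n/d$ high-degree factors per $Q_i$. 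The low-degree part of each $Q_i$ is already a $\spsp^{[d]}$ circuit of the right shape, so the work is in handling the high-degree factors. Here I would recurse: each high-degree factor $\ell_{ij}$ is itself a homogeneous polynomial of degree $> d$ computed by a $\Sigma\Pi$ (a single layer of sums of products of variables) — actually just a sum of monomials — and I would apply the Tavenas-style depth reduction of Theorem~\ref{thm:tavenas} (or rather its proof, which reduces any small \emph{circuit} to $\spsp^{[\sqrt{\deg}]}$) to it. Iterating this low/high split $\Theta(r)$ times, with the degree threshold decreasing geometrically at each level — something like $d_k = n^{1 - k/r}$ — each level contributes only a bounded number of product gates, and after $r$ levels everything has been pushed down to bottom fan-in $t \approx n^{1/r}$ (or $\log^2 n$, whichever is larger), with a total top fan-in of $\exp(o(\tfrac{n}{t}\log N))$ because each of the $O(r) = o(\log n)$ levels multiplies the count by only a $\poly(n)$ or $\exp(o(\tfrac{n}{t}\log N))$ factor, and $o(\log n)$ such factors still multiply out to $\exp(o(\tfrac{n}{t}\log N))$.

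I expect the main obstacle to be making the recursion bookkeeping work while keeping the final circuit \emph{homogeneous} and controlling the top fan-in blowup simultaneously. Naively applying Theorem~\ref{thm:tavenas} to each high-degree factor and then multiplying back in the remaining factors of $Q_i$ risks either breaking homogeneity or incurring a $\poly(n)^{O(r)} = \poly(n)^{o(\log n)}$ blowup that is superpolynomial but needs to be shown to stay below $\exp(o(\tfrac{n}{t}\log N))$; the accounting has to exploit that at the relevant scale $t$, $\exp(\tfrac{n}{t}\log N)$ is already superpolynomial so an $n^{o(\log n)}$ slack is affordable. The subtlety flagged in the footnote — that we land not exactly in $\spsp^{[t]}$ but in the class where any $\epsilon n/t$ bottom product gates have degrees summing to $\le \epsilon n$ — arises precisely because after distributing products of several medium-degree pieces, individual bottom gates may have degree larger than $t$ even though no small collection of them is too large; I would carry this refined invariant through the recursion from the start. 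Once the depth-reduced circuit is in hand, the lower bound follows by quoting Theorem~\ref{thm:gkks}: since the shifted-partial-derivative measure is insensitive to the distinction between $\spsp^{[t]}$ and the refined class, a top fan-in of $\exp(\Omega(\tfrac{n}{t}\log N))$ is forced, contradicting the $\exp(o(\tfrac{n}{t}\log N))$ bound unless $s$ was superpolynomial to begin with — and quantitatively, with $t \approx n^{1/r}$, this yields $s \ge \exp(n^{\Omega(1/r)}\log n)$.
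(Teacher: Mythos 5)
Your high-level framework is right (improved depth reduction plus shifted partial derivatives, using the single most-bad threshold bounded away by the $o(\log n)$ top fan-in), and you correctly anticipate the $s^{n/t}$-blowup obstacle and the need for the relaxed ``average bottom fan-in'' class. But the mechanism you propose to overcome the blowup --- a recursion that applies Tavenas-style depth reduction to each high-degree $\Sigma\Pi$ factor with geometrically shrinking thresholds $d_k = n^{1-k/r}$ --- is not what makes the argument work, and I do not see how it can be made to work. Each high-degree factor $\ell_{ij}$ is already a depth-2 sum of at most $s$ monomials; there is no further depth reduction to apply, and when you eventually have to multiply the high-degree factors back together you are again distributing a product of $\Sigma\Pi$'s, paying $s^{(\text{number of high-degree factors})}$ in top fan-in. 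Iterating the split in ``slabs'' does not help: for a threshold $d_k = n^{1-k/r}$ there can be as many as $n^{k/r}$ factors of degree above $d_k$, so the slab-by-slab expansion compounds rather than controlling the blowup, and there is no invariant like ``each level contributes a bounded number of product gates'' that holds for an adversarial degree sequence. (There is also a parameter inconsistency in the sketch: you land at bottom fan-in $t \approx n^{1/r}$ but aim for a bound of the form $\exp(n^{\Omega(1/r)}\log n)$, which corresponds to $n/t \approx n^{\Omega(1/r)}$, i.e.\ $t \approx n^{1-\Omega(1/r)}$, not $n^{1/r}$.)

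The idea your proposal is missing is a \emph{threshold-hunting} argument rather than a recursion. Sort the factors of a product gate by degree and, for a small grid of candidate derivative orders $k = n^{i/m}$ with $i \in [m]$ and $m = \Theta(r/\epsilon)$, look at the mass $D_i$ of degree concentrated between the $n^{(i-1)/m}$-th and $n^{i/m}$-th factors; since $\sum_i D_i = n$, at most $1/\epsilon$ indices $i$ can have $D_i \geq \epsilon n$. For each ``good'' $i$, the top $n^{(i-1)/m}$ factors expand into at most $s^{n^{(i-1)/m}} = s^{k\cdot n^{-1/m}}$ monomials and the remaining factors (after a simple regrouping of the very-low-degree ones) satisfy the refined ``any $k$ of them have total degree $\le O(\epsilon n)$'' condition needed for the shifted-partial bound of Lemma~\ref{lem:lowdeg}. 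Crucially, a union bound over the $r$ gates then yields a single $k$ that is good for every gate simultaneously, precisely because $m > r/\epsilon$ and each gate has at most $1/\epsilon$ bad indices --- this is where $r = o(\log n)$ enters, since $m = \Theta(r/\epsilon) = o(\log n)$ is needed to keep $n^{1/m}$ superpolylogarithmic. One then compares the resulting upper bound $r\cdot s^{k n^{-1/m}} \binom{k/\epsilon + k - 1}{k}\binom{N + 4\epsilon n - k + \ell}{N}$ against the lower bound on $\dim(\langle\partial^{=k} NW_{t,n}\rangle_{\leq\ell})$ from~\cite{KSS13} at $t = n/(2k)$; the $k$-dependence cancels and $\log s \geq \Omega(n^{1/m}\log n) = \exp(n^{\Omega(1/r)}\log n)$ falls out. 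So the correct proof is a one-shot argument with a degree-sequence-adapted choice of $k$ plus a union bound over gates, not an iterated depth reduction.
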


Prior to this result, we are not aware of any such lower bounds for depth 4 circuits even when the top fan-in $r$ equals $2$. 

\paragraph{Lower bounds for homogeneous $\Sigma\Pi\Sigma\Pi^\ast$ circuits: }

Another class of circuits we are able to prove a lower bound for is the class of depth 4 circuits where each product at the second layer (from the top)  has the same degree sequence of incoming polynomials, and there is no restriction on the top fan-in. 

For any degree sequence $\mathcal D = D_1, D_2, \ldots, D_k$ of non-negative integers such that $\sum D_i = n$, we study the class of homogeneous $\Sigma\Pi^{\mathcal D}\Sigma\Pi$ circuits, which are homogeneous circuits where each $\Pi$ gate at the second layer is restricted to having its inputs be polynomials whose sequence of degrees is precisely $\mathcal D$. %(We give a more formal definition in Section~\ref{sec:notnprelim}.)  
We show that for {\it every} degree sequence $\mathcal D$, any $\Sigma\Pi^{\mathcal D}\Sigma\Pi$ circuit computing $NW_n$ (an explicit family of polynomials in $\VNP$) must have size at least $\exp\left(n^{\epsilon}\right)$, for some fixed absolute constant $\epsilon$ independent of $\mathcal D$. 
In particular, let the class of $\spsp^\ast$ circuits be the union of the classes of $\Sigma\Pi^{\mathcal D}\Sigma\Pi$ for all $\mathcal D$. Then our lower bounds hold for homogeneous $\spsp^\ast$ circuits as well. 

\begin{thm}{\label{thm:lb2}}
There exists an explicit family of polynomials in $\VNP$, $\{NW_n\}_n$, such that for each $n$, $NW_n$ has degree $\theta(n)$, and number of variables $\theta(n^2)$ and such that the following holds: Let $C$ be a homogeneous $\Sigma\Pi\Sigma\Pi^\ast$ circuit that computes $NW_n$.  Let $s$ be the size of $C$. 
Then $$s \geq \exp\left(n^{\epsilon}\right), $$ for some fixed absolute constant $\epsilon > 0$. 
\end{thm}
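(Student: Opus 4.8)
The plan is to run the \emph{shifted partial derivative} argument that underlies Theorem~\ref{thm:gkks}, but with the circuit‑side estimate made uniform over every degree sequence. For parameters $m,\ell$ put $\Gamma_{m,\ell}(f):=\dim\big\langle\,\mathbf{x}^{\le\ell}\cdot\partial^{=m}f\,\big\rangle$, the dimension of the span of all degree‑$\le\ell$ shifts of all order‑$m$ partial derivatives of $f$. It suffices to exhibit $m,\ell$ and an explicit family $\{NW_n\}\in\VNP$ such that (i) $\Gamma_{m,\ell}(NW_n)$ is within a factor $\exp(o(n^{\epsilon}\log n))$ of the generic maximum $\binom{N+\ell+n-m}{N}$ for degree‑$n$ polynomials, while (ii) every homogeneous $\Sigma\Pi^{\mathcal D}\Sigma\Pi$ circuit of size $\le\exp(n^{\epsilon})$ computing a degree‑$n$ polynomial has $\Gamma_{m,\ell}$ strictly smaller than this, \emph{whatever $\mathcal D$ is}. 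For (i) I would take $NW_n$ to be the Nisan--Wigderson design polynomial of degree $\theta(n)$ over an alphabet of size $q=\theta(n)$ with bottom‑design degree $e=\theta(n^{\epsilon'})$ for a constant $\epsilon'>\epsilon$ fixed below; then $NW_n$ has $q^{e}=\exp(\Omega(n^{\epsilon'}\log n))$ monomials and the calculation of~\cite{KSS13} supplies the near‑optimal lower bound on $\Gamma_{m,\ell}(NW_n)$ in the regime $m=\Theta(n^{\epsilon})$, $\ell\le N/2-n$. The real content is (ii), and the one genuinely new point is that the upper bound must be insensitive to $\mathcal D$.

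For (ii), fix $\mathcal D=(D_1\ge\cdots\ge D_k)$ with $\sum_iD_i=n$ and consider one second‑layer product gate $T=\prod_{i=1}^{k}Q_i$, $\deg Q_i=D_i$. By the product rule an order‑$m$ derivative of $T$ is a sum of terms, each divisible by $\prod_{i\notin U}Q_i$ for some set $U$ of ``hit'' factors with $|U|\le m$, the remaining factor having degree $\le S_U-m$ where $S_U:=\sum_{i\in U}D_i$. Absorbing the shift monomial gives $\Gamma_{m,\ell}(T)\le\binom{k}{m}\binom{N+\ell+S-m}{N}$ with $S:=D_1+\cdots+D_m$ the sum of the $m$ \emph{largest} parts of $\mathcal D$, so $\Gamma_{m,\ell}(C)\le s\binom{k}{m}\binom{N+\ell+S-m}{N}$ for a size‑$s$ circuit $C$. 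Note the only feature of $\mathcal D$ entering is $S$.

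Now split on the number of factors $k$ at a threshold $k_0=\Theta(m\log n)=\Theta(n^{\epsilon}\log n)$. If $k\ge k_0$, then since the $k-m$ parts outside the top $m$ are each at least $1$ we get $n-S=\sum_{i>m}D_i\ge k-m\gg m\log n$, hence (using $\ell\le N/2-n$, so $\tfrac{N+\ell}{\ell+n}\ge 2$)
\[
\frac{\binom{N+\ell+n-m}{N}}{\binom{N+\ell+S-m}{N}}\ \ge\ \Big(\tfrac{N+\ell}{\ell+n}\Big)^{\,n-S}\ \ge\ 2^{\,\Omega(m\log n)}\ =\ n^{\Omega(m)};
\]
comparing with (i) and $\binom{k}{m}\le n^{m}$ yields $s\ge n^{\Omega(m)}/\exp(o(n^{\epsilon}\log n))=\exp(\Omega(n^{\epsilon}\log n))$, far above $\exp(n^{\epsilon})$. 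If instead $k<k_0$, then $NW_n=\sum_{j=1}^{s'}\prod_{i=1}^{k}Q_{j,i}$ with fewer than $k_0$ factors per summand; since in a homogeneous depth‑$4$ circuit the bottom product gates multiply variables, each $Q_{j,i}$ is a sum of at most $s$ monomials, so the circuit computes a polynomial with at most $s\cdot s^{k_0}=s^{1+k_0}$ monomials. Were $s\le\exp(n^{\epsilon})$ this would be $\exp(O(n^{\epsilon}k_0))=\exp(O(n^{2\epsilon}\log n))=\exp(o(n^{\epsilon'}\log n))$ once $2\epsilon<\epsilon'$, contradicting that $NW_n$ has $\exp(\Omega(n^{\epsilon'}\log n))$ monomials. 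Taking for instance $\epsilon=\tfrac{1}{100}$, $\epsilon'=\tfrac1{10}$ makes every case deliver size $\ge\exp(n^{\epsilon})$, proving the theorem.

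The step I expect to be the bottleneck is (i): the lower bound on $\Gamma_{m,\ell}(NW_n)$ must be within a $\exp(o(n^{\epsilon}\log n))$ factor of the trivial maximum \emph{precisely in the window $m=\Theta(n^{\epsilon})$, $\ell\le N/2-n$}, because in the ``many factors'' case the surviving gap has to beat both $\binom{k}{m}$ (as large as $n^m$) and the degree‑sequence slack $S$; this pins down a somewhat tight joint choice of $m$, $\ell$, the design degree $e$ and the alphabet size $q$, and carefully redoing the~\cite{KSS13} estimate for $NW_n$ in this window is where the effort lies. A second point worth isolating is that the ``many factors'' estimate is genuinely independent of the \emph{shape} of $\mathcal D$: the only quantity that enters is $S$, and $S$ is controlled solely by the pigeonhole inequality $n-S\ge k-m$ — which is exactly what lets one fixed $\epsilon$ work for every degree sequence, the whole point of the $\spsp^{\ast}$ statement.
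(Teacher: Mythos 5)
The approach you take is genuinely different from the paper's, but it has a quantitative gap that the tension you flagged in (i) cannot be engineered away; it is structural.

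Your circuit-side estimate for a single product gate, $\Gamma_{m,\ell}(T)\le\binom{k}{m}\binom{N+\ell+S-m}{N}$ with $S$ the sum of the $m$ largest parts of $\mathcal D$, is correct but far weaker than what the paper actually uses. The only handle you put on $S$ is the pigeonhole bound $n-S\ge k-m$, so $S$ can be as large as $n-O(m\log n)$; the paper's Lemma~\ref{lem:sps} instead chooses a \emph{degree-sequence-dependent threshold}, pays a sparsity factor $s^{k n^{-1/m}}$ to multiply out the high-degree part, and then lands in the situation where the surviving factors have \emph{every} $k$ of them summing to at most $4\epsilon n$ (Lemma~\ref{lem:lowdeg}). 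The analogue of your $S$ there is $4\epsilon n=O(\epsilon n)$, not $n-o(n)$. This difference is exponential once you compare against $NW_n$, and it is precisely the engine of the proof.

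Concretely, your argument cannot close for any choice of $\ell$. If $\ell\le N/2-n$ as you propose, then the ``generic maximum'' of $\Gamma_{m,\ell}$ is \emph{not} $\binom{N+\ell+n-m}{N}$: it is capped by the number of generators, $\binom{N+\ell}{N}\binom{N+m-1}{m}$, and for $N=\Theta(n^2)$, $m=\Theta(n^\epsilon)$, $\epsilon<1$ one has
\[
\frac{\binom{N+\ell+n-m}{N}}{\binom{N+\ell}{N}}\ \approx\ \Big(\tfrac{N+\ell}{\ell}\Big)^{n-m}\ =\ \exp(\Theta(n)),
\qquad
\binom{N+m-1}{m}=\exp\big(O(n^\epsilon\log n)\big),
\]
so no polynomial can have $\Gamma_{m,\ell}$ within a factor $\exp(o(n^\epsilon\log n))$ of $\binom{N+\ell+n-m}{N}$; your step (i) is impossible in this range. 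If instead you take $\ell$ large enough for (i) to be attainable — which forces $N/\ell\le O(n^{\epsilon-1}\log n)$, the regime used by Theorem~\ref{thm:gkss-perm} with $\ell=\Theta(n^2t/\log n)$ — then $\tfrac{N+\ell}{\ell+n}$ is $1+o(1)$, not $\ge 2$, and the many-factors gain degrades to $\exp\big(\Theta((k-m)\cdot N/\ell)\big)=\exp\big(O(n^{2\epsilon-1}\log^2 n)\big)$, which is $\exp(o(n^\epsilon))$ for $\epsilon<1$ and so cannot even overcome $\binom{k}{m}$, let alone deliver the stated bound. The pigeonhole slack $n-S\ge k-m$ simply does not produce enough separation on either side of the $\ell$-tradeoff.

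There is a second, independent mismatch: you take the design degree $e=\Theta(n^{\epsilon'})$ with $\epsilon'>2\epsilon$ to make the sparsity case go through, but Theorem~\ref{thm:gkss-perm} (and the underlying~\cite{KSS13} calculation) lower-bounds $\Gamma_{k,\ell}(NW_{t,n})$ only when the derivative order $k$ matches the design degree $\lfloor n/2t\rfloor$. With $m=\Theta(n^\epsilon)\ll e$, differentiating by the first $m$ coordinates of a codeword no longer isolates a single monomial, and the cited estimate does not apply as-is. Conversely, if you take $e=m=\Theta(n^\epsilon)$ so the estimate applies, then $NW_n$ has only $n^{n^\epsilon}$ monomials and your few-factors sparsity count $s^{1+k_0}$ with $k_0=\Theta(n^\epsilon\log n)$ is not beaten by it. The fix for both problems is essentially the one the paper makes: rather than fixing one derivative order $m$ and splitting on the number of factors, choose the derivative order $k=n^{i/m}$ adaptively as a function of the (shared) degree sequence $\mathcal D$, using the observation that for a $\Sigma\Pi\Sigma\Pi^\ast$ circuit this single choice works simultaneously for all product gates, so the number $m$ of candidate thresholds can be taken to be a constant $1/\epsilon+1$; this is Theorem~\ref{thm:ubspspstar}, and it is what gives $s\ge\exp(n^{\Omega(1/m)})=\exp(n^{\Omega(\epsilon)})$.
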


%One consequence of the above result  is that we show that one can partition the set of product gates of every depth 4 circuit into constantly many parts (roughly 10 parts would suffice), and for each part we are able to prove exponential lower bounds. However as of now we do not know how to combine this bound across the parts in order to get a lower bound for all depth 4 circuits. 

%%%%%%%%%%%%%%%%%%%%%%%%%%%%%%%%%%%%%%%%%%%%%%%%%%%%%%%%
%%%%%%%%%%%%%%%%%%%%%%%%%%%%%%%%%%%%%%%%%%%%%%%%%%%%%%%%
%%%%%%%%%%%%%%%%%%%%%%%%%%%%%%%%%%%%%%%%%%%%%%%%%%%%%%%%
 \subsection{Depth reduction is tight for $\spsp(r)$ circuits, $r = \Omega (\log n)$}
The main question that was left open by both the works of~\cite{KSS13} and ~\cite{FLMS13} was to understand whether an improved depth reduction was possible for general (homogeneous) arithmetic formulas. 

In particular, the following tantalizing questions naturally emerge and were left as open questions by the works of~\cite{KSS13} and~\cite{FLMS13}.

 \begin{itemize}
% \item Is the technique of reduction to $\spsp^{[t]}$ circuits and shifted partial derivatives powerful enough to show lower bounds for the class of homogeneous arithmetic formulas?
\item Can the depth reduction by Koiran and Tavenas~\cite{koiran, Tavenas13} be improved for formulas: In other words, can one show that for every polynomial of degree $n$ and in $N = n^{O(1)}$ variables which has a polynomial sized (homogeneous) formula, it can be reduced to a $\spsp^{[\sqrt n]}$ circuit of size $N^{o(\sqrt{n})}$? 
\item Can every homogeneous arithmetic formula be converted to a regular formula with only a polynomial blow up in its size?
\end{itemize}

A positive answer to any of the above questions would suffice in proving superpolynomial lower bounds for general homogeneous arithmetic formulas. We settle both the questions and show that unfortunately neither is true. 

 We settle these questions by constructing a an explicit family of polynomials $\{{\cal Q}_n\}_n$, where ${\cal Q}_n$ is a polynomial in $\theta(n^2)$ variables and is of degree $\theta(n)$, such that for each $n$, ${\cal Q}_n$ can be computed by a {\it polynomial sized homogeneous formula}, but any $\spsp^{[\sqrt{n}]}$ circuit computing ${\cal Q}_n$ must have  top fan-in at least $2^{\Omega(\sqrt{n}\log N)}$. Moreover ${\cal Q}_n$ is computed by a polynomial size homogeneous $\spsp(r)$ formula for $r = \theta(\log n)$. More formally, we prove the following theorem.

 \begin{thm}[Depth reduction is tight for formulas]~\label{thm:dr}
There exists an explicit family of polynomials $\{{\cal Q}_n\}_n$ and an absolute constant $\epsilon > 0$ such that ${\cal Q}_n$ is of degree $\theta(n)$, in $N = \theta(n^2)$ variables, and computed by a $\poly(n)$ size homogeneous $\spsp(\log n)$ circuit (in particular a homogeneous arithmetic formula); and 
for every $t$ such that $\omega(\log n) \leq t \leq \epsilon n $, any $\spsp^{[t]}$ circuit computing ${\cal Q}_{n}$ must have top fan-in at least $\exp\left(\Omega(\frac{n}{t}\log N)\right)$.
\end{thm}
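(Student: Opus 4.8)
The goal is to exhibit an explicit family $\{\mathcal{Q}_n\}$ that is computable by a polynomial-size homogeneous $\spsp(\log n)$ circuit, yet requires top fan-in $\exp(\Omega(\frac{n}{t}\log N))$ in any homogeneous $\spsp^{[t]}$ circuit, for all $t$ in the window $\omega(\log n) \le t \le \epsilon n$. The natural candidate is a sum of $r = \theta(\log n)$ carefully chosen building blocks, each of which individually is ``easy'' as a $\spsp$ polynomial but whose \emph{sum} forces high shifted-partial-derivative complexity against bounded-bottom-fan-in circuits. Concretely I would take each building block to be (a suitable power/product of) a hard $\spsp^{[t]}$-lower-bound polynomial in the style of \cite{GKKS12, KSS13, FLMS13} — e.g.\ Nisan--Wigderson-type designs — but set up on \emph{disjoint} blocks of variables and at \emph{geometrically spaced degree scales}, so that for every target $t$ in the allowed range exactly one of the $\theta(\log n)$ summands is ``tuned'' to that $t$ and contributes the full $\exp(\Omega(\frac{n}{t}\log N))$ lower bound via the shifted partial derivative measure. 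This is the ``hierarchy'' idea alluded to after Theorem~\ref{thm:hierarchy}: the $i$-th summand should be a polynomial that has a small $\spsp(1)$ (and hence $\spsp(\log n)$ upon summing) representation but large shifted-partials span relative to $\spsp^{[t_i]}$ circuits, where $t_i \approx 2^i$ ranges over the dyadic scales between $\omega(\log n)$ and $\epsilon n$.

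The proof then splits into two independent halves. \textbf{Upper bound:} I would directly write down a homogeneous $\spsp$ formula for each summand. Since each block polynomial is designed to be (say) a product of linear forms or a small number of $\spsp^{[1]}$ pieces on its own variable block, each is a single $\Pi\Sigma\Pi$ computation of polynomial size, and summing the $\theta(\log n)$ of them gives a homogeneous $\spsp(\theta(\log n))$ circuit of polynomial total size; homogeneity is immediate from working block-by-block with homogeneous components. \textbf{Lower bound:} Fix any $t$ with $\omega(\log n)\le t\le \epsilon n$ and any homogeneous $\spsp^{[t]}$ circuit $C$ computing $\mathcal{Q}_n$. Pick the dyadic scale $i$ with $t_i \le t < 2t_i$. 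Using the standard shifted-partial-derivative toolkit: the span of shifted partial derivatives of a single $\Pi\Sigma\Pi^{[t]}$ product gate (with $k$ factors of total degree $n$) is at most roughly $\binom{k+\text{(order)}}{\text{order}}\binom{N+\ell}{\ell}$ for shift $\ell$, so the whole circuit of top fan-in $s$ has span at most $s$ times that. On the other hand the NW-type summand at scale $i$ has shifted-partials span at least the near-maximal value $\binom{N+\ell}{\ell}\cdot(\text{design term})$, for the appropriate choice of derivative order and shift $\ell$ tied to $t_i$; and — this is the delicate point — I must argue the \emph{other} $\theta(\log n)-1$ summands do \emph{not} cancel or dilute this, which follows because they live on disjoint variable blocks and at incomparable degrees, so the relevant derivatives annihilate them or the spans add in a direct sum. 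Comparing the two bounds yields $s \ge \exp(\Omega(\frac{n}{t}\log N))$.

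\textbf{Key steps, in order.} (1) Define the block decomposition and the dyadic degree schedule $t_1 < \dots < t_r$, $r=\theta(\log n)$, and define $\mathcal{Q}_n$ as the sum of NW-type polynomials, one per block, scale $t_i$ on block $i$. (2) Verify parameters: total degree $\theta(n)$, total variables $N=\theta(n^2)$, explicitness (VNP membership via the explicit NW design). (3) Prove the upper bound: exhibit the polynomial-size homogeneous $\spsp(r)$ formula. (4) Prove the per-block shifted-partials lower bound — essentially re-derive, or cite from \cite{KSS13, FLMS13, CM13}, that the scale-$t_i$ summand has large span for derivative order $k_i$ and shift $\ell_i$ calibrated to $t_i$. (5) Prove the ``no interference'' lemma: the shifted-partials span of $\mathcal{Q}_n$ is at least that of its scale-$i$ summand, by disjointness of supports. (6) Prove the generic upper bound on shifted-partials span for homogeneous $\spsp^{[t]}$ circuits of top fan-in $s$. (7) Combine (4)+(5)+(6), optimize the choice of $(k_i,\ell_i)$, and read off $s\ge\exp(\Omega(\frac{n}{t}\log N))$ uniformly over the window for $t$.

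\textbf{Main obstacle.} The crux is making \emph{one fixed} polynomial $\mathcal{Q}_n$ simultaneously hard against \emph{every} $t$ in a whole range, rather than against a single $t$ — this is exactly why a multi-scale construction is needed, and why the ``no interference'' lemma (step 5) carries the weight: I need the shifted partial derivative measure to be essentially additive over the disjoint-variable summands \emph{at the specific order and shift tuned to each scale}, with no destructive cancellation, and I need to ensure the summands that are ``off-scale'' for the current $t$ are genuinely negligible in the span rather than accidentally contributing a larger-than-expected term that would weaken the bound for a neighbouring $t$. A secondary subtlety, flagged in the paper's own footnote, is that the depth reduction one really works with produces the slightly more general class where only the sum of degrees of any $\epsilon n/t$ bottom product gates is bounded by $\epsilon n$; so step (6) must be proved for that refined class, which requires the observation that the shifted-partials upper bound argument only ever uses such an averaged degree bound, not a worst-case bottom fan-in bound.
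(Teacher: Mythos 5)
Your high-level plan matches the paper: combine $\theta(\log n)$ building-block polynomials at geometrically spaced scales $t_i$, so that for every target $t$ in the window one summand is ``tuned'' to it, and then run the shifted-partials argument against that summand. This is indeed how the paper proceeds (Theorem~\ref{thm:hierarchy} is proved for each scale $t$, then the $\mathcal{P}_{t,n}$ are combined into a single $\mathcal{Q}_n$). But there are two substantive places where your proposal diverges, and one of them is a genuine gap.

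The gap is in the choice of per-scale building block, which is really the whole content of the hierarchy theorem. You propose either (i) ``a hard $\spsp^{[t]}$-lower-bound polynomial in the style of \cite{GKKS12,KSS13,FLMS13} --- e.g.\ Nisan--Wigderson-type designs'', or (ii) ``a product of linear forms or a small number of $\spsp^{[1]}$ pieces''. Option (i) fails the upper-bound half: the $NW_{t,n}$ polynomials of \cite{KSS13} are only known to be in $\VNP$ (and IMM from \cite{FLMS13} has a small ABP, not a small formula), so a sum of $\theta(\log n)$ of them is not computable by a $\poly(n)$-size homogeneous $\spsp(\log n)$ circuit, which is the other thing the theorem demands. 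Option (ii) fails the lower-bound half: a product of linear forms is itself a homogeneous $\spsp^{[1]}$ circuit with top fan-in $1$, hence trivially computable by any $\spsp^{[t]}$ circuit --- no lower bound can come out of it. The block polynomial must thread the needle: it must \emph{be} a $\poly(n)$-size $\Pi\Sigma\Pi^{[t_i]}$ formula (so that the upper bound works and bottom fan-in $t_i$ suffices) while having ``genuinely degree-$t_i$'' factors that cannot be broken into lower-degree pieces (so that it is hard for $\spsp^{[t_i/20]}$). The paper's $\mathcal{P}_{p,t,n}(\overline{x}) = \prod_{j\in [\tilde{t}]}\sum_{f \in {\cal S}_p}\prod_{i \in C_j}x_{i, f(i)}$ --- a product of $n/t$ factors, each a $\Sigma\Pi$ polynomial of degree $t$ obtained from a Reed--Solomon-type design --- is exactly this object, and the hardness is established via the CM13/FLMS13 ``leading monomials at large distance'' machinery (Lemmas~\ref{lem:distderivative} and~\ref{lem:lb}). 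Without constructing such a polynomial, steps (3) and (4) of your plan cannot both go through.

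The second divergence is the isolation mechanism, which is genuinely different but not a gap. You put the summands on disjoint variable blocks and argue a ``no-interference'' / direct-sum lemma for shifted partials, which you rightly flag as the crux and the most delicate step. The paper sidesteps this entirely: it keeps all the $\mathcal{P}_{t,n}$ on the \emph{same} $n^2$ variables and instead multiplies the $i$-th summand by a fresh auxiliary variable $y_i$, setting $\mathcal{Q}_n = \sum_i y_i\cdot\mathcal{P}_{20^i,n}$. Then the isolation is a one-line projection: a $\spsp^{[a]}$ circuit for $\mathcal{Q}_n$ specializes (set $y_j=0$ for $j\ne i$, $y_i=1$) to a $\spsp^{[a]}$ circuit for $\mathcal{P}_{20^i,n}$ of the same top fan-in, contradicting the hierarchy theorem directly, with no shifted-partials additivity argument needed. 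Your disjoint-blocks version would also work in principle (block-$i$ derivatives annihilate all other summands), but it inflates the variable count to $\theta(n^2\log n)$, violating the stated $N=\theta(n^2)$, and it burdens you with carefully verifying the direct-sum claim at the correct order and shift for every scale. The projection trick is both simpler and keeps the parameters exactly as stated.
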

 
The above theorem follows by an interpolation argument applied to a {\it hierarchy} theorem for $\spsp^{[t]}$ circuits, which is the heart of our argument. The hierarchy theorem shows that by increasing the bound on the bottom fan-in of $\spsp^{[t]}$ circuits even slightly, we get a much richer class of arithmetic circuits. We believe this is an interesting result in its own right. 

\begin{thm}[Hierarchy theorem for $\spsp^{[t]}$ circuits]~\label{thm:hierarchy}
There exists an absolute constant $\epsilon > 0$ such that for every  $t$ with $\omega(\log n) \leq t \leq \epsilon n $, there exists an explicit family of polynomials $\{{\cal P}_{t,n}\}_n$ such that ${\cal P}_{t,n}$ is of degree $n$, has $N = n^2$ variables, and is computed by a $\poly(n)$ size homogeneous $\spsp^{[t]}$ circuit, and for every $t'$ s.t. $t' < t/20$, any homogeneous $\spsp^{[t']}$ circuit computing ${\cal P}_{t,n}$ must have top fan-in at least $\exp\left(\Omega(\frac{n}{t}\log N)\right)$.
\end{thm}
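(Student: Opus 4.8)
The plan is to exhibit an explicit polynomial ${\cal P}_{t,n}$ that is ``cheap'' for $\spsp^{[t]}$ circuits but ``expensive'' for $\spsp^{[t']}$ circuits with $t' < t/20$, and to separate the two using the shifted partial derivative measure. For the upper bound, I would take ${\cal P}_{t,n}$ to be built from the Nisan--Wigderson design polynomial in a way that is tailored to the parameter $t$: roughly, a polynomial of degree $n$ whose natural $\spsp$ representation has each bottom product gate of degree exactly $t$ (so $n/t$ factors per term), each factor being a linear form over a disjoint block of variables, with the terms indexed by a combinatorial design. Such a polynomial sits inside $\spsp^{[t]}$ with $\poly(n)$ top fan-in essentially by construction, since there are only polynomially many design lines and each contributes one product gate of fan-in $n/t$, each factor a sum of a few variables.

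For the lower bound against $\spsp^{[t']}$ with $t' < t/20$, I would run the shifted partial derivatives argument in the by-now standard form: pick an order-$k$ derivative set and a shift degree $\ell$, and compare $\dim \langle \partial^{=k} {\cal P}_{t,n} \rangle_{\le \ell}$ against the corresponding upper bound for a single $\spsp^{[t']}$ summand, which scales like $\binom{k+t'}{k}\binom{N+\ell}{\ell}$ times the number of monomials produced by one product gate. The point of the factor-$20$ gap is to make the bound $\exp(\Omega(\tfrac{n}{t}\log N))$ on the top fan-in come out: a $\spsp^{[t']}$ circuit with fewer than $\exp(\Omega(\tfrac{n}{t}\log N))$ summands has shifted-partial-derivative span strictly smaller than that of ${\cal P}_{t,n}$, because each summand, having bottom fan-in only $t' < t/20$, is ``too structured'' — its product structure on blocks of size $\le t'$ forces the span to be suboptimal by a factor that the NW design lower bound on $\dim \langle \partial^{=k} {\cal P}_{t,n} \rangle_{\le \ell}$ exploits. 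Here I would lean on the NW-based lower bounds for the shifted partial derivative measure developed in~\cite{KSS13} (and the unified treatment of~\cite{CM13}), re-running them with the derivative order $k$ and shift $\ell$ chosen as functions of $t$ rather than of $\sqrt n$.

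The main technical obstacle, I expect, is calibrating the three parameters $k$, $\ell$, and the block structure of ${\cal P}_{t,n}$ simultaneously so that (i) the polynomial genuinely lies in $\poly(n)$-size $\spsp^{[t]}$, (ii) its own shifted-partial span is close to the maximum $\binom{N+\ell}{\ell}$, and (iii) any $\spsp^{[t']}$ circuit with $t' < t/20$ and sub-$\exp(\Omega(\tfrac{n}{t}\log N))$ top fan-in provably falls short. The delicate point is (iii): unlike the $t=\sqrt n$ regime, here $t$ ranges over a whole interval $\omega(\log n) \le t \le \epsilon n$, so the estimates must be uniform in $t$, and the ``error terms'' in the binomial estimates (which are usually absorbed into constants when $t=\sqrt n$) now have to be controlled as $t$ varies. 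I would handle this by choosing $k = \Theta(n/t)$ and $\ell$ of order $N \cdot (1 - \Theta(t/n))$ or thereabouts, so that the relevant ratios of binomials become clean exponentials in $(n/t)\log N$, and then verify the separation by a direct comparison of the two exponents, the factor $20$ providing the slack. Once Theorem~\ref{thm:hierarchy} is in hand, Theorem~\ref{thm:dr} follows by an interpolation/combination argument over the scales $t$, as the text indicates.
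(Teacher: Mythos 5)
Your high-level plan is pointing in the right direction (tailor a polynomial to $t$ with block structure of size $t$, then separate via a shifted-partial-type measure and the CM13 machinery), but there is a concrete gap in the middle that the proposal does not fill. The paper's polynomial is a single \emph{product} (top fan-in $1$!) of $\tilde t = n/t$ Reed--Solomon-type factors over disjoint blocks of $t$ variables each:
\[
{\cal P}_{t,n}(\overline{x}) \;=\; \prod_{j\in[\tilde t]}\;\sum_{f\in{\cal S}_1}\;\prod_{i\in C_j}x_{i,f(i)},
\]
where ${\cal S}_1$ is the set of degree-one univariate polynomials over $\F_n$ and $C_j$ are disjoint size-$t$ index blocks. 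This specific structure is what makes everything work: (a) the order-$k$ partial derivatives of ${\cal P}_{t,n}$ with respect to a ``prefix'' monomial $m_{\overline f}' = \prod_j\prod_{i\in C_j^{2}}x_{i,f_j(i)}$ (for $k=2\tilde t$) are themselves \emph{single monomials} $m_{\overline f}/m_{\overline f}'$ (because distinct $f,g\in{\cal S}_1$ agree on at most one point of a block, so cross-terms vanish); (b) two distinct tuples $\overline f,\overline g$ give derivative-monomials at distance at least $(t-3)$ per differing coordinate; and (c) passing to a Reed--Solomon code inside ${\cal S}_1^{\tilde t}$ gives $n^{\Omega(\tilde t)}$ tuples pairwise differing in $\Omega(\tilde t)$ coordinates, hence $n^{\Omega(n/t)}$ derivative-monomials pairwise at distance $\Omega(n)$. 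This far-apart-leading-monomial structure is exactly the hypothesis of the CM13-style lemma (the paper's Lemma~\ref{lem:lb}) that bounds the top fan-in of any $\spsp^{[t/20]}$ circuit computing the polynomial.

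Your proposal instead describes a sum over ``polynomially many design lines,'' each factor ``a sum of a few variables,'' which neither matches the top-fan-in-1 product-of-blocks form nor makes properties (a)--(c) evident, and you propose ``running the shifted partial derivative argument in the standard form'' and ``leaning on the NW-based lower bounds of KSS13.'' But the NW span lower bound is a theorem about the $NW$ polynomial, not about an arbitrary $t$-tailored $\hat{\cal P}_{t,n}$; it does not transfer automatically, and the paper in fact does \emph{not} show that ${\cal P}_{t,n}$ has near-maximal shifted-partial span for a generic $(k,\ell)$. What it shows is the more refined monomial-distance property that feeds Lemma~\ref{lem:lb}. You do gesture at CM13 as a ``unified treatment,'' but the missing idea is to identify the monomial-distance criterion as \emph{the} tool and to choose a construction whose derivatives are clean, well-separated monomials. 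Your proposed parameter $\ell \approx N(1-\Theta(t/n))$ is also off from what the argument needs ($\ell = \Theta(n^2 t/\log n)$, which grows linearly with $t$ rather than staying near $N$); this is a secondary issue but a symptom of the same gap, since without the actual construction and its derivative structure there is no way to calibrate $k$ and $\ell$ against a concrete estimate.
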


These results immediately imply that Koiran's and Tavenas' depth reduction~\cite{koiran, Tavenas13} is tight for formulas, for all but a small number of choices of the bottom fan-in. In particular, it is tight for the case where the bottom fan-in is bounded by $\sqrt{n}$. Interestingly enough, the polynomial size formulas computing ${\cal Q}_n$ are of {\dep}. In fact, they are a sum of $O(\log n)$ {\it regular} homogeneous formulas of {\dep}. %Thus, it seems that we cannot hope to go much beyond regular formulas in the lower bound regime, by using shifted partial derivatives along with depth reduction as our main tools. 
%In some sense, it indicates that regularity is strong syntactic restriction on the model of computation as shifted partials and depth reduction do not seem to be strong enough to even imply lower bounds for a sum of polynomial many regular formulas of {\dep}.   
 
%It was shown in~\cite{KSS13} that any ABP (even non homogeneous) can be converted to a regular formula with a quasipolynomial blow up in size. If one could improve this transformation even slightly for formulas or even for homogeneous formulas, this would imply superpolynomial lower bounds for homogeneous arithmetic formulas. One consequence of our results is that such an improvement is not possible. We build upon the results of~\cite{KSS13} and show that the conversion of general formulas to regular formulas must incur a quasipolynomial blow up in size. 

A corollary of our results is that any conversion of a general (homogeneous) formula to a regular formula must incur a quasipolynomial blow up in size. It was shown in~\cite{KSS13} that any algebraic branching program can be converted to a regular formula with a quasipolynomial blow up in size. Since it is widely believed that formulas are much weaker than ABPs, it was conjectured that formulas, or homogeneous formulas might have a more efficient conversion (which would suffice in proving superpolynomial lower bounds for homogeneous formulas!). We show however that this is not true. 
Combining our results with the result of~\cite{KSS13}, we obtain the following (tight) lower bound for converting homogeneous formulas to regular formulas. 

\begin{thm}[Lower bounds for reduction to Regular Formulas]~\label{thm:regular}
There exists an explicit family of polynomials $\{{\cal Q}_n\}_n$ and an absolute constant $\epsilon > 0$ such that ${\cal Q}_n$ is of degree $\theta(n)$, in $N = \theta(n^2)$ variables, and computed by a $\poly(n)$ size homogeneous $\spsp(\log n)$ circuit (in particular a homogeneous arithmetic formula); and 
any regular formula computing ${\cal Q}_n$ must have size at least $N^{\Omega(\log n)}$. 
\end{thm}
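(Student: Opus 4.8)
The plan is to derive this theorem as a corollary of the depth-reduction lower bound (Theorem~\ref{thm:dr}) combined with the known conversion of regular formulas to $\spsp^{[t]}$ circuits that underlies the improved depth reduction of Kayal et al.~\cite{KSS13}. First I would recall the quantitative statement from~\cite{KSS13}: any regular formula of size $s$ computing a polynomial of degree $n$ can be converted, for a suitable choice of $t$ in the range $\omega(\log n) \le t \le \epsilon n$ (indeed one can take $t = \sqrt n$), into a homogeneous $\spsp^{[t]}$ circuit of top fan-in $s^{O(1)} \cdot \exp\!\left(o\!\left(\tfrac{n}{t}\log N\right)\right)$ — the point of regularity being precisely that it buys an asymptotic saving in the exponent over the generic $\exp\!\left(O(\tfrac{n}{t}\log N)\right)$ bound of Theorem~\ref{thm:tavenas}. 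I would state this as a lemma (citing~\cite{KSS13}, possibly reproving it in an appendix for completeness) and fix the value of $t$ it produces.

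Next I would instantiate Theorem~\ref{thm:dr} with that same $t$: the explicit family $\{{\cal Q}_n\}_n$ there is of degree $\theta(n)$ in $N = \theta(n^2)$ variables, is computed by a $\poly(n)$-size homogeneous $\spsp(\log n)$ circuit (hence a homogeneous formula), and every homogeneous $\spsp^{[t]}$ circuit computing it has top fan-in at least $\exp\!\left(\Omega(\tfrac{n}{t}\log N)\right)$. Now suppose toward a contradiction that ${\cal Q}_n$ has a regular formula of size $s$. Applying the conversion lemma produces a homogeneous $\spsp^{[t]}$ circuit for ${\cal Q}_n$ of top fan-in at most $s^{O(1)}\cdot\exp\!\left(o(\tfrac{n}{t}\log N)\right)$. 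Combining with the lower bound of Theorem~\ref{thm:dr} gives
\[
s^{O(1)}\cdot\exp\!\left(o\!\left(\tfrac{n}{t}\log N\right)\right) \;\geq\; \exp\!\left(\Omega\!\left(\tfrac{n}{t}\log N\right)\right),
\]
and since the $o(\cdot)$ term is negligible compared to the $\Omega(\cdot)$ term, this forces $s \ge \exp\!\left(\Omega(\tfrac{n}{t}\log N)\right)^{\Omega(1)} = N^{\Omega(n/t)}$. With $t = \sqrt n$ (or any $t = n^{o(1)}$-ish choice the conversion allows, keeping within $\omega(\log n)\le t\le \epsilon n$) this is $N^{\Omega(\sqrt n)}$, which is in particular $N^{\Omega(\log n)}$, the claimed bound. (If the conversion lemma only gives a clean saving for a specific $t$ such as $\sqrt n$, that single instantiation already suffices; the weaker $N^{\Omega(\log n)}$ form in the theorem statement leaves ample slack.)

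The main obstacle, and the step deserving the most care, is pinning down the exact quantitative parameters of the regular-formula-to-$\spsp^{[t]}$ conversion from~\cite{KSS13} and verifying that the $o(\tfrac{n}{t}\log N)$ saving in its exponent is genuinely smaller than the $\Omega(\tfrac{n}{t}\log N)$ lower-bound exponent for the particular $t$ and the particular ${\cal Q}_n$ furnished by Theorem~\ref{thm:dr} — i.e.\ making sure the two asymptotic regimes are compatible and the subtraction in the exponent really is positive and of order $\tfrac{n}{t}\log N$. A secondary subtlety is homogeneity bookkeeping: the conversion should preserve (or can be made to preserve) homogeneity and the degree bound, so that Theorem~\ref{thm:dr}'s lower bound, which is stated for homogeneous $\spsp^{[t]}$ circuits, applies to the circuit the conversion outputs. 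Once these are checked, the contradiction argument is immediate and the theorem follows.
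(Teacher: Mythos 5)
Your proposal follows exactly the paper's route: the paper proves Theorem~\ref{thm:regular} in one line by combining Theorem~\ref{thm:dr}'s $\spsp^{[t]}$ top fan-in lower bound for ${\cal Q}_n$ with the depth reduction for regular formulas (Theorem~15 of~\cite{KSS13}), which is precisely the contradiction argument you lay out. One caveat worth flagging: in the \cite{KSS13} conversion, the regular-formula size $s$ enters the \emph{exponent} of the resulting $\spsp^{[t]}$ top fan-in (roughly $\exp\left(O\left(\tfrac{n}{t}\cdot\tfrac{\log s}{\log n}\right)\right)$ for the right choice of $t$), not as a multiplicative $s^{O(1)}$ prefactor with an $s$-independent $\exp\left(o\left(\tfrac{n}{t}\log N\right)\right)$ term as you wrote --- your parameterization would force $s \geq N^{\Omega(\sqrt n)}$, far stronger than the theorem's $N^{\Omega(\log n)}$, which is itself a signal that the lemma is mis-stated, though once the lemma is quoted correctly the same contradiction yields exactly the claimed quasipolynomial bound.
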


%\subsection{Discussion of results/remarks}

%$\spsp(t)$ circuits are a natural class of circuits, and they have proven to be extrememly convenient for depth reduction and hence using it to prove lower bounds for circuit classes for which efficient depth reduction is possible. Our results show that these circuits can be quite rigid. For instance, even though we can prove exponential lower bounds for $\spsp^{[t]}$ circuits for most values of $t$, if we take a sum of about $\log n$ many such circuits for different values of $t$, we currently have no lower bounds.  Another weakness is the fact that $\spsp^{[t]}$ circuits for different values of $t$ compute quite different clasess of circuits. Indeed as our hierarchy theorem shows, if we want to convert a general $\spsp^{\sqrt d}$ circuit to a $\spsp^{t}$ circuit for $t < \sqrt d/20$, we need to blow up the size of the circuit by an exponential amount. 

%\subsection{Other related work}
%Mention Yehudayoff-Hrubes and trivial depth reduction.

%Add to notation section that $\{{\cal Q}_n\}_n$ means infinite fmaily for growing $n$. 

%%%%%%%%%%%%%%%%%%%%%%%%%%%%%%%%%%%%%%%%%%%%%%%%%%%%%%%%%%%%%%%%%%%%%%%%%%%%%%%%
%%%%%%%%%%%%%%%%%%%%%%%%%%%%%%%%%%%%%%%%%%%%%%%%%%%%%%%%%%%%%%%%%%%%%%%%%%%%%%%%
%%%%%%%%%%%%%%%%%%%%%%%%%%%%%%%%%%%%%%%%%%%%%%%%%%%%%%%%%%%%%%%%%%%%%%%%%%%%%%%%

 %%%%%%%%%%MY HALF FOR EDITS
\vspace{2mm}
\noindent 
{\bf Organization of the paper:} The rest of the paper is organized as follows. In Section~\ref{sec:prelims}, we introduce some preliminary notions about circuits and introduce notation which we will use in the rest of the paper. In Section~\ref{sec:lbproof} we prove our lower bound for homogeneous $\spsp(r)$ circuits when $r = o(\log n)$. In Section~\ref{sec:drtight}, we show that depth reduction is tight for homogeneous arithmetic formulas by showing it is tight for homogeneous $\spsp(\Omega(\log n))$ circuits. We conclude with some discussion and open problems in Section~\ref{sec:conclusion}.

%%%%%%%%%%%%%%%%%%%%%%%%%%%%%%%%%%%%%%%%%%%%%%%%%%%%%%%%%%%%%%%%%%%%%%%%%%%%%%%%
%%%%%%%%%%%%%%%%%%%%%%%%%%%%%%%%%%%%%%%%%%%%%%%%%%%%%%%%%%%%%%%%%%%%%%%%%%%%%%%%
%%%%%%%%%%%%%%%%%%%%%%%%%%%%%%%%%%%%%%%%%%%%%%%%%%%%%%%%%%%%%%%%%%%%%%%%%%%%%%%%

%%%%%%%%%%%%%%%%%%%%%%%%%%%%%%%%%%%%%%%%%%%%%%%%%%%%%%%%%%%%%%%%%%%%%%%%%%%%%%%%
%%%%%%%%%%%%%%%%%%%%%%%%%%%%%%%%%%%%%%%%%%%%%%%%%%%%%%%%%%%%%%%%%%%%%%%%%%%%%%%%
%%%%%%%%%%%%%%%%%%%%%%%%%%%%%%%%%%%%%%%%%%%%%%%%%%%%%%%%%%%%%%%%%%%%%%%%%%%%%%%%

%%%%%%%%%%%%%%%%%%%%%%%%%%%%%%%%%%%%%%%%%%%%%%%%%%%%%%%%%%%%%%%%%%%%%%%%%%%%%%%%
%%%%%%%%%%%%%%%%%%%%%%%%%%%%%%%%%%%%%%%%%%%%%%%%%%%%%%%%%%%%%%%%%%%%%%%%%%%%%%%%
%%%%%%%%%%%%%%%%%%%%%%%%%%%%%%%%%%%%%%%%%%%%%%%%%%%%%%%%%%%%%%%%%%%%%%%%%%%%%%%%

\section{Preliminaries}~\label{sec:prelims}
\noindent
{\bf Arithmetic Circuits: }An arithmetic circuit over a field $\F$ and a set of variables $\overline{x} = \{x_1, x_2, \ldots, x_n\}$ is a directed acyclic graph such that every node in the graph is labelled by either a field element or a variable in $\overline{x}$ or one of the field operations $+, \times$. There could be one or more nodes with fan-out zero, called the output gates of the circuit. The nodes with fan-in zero indexed by the variables or field elements are called the leaf nodes. In  this paper, unless otherwise mentioned, we will assume that there is unique output gate. We will refer to the length of the longest path from an output node to a leaf node as the depth of the circuit. A circuit is said to be homogeneous if the polynomial computed at every node in the circuit is a homogeneous polynomial. A circuit is said to be a formula if the underlying undirected graph is  a tree. By a circuit of {\dep}, we will refer to a circuit of the form $\spsp$,  where the output gate is a $+$ gate and all the nodes at a distance $2$ from it are also labelled by $+$, and the remaining gates are labelled by $\times$. Observe that a {\it depth 4 circuit} can be converted into a {\it depth 4 formula} with only a polynomial blow up in size. We will therefore, use the term formula or circuit for a depth 4 circuit interchangeably in this paper. A homogeneous polynomial $P(\overline{x})$ of degree $d$ computed by a {\dep} circuit is of the form 

\begin{equation}\label{def:model}
P(\overline{x}) = \sum_{i=1}^{r}\prod_{j=1}^{d_i}{Q_{i,j}(\overline{x})}
\end{equation}

Based upon this definition, we will now define the specific restrictions of {\dep} circuits that we study in this paper. 

\vspace{2mm}
\noindent
{\bf Homogeneous $\spsp^{[a]}$ circuits and homogeneous $\sp^{[b]}\sp^{[a]}$ circuits:} The {\dep} $\spsp$ circuit in Equation~\ref{def:model}, is said to be a  $\sp^{[b]}\sp^{[a]}$ circuit, if 
%\begin{equation}\label{def:model2}
% P(\overline{x}) = \sum_{i=1}^{r}\prod_{j=1}^{a}{Q_{i,j}(\overline{x})}
%\end{equation}
each $Q_{i,j}(\overline{x})$ is a polynomial of degree at most $a$ and each $d_i$ is at most $b$. The {\dep} $\spsp$ circuit in Equation~\ref{def:model}, is said to be a  $\sp\sp^{[a]}$ circuit, if each $Q_{i,j}(\overline{x})$ is a polynomial of degree at most $a$. In this case we say that the {\it botton fan-in is bounded by $a$}. If the circuit is homogeneous, then we can assume without loss of generality that for each $i$, $\prod_{j=1}^{d_i}{Q_{i,j}(\overline{x})}$ is a polynomial of degree exactly $d$. 

Observe that, for each $i$, by grouping together and multiplying out some of the $Q_{i,j}$, we can transform a homogeneous $\spsp^{[a]}$ circuit into a homogeneous $\sp^{[b]}\sp^{[a]}$ circuit, where $b = O(\frac{d}{a})$. This operation of grouping together and multiplying would increase the size of the resulting circuit, but notice that it does not affect the top fan-in of the circuit. Thus lower bounds on the top fan-in for $\sp^{[O(b)]}\sp^{[a]}$ circuits impy the same lower bounds on the top fan-in of $\spsp^{[a]}$ circuits. 

\vspace{2mm}
\noindent
{\bf Homogeneous $\Sigma\Pi\Sigma\Pi(r)$ Circuits:} The {\dep} $\spsp$ circuit in Equation~\ref{def:model}, is said to be a $\Sigma\Pi\Sigma\Pi(r)$ circuit if the fan-in of the summation(top fan-in)  is bounded by $r$. Observe that there is no restriction on the bottom fan-in except that implied by the restriction of homogeneity. 

%A polynomial computed by a homogeneous $\Sigma\Pi\Sigma\Pi(r)$ circuit is of the form
%\begin{equation}\label{def:model}
%P(\overline{x}) = \sum_{i=1}^{r}\prod_{j=1}^{a}{Q_{i,j}(\overline{x})}
%\end{equation}
%% P(x_1, x_2, x_3,\ldots, x_n) = \sum_{i=1}^{r}\prod_{j=1}^{d_i}{Q_{ij}}
For each $i\in[r]$, the product $P_i = \prod_{j=1}^{d_i}{Q_{ij}}$ is said to be computed by the product gate $i$. Therefore, $P = \sum_{i=1}^{r}{P_i}$.
Here for every $i$ and $j$, $Q_{ij}$ is an $n$ variate homogeneous polynomial being computed by a $\Sigma\Pi$ circuit. The homogeneity restriction on $C$ implies 
that for every product gate $i$, 
\begin{equation}\label{eqn:homogdeg}
 \text{deg}(P) = d = \sum_{j=1}^{d_i}{\text{deg}(Q_{ij})}
\end{equation}
With every product gate $i \in [r]$, we can associate a multiset $(D_i,m_i)$, where 
\begin{equation}
D_i=\{\text{deg}(Q_{ij}): j\in[d_i]\} 
\end{equation}
and $m_i$ is a map from $D_i$ to $\N$, which assigns to every element $l$ in $D_i$, the number of $j\in[d_i]$ such that $Q_{ij}$ has degree equal to $l$. For a homogeneous depth $4$ circuit, computing a degree $d$ polynomial, Equation~\ref{eqn:homogdeg} can be rewritten as 
\begin{equation}\label{eqn:multiplicity}
 \text{deg}(P) = d = \sum_{l\in D_i}{l\times m_i(l)}
\end{equation}
 for each $i$ in $[r]$. $\spsp(r)$ circuits for which the multiset $(D_i, m_i)$ is the same for every product gate $i \in [r]$, are said to be $\spsp^*$ circuits. 

%For more on arithmetic circuits and related results, we refer the interested reader to the survey~\cite{SY10}.  

\vspace{2mm}
\noindent
{\bf Regular Formula:} The notion of regular formulas was introduced in~\cite{KSS13}, where superpolynomial lower bounds for this model were proved. 
\begin{define} A formula computing a degree $d$ polynomial in $n$ variables is said to be regular, if it satisfies the following conditions:
\begin{enumerate}
\item It has alternating layers of sum and product gates. 
\item All gates in a single layer have the same fan-in.
\item The formal degree of the formula is at most some constant multiple of  the degree of the polynomial being computed. 
\end{enumerate}

\end{define}

\vspace{2mm}
\noindent
{\bf Shifted Partial Derivatives: }The complexity measure used in showing lower bounds in this paper is the dimension of the shifted partial derivatives introduced in~\cite{Kayal12} and used in~\cite{FLMS13},~\cite{GKKS12} and~\cite{KSS13}.  For a field $\F$, an $n$ variate polynomial $P \in {{{\F}}}[\overline{x}]$ and a positive integer $k$, we denote by $\partial^{=k} P$, the set of all 
partial derivatives of order equal to $k$ of $P$. For a polynomial $P$ and a monomial $m$, we denote by ${\partial_m P}$ the partial derivative of $P$ with respect to $m$. Our proof uses the notion of shifted partial derivatives of a polynomial as defined below.  

\begin{define}[\cite{GKKS12}]\label{def:shiftedderivative}
For an $n$ variate polynomial $P \in {\field{F}}[\overline{x}]$ and integers $k, \ell \geq 0$, the space of $\ell$ shifted $k^{th}$ order 
partial derivatives of $P$ is defined as
\begin{align}
 \langle \partial^{=k} P\rangle_{\leq \ell} \stackrel{def}{=} \field{F}\mhyphen span\{\prod_{i\in [n]}{x_i}^{j_i}\cdot g  :  \sum_{i\in [n]}j_i \leq \ell, g \in \partial^{=k} P\}
\end{align}
\end{define}
\noindent

\vspace{2mm}
\noindent
{\bf Nisan-Wigderson Polynomials: }\footnote{ This definition is a slight variant of the definition in~\cite{KSS13}. We modify the definition to ensure homogeneity. It is not hard to see that all the bounds proved for the original polynomial in~\cite{KSS13} hold for this variant also.} We will now define the family of polynomials introduced in~\cite{KSS13}. These polynomials were used to prove improved lower bounds for homogeneous $\spsp^{[t]}$ circuits. (In an earlier paper by Gupta et al~\cite{GKKS12}, a slightly weaker lower bound was shown for the permanent.) For a prime power $n$, let $\F_n$ be a field of size $n$. For the set of $n^2$ variables $\{x_{i,j} : i, j \in [n]\} $ and $t \in [n]$, we define the degree $n$ homogeneous polynomial $NW_{t,n}$ as 

$$NW_{t,n} = \sum_{\substack{f(z) \in \F_n[z] \\
                        deg(f) < \lfloor \frac{n}{2t} \rfloor}} \prod_{i \in [n]} x_{i,f(i)}$$
Clearly, for every $n$ and $t$, $NW_{t,n}$ is in $\VNP$. The Nisan-Wigderson polynomial family $\{NW_{n}\}_n$ is a family of polynomials in $\VNP$ such that $NW_{n}$ is a polynomial of degree $n+1$ in $n^2+n$ variables $\{x_{i,j} : i, j \in [n] \} \cup \{y_i : i \in [n]\}$ defined as follows

$$NW_{n} = \sum_{i = 1}^n y_i\cdot NW_{i,n}$$

For more on arithmetic circuits, we refer the interested reader to the survey by Shpilka and Yehudayoff~\cite{SY10}.  
%\section{Notations}~\label{sec:notations}

%\vspace{2mm}
%\noindent
%\section{Proof of Theorem~\ref{thm:mainthm}}\label{sec:proof}
%%%%%%%%%%%%%%%%%%%%%%%%%%%%%%%%%%%%%%%%%%%%%%%%%%%%%%%%%%%TECH PART OF PAPER1 BEGINS HERE
\section{Lower bounds for $\spsp(r)$ circuits, $r = o(\log n)$}\label{sec:lbproof}
%\subsection{$\spsp(r)$ circuits for $r = o(\log n)$}

%\paragraph{Significance}
%We explore the limits of computation of depth 4 homogeneous circuits when the restriction for the bottom fan-in is removed. 
%For the general model of $\spsp$ circuits, only very weak lower bounds seem to be known. Even PIT for $\spsp$ circuits is known only when the top fan-in is constant and the circuit is multilinear (in the multilinear case, the degree of the polynomials computed must anyway be bounded by the number of variables, and hence, multilinearity is a much bigger restriction than homogeneity\footnote{In all the results of this paper, the restriction of homogeneity can be replaced by the restriction that all gates in the circuit compute polynomials of degree at most $n$.}). The problem of showing lower bounds for depth 4 circuits with bounded top fan-in is hence a problem that is simpler than derandomizing PIT for the same model (at least in the black box model), and it seems to be the first crucial step in that direction. Moreover even when the top fan-in is 2, prior to this work there were no lower bounds known. 

 %We are able to do a careful and slightly subtle transformation to the circuit, simialr in spirit to depth reduction techniques of Tavenas in order to convert the circuit into a format which allows 

In earlier works by Gupta et al~\cite{GKKS12} and Kayal et al~\cite{KSS13}, exponential lower bounds were shown for the class of homogeneous $\spsp$ circuits with bounded bottom fan-in.  Without the restriction on the bottom fan-in, basically no lower bounds for $\spsp$ circuits are known. In this section we prove the first super polynomial lower bounds for homogeneous$\spsp$ circuits with bounded top fan-in. 
The main technical core of our result is a {\it depth reduction} result, very similar in spirit to those by Koiran~\cite{koiran} and Tavenas~\cite{Tavenas13}. By exploiting the structure of these circuits, we show how   to get {\it improved depth reduction} for these circuits. The proof of our depth reduction is quite different from that of~\cite{koiran, Tavenas13}, and is somewhat subtle. We don't reduce to a $\spsp^{[t]}$ circuit, but a related and slightly more general class of circuits where instead of requiring an absolute bound on the bottom fan-in, we just require that in some sort of average sense, the bottom fan-in is small. In particular we reduce to a $\spsp$ circuit in which the sum of degrees of any $\epsilon n/t$ product gates at the bottom is at most $\epsilon n$. This is a more refined notion and a slightly more general class of circuits than $\spsp^{[t]}$ circuits. We observe that the shifted partial derivative technique does not distinguish between these two kinds of circuits, and thus we are still able to use a variant of Theorem~\ref{thm:gkks} to obtain our lower bounds. Thus in spirit we still get depth reduction. In fact everywhere in this paper we could replace $\spsp^{[t]}$ circuits with this slightly more general class of circuits, and none of the results would be affected.

 There seem to be two main obstacles in extending the lower bounds of~\cite{GKKS12, KSS13} to lower bounds for general  depth 4 homogeneous $\spsp$ circuits. The lower bounds in~\cite{GKKS12, KSS13} work only when the degrees of {\it all} polynomials feeding into the product gate at the second layer are small (in other words, the bottom fan in is small), say $\leq \sqrt n$.  It is easy to see that if the degrees of all polynomials feeding into the product gate at the second layer is large (i.e. the bottom fan-in of all the gates is large), say $\geq \sqrt n$, then for sparsity reasons and simple monomial counting, it is easy to obtain exponential lower bounds. The first obstacle is to handle the case when the degrees of some of the polynomials is small and for some of them it is large. For instance fix any arbitrary sequence $\mathcal D$ of degrees summing to $n$, and assume that the polynomials feeding into each product gate at the second from top layer have their degrees coming from this sequence. Is it still possible to obtain exponential lower bounds? The second obstacle to extending the results from~\cite{GKKS12} is to find a way to combine the lower bounds for all these various cases into a common lower bound for the case when the circuit is composed of product gates of different kinds. For instance we know lower bounds when all product gates at the second layer have small incoming degrees and when all product gates have large incoming degrees. However we do not know how to combine these lower bounds into a single lower bound when the circuit is the sum of two circuits, one of the low degree kind, and one of the high degree kind. In this paper we show how to resolve the first obstacle. Moreover when the top fan-in is $o(\log n)$, the second obstacle turns out to not be a problem either.
 % This result is formally stated as Theorem~\ref{thm:lb2} below. As a consequence of our results we also obtain lower bounds for homogeneous $\Sigma\Pi\Sigma\Pi(r)$ circuits, stated formally as Theorem~\ref{thm:lb1} below. 

\paragraph{Proof Overview:} 
Most lower bounds for arithmetic circuits proceed by identifying some kind of ``progress measure", and show that for any given circuit in a circuit class, the measure is small if the size of the circuit is small, whereas for the polynomial one is trying to compute (for instance the permanent), the measure is large. In the results by Gupta et al.~\cite{GKKS12} and Kayal et al~\cite{KSS13}, the progress measure used is the dimension of the $\ell$ shifted $k^{th}$ order partial derivative $\dim(\langle \partial^{=k} P\rangle_{\leq \ell})$, for a suitable choice of $\ell$ and $k$. It is shown that every small depth $4$ circuit with bounded bottom fan-in has small $\dim(\langle \partial^{=k} P\rangle_{\leq \ell})$ compared to that of an explicit polynomial in $\VNP$, the $NW_n$ polynomial.  Thus if a depth $4$ circuit with bounded bottom fan-in must compute $NW_n$, then it must be large.  More precisely it is shown that every product gate $Q_i = \prod_{j = 1}^d Q_{ij}$ has $\dim(\langle \partial^{=k} P\rangle_{\leq \ell})$ much smaller than that of the permanent, provided the degrees of the $Q_{ij}$ are small. This is the core of the argument. Combined with the sub-additivity of $\dim(\langle \partial^{=k} P\rangle_{\leq \ell})$, the result easily follows. 

Our proof builds upon the results of~\cite{GKKS12} and~\cite{KSS13}, and combines the use of the progress measure $\dim(\langle \partial^{=k} P\rangle_{\leq \ell})$ with the notion of ``sparsity" to prove our improved depth reduction and the lower bounds for the polynomial family $\{NW_n\}_n$. 
Suppose $C = \sum_{i=1}^{r}\prod_{j=1}^{d_i}{Q_{ij}}$ is a homogeneous $\spsp$ circuit computing $NW_n$. If all the $Q_{ij}$ had low degree, then the results of~\cite{GKKS12} and~\cite{KSS13} give exponential lower bounds for the top fan-in of $C$. Also in the extreme case where all the $Q_{ij}$ have high degree, then since $C$ is homogeneous, the number of $Q_{ij}$ per product gate $Q_i = \prod_{j = 1}^d Q_{ij}$ must be small, and hence their product cannot have too many monomials\footnote{The number of monomials in each $Q_{ij}$ is a most the size of the circuit.}. If the number of monomials is too few, we would not even be able to get all the monomials in $NW_n$. In general, of course there might be some high degree and some low degree polynomials, and we attempt to interpolate between the two settings to obtain our results.   

For each product gate $Q_i = \prod_{j = 1}^{d_i} Q_{ij}$, recall that each $Q_{ij}$ is a homogeneous polynomial of degree $d_{ij}$ (say), and $\sum_{j = 1}^{d} d_{ij} = n$. If the size of the circuit is at most $s$, then each $Q_{ij}$ has at most $s$ monomials. We decompose each product gate into its inputs $Q_{ij}$ of {\it high degree}  (those of degree $\geq t$) and its inputs $Q_{ij}$ of {\it low degree} (those of degree $< $t). Observe that there cannot be too many (greater than $n/t$) high degree polynomials $Q_{ij}$ as otherwise their product would have degree
exceeding $n$. Thus the product of all the high degree $Q_{ij}$ cannot have more than $s^{n/t} $ monomials. Let $H$ be the product of the the high degree $Q_{ij}$, and $L$ be the product of the low degree $Q_{ij}$. Then, by writing out $H$ as a sum of monomials ($H = \sum_k h_k$) and multiplying each monomial $h_k$ with $L$, we can expand out $Q$ as $\sum_k h_k\cdot L$. Note that $L$ is a product of low degree polynomials. Also, each $h_k$ is a monomial and hence a product of degree $1$ polynomials. Thus we have expressed $Q$ as a $\spsp^{[t]}$ circuit, where now all the product gates multiply polynomials of degree at most $t$. 

The hope at this point would be to apply this transformation to all the product gates and then possibly apply the result in~\cite{KSS13}  to obtain a lower bound. The trouble with this argument is that under the transformation described, the top fan-in of the original circuit might blow up by a factor equaling the number of monomials in $H$, which could be nearly as large as $s^{n/t}$. With this loss in parameters, the bound given by the \cite{KSS13} result gives nothing nontrivial. Thus in general one cannot choose an absolute threshold $t$ and for all product gates choose degrees greater than $t$ to be the high degree polynomials and the ones below $t$ to the the low degree polynomials. 

What we show is that by examining the degrees of the polynomials feeding into the product gates, one can carefully choose a threshold $t$ that works for each product gate individually, though it might not be the same threshold for all gates. It turns out that this threshold that we find is purely a function of the degree sequence $\mathcal D$ of the product gate. Thus if all product gates have the {\it same} degree sequence, i.e. we have a $\Sigma\Pi^{\mathcal D}\Sigma\Pi$ circuit, then we obtain exponential lower bounds. However, for general $\spsp$ circuits it can be a problem, since if the threshold is different for different gates, we do not have any one single progress measure that works for all gates and thus for the entire circuit. However we are still able to show that for each gate, only very few thresholds are ``bad", and when the top fan-in is $o(\log n)$, then we show there is a single threshold that will work for all gates to give superpolynomial lower bounds. 

\subsection{Proof of Theorem~\ref{thm:lb1}}
In this subsection, we will present the proof of Theorem~\ref{thm:lb1}. Let us consider a homogeneous $\spsp(r)$ circuit $C$ of size $s$ computing ${NW}_n$. From Equation~\ref{def:model}, this implies that 
\begin{equation}
 NW_n = \sum_{i=1}^r\prod_{j=1}^{d_i}Q_{ij}
\end{equation}
 where for every value of $i$ and $j$, $Q_{ij}$ is a homogeneous polynomial being computed by a subcircuit of depth $2$ of $C$. Observe that
$Q_{ij}$ is being computed by a $\sp$ circuit and hence, the number of monomials with nonzero coefficients in a sum of products expansion of $Q_{ij}$ will 
be at most the size of $C$. In other words, $Q_{ij}$ is $s$ sparse for each $i\in[r]$ and $j\in[d_i]$. Without loss of generality, we will assume that 
for every $i\in[r]$, $d_i = n$, since if $d_i<n$ for any $i$, we can always make it equal to $n$ adding dummy polynomials that are the constant $1$. 

Let us now consider the polynomial computed at a product gate near the top of $C$. It is of the form $Q= \prod_{i\in[n]} Q_i$. 
%Without loss of generality, let us assume that there are $n$ such $Q_i$, since if there were fewer, then we could add on some dummy $Q_i$ which
%are the identically $1$ polynomial. This will not have any impact on the argument.
Let us also assume without loss of generality that the $Q_i$ are arranged in non-increasing order of their degrees. 
The idea of the proof, as described in the overview,  would be to decompose the $Q_i$ into {\it high degree} and {\it low degree} parts 
and then multiply out all the {\it high degree} parts and  count on their
sparsity to show that the product does not blow up the dimension of the space of shifted partial derivatives by too much. We will then use the following lemma implicit in the work of~\cite{GKKS12}, to obtain our bounds.

\begin{lem}[Implicit in~\cite{GKKS12}]~\label{lem:lowdeg}
Let $P = \prod_{i = 1}^d \tilde{P_i}$ be a polynomial in $N$ variables such that the sum of the degrees of any $k$ of these $d$ polynomials $\tilde{P_1}, \tilde{P_2}, \ldots, \tilde{P_d}$ is at most $D$. Then, for every integer $\ell \geq 0$,
$$\dim(\langle \partial^{=k} f\rangle_{\leq \ell}) \leq {d+ k-1 \choose k}{N + D-k + \ell \choose N}.$$
\end{lem}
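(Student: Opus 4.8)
The plan is to bound the dimension of $\langle \partial^{=k} P\rangle_{\leq \ell}$ by analyzing the structure of an order-$k$ derivative of a product. First I would observe that by the product rule, any $k$-th order partial derivative of $P = \prod_{i=1}^d \tilde P_i$ is an $\F$-linear combination of terms of the form $\prod_{i=1}^d \tilde P_i^{(\alpha_i)}$, where $\alpha_i$ records how many of the $k$ differentiations landed on the factor $\tilde P_i$ (with $\sum_i \alpha_i = k$), and $\tilde P_i^{(\alpha_i)}$ denotes a partial derivative of order $\alpha_i$ of $\tilde P_i$. In particular every element of $\partial^{=k}P$ lies in the span of polynomials of the shape $g \cdot \prod_{i \notin S} \tilde P_i$, where $S = \{i : \alpha_i > 0\}$ has size at most $k$, and $g$ is a product $\prod_{i \in S}\tilde P_i^{(\alpha_i)}$ of derivatives of the factors indexed by $S$.

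The key estimate is then to count how many distinct monomials can appear across all such products $g \cdot \prod_{i \notin S}\tilde P_i$, after an additional shift by a monomial of degree $\le \ell$. The first factor to control is the number of choices of which factors get differentiated: the number of ways to distribute $k$ derivative-orders among $d$ factors is ${d+k-1 \choose k}$ (the number of $\alpha \in \N^d$ with $\sum \alpha_i = k$). The second factor is the dimension of the shifted span of the polynomials $g \cdot \prod_{i\notin S}\tilde P_i$ for a fixed such distribution. Here I would use the degree bound: $g$ has degree at most $\sum_{i\in S}(\deg \tilde P_i - \alpha_i) = \bigl(\sum_{i\in S}\deg \tilde P_i\bigr) - k \le D - k$, since $|S| \le k$ and the sum of degrees of any $k$ of the $\tilde P_i$ is at most $D$. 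Meanwhile $\prod_{i\notin S}\tilde P_i$ is a single fixed polynomial, so the product $g\cdot \prod_{i\notin S}\tilde P_i = \bigl(\prod_{i\notin S}\tilde P_i\bigr)\cdot g$ ranges, as $g$ varies, over multiples of a fixed polynomial by polynomials of degree $\le D-k$. Shifting by monomials of degree $\le \ell$ on top of that, every resulting polynomial is divisible by $\prod_{i\notin S}\tilde P_i$ and the "cofactor" has degree at most $D-k+\ell$; hence all these polynomials lie in the span of $\bigl(\prod_{i\notin S}\tilde P_i\bigr) \cdot m$ over monomials $m$ in $N$ variables of degree $\le D-k+\ell$, a space of dimension at most ${N + D - k + \ell \choose N}$ (the number of such monomials). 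Multiplying the two counts gives the claimed bound ${d+k-1 \choose k}{N+D-k+\ell \choose N}$.

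The main obstacle, and the point that needs care, is the degree bookkeeping in the step above: one must argue that for \emph{every} valid multi-index $\alpha$ with support $S$, the polynomial $g = \prod_{i\in S}\tilde P_i^{(\alpha_i)}$ indeed has degree at most $D-k$, which relies precisely on the hypothesis that the sum of degrees of any $k$ of the $\tilde P_i$ is bounded by $D$ (applied to a set $S$ of size $|S|\le k$, padding with extra indices if $|S|<k$ — note padding only increases the sum, so the bound $D$ still applies to $S$ itself) together with the fact that taking an $\alpha_i$-th derivative drops the degree by exactly $\alpha_i$ and $\sum_{i\in S}\alpha_i = k$. A secondary subtlety is making sure the "shift then divide" argument is valid: since $\prod_{i\notin S}\tilde P_i$ is a genuine polynomial factor of each term (not just a degree bound), each shifted derivative term is a polynomial multiple of it, so the span is contained in $\prod_{i\notin S}\tilde P_i$ times the space of polynomials of degree $\le D-k+\ell$; this is where homogeneity/degree-exactness is convenient but not strictly needed. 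Everything else is routine monomial counting, and the sub-additivity of $\dim(\langle \partial^{=k}\cdot\rangle_{\le\ell})$ over sums is not needed for this lemma itself (it enters only later when summing over the $r$ product gates).
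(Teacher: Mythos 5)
Your argument is correct and is essentially the calculation from Gupta--Kamath--Kayal--Saptharishi that the paper cites verbatim (with $tk$ replaced by $D$), which the paper itself does not spell out. You have filled in exactly the right details: the product-rule decomposition into multi-indices $\alpha$ (giving the ${d+k-1\choose k}$ factor), the observation that for each fixed $\alpha$ with support $S$ the derivative term is $\prod_{i\notin S}\tilde P_i$ times a polynomial of degree at most $D-k$, and the padding remark that handles $|S|<k$; this is the intended proof.
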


\begin{proof}
The proof of the lemma is exactly the same calculation as in~\cite{GKKS12}. We replace their bound of $tk$ (which for them was the sum of degrees of $k$ polynomials of degree at most $t$), by our bound of  $D$. 
\end{proof}

The following lemma is the core of our argument. 

\begin{lem}\label{lem:sps}
Let $Q = \prod_{j\in [n]} Q_i$ be a depth 3 $\Pi\Sigma\Pi$ homogeneous circuit of degree $n$ in $N$ variables, where each $Q_i$ has at most $s$ monomials. Let $0< \epsilon < 1$ be any small constant. 
Consider $k = n^{i/m}$, for $1 \leq i \leq m$ and any integer $\ell \geq 0$. Then for all but $1/\epsilon$ choices of $i$ , 
$$\dim(\langle \partial^{=k} Q\rangle_{\leq \ell}) \leq s^{k\cdot n^{-1/m}}\cdot {k/\epsilon+ k-1 \choose k}{N + 4\epsilon n-k + \ell \choose N}.$$
\end{lem}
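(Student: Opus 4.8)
The plan is to carry out the high/low degree decomposition described in the overview, but with a threshold $t$ chosen as a function of $k$ (equivalently of the index $i$), and to show that for all but a bounded number of choices of $i$ the threshold can be picked so that the product of the high-degree factors is cheap. Concretely, fix a product gate $Q = \prod_{j\in[n]} Q_j$ with the $Q_j$ sorted in non-increasing order of degree, write $D_j = \deg(Q_j)$, and for a parameter $u \geq 1$ declare $Q_j$ \emph{high degree} if $D_j \geq n^{1/m}\cdot(\text{something})$; more precisely I would look at the ``$k$-scale'' $k = n^{i/m}$ and use threshold roughly $t = n^{(i+1)/m}$, so that the high-degree factors are those of degree at least $t$. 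Let $H$ be the product of the high-degree factors and $L$ the product of the low-degree ones, so $Q = H\cdot L$. Since the degrees sum to $n$, there are at most $n/t = n^{1-(i+1)/m} = n^{-1/m}\cdot k^{-1}\cdot n$ wait — at most $n/t$ high-degree factors, and I will track this count carefully; the point is $H$ is a product of few $s$-sparse polynomials, hence $H$ has at most $s^{n/t}$ monomials. Expanding $H = \sum_\nu h_\nu$ into monomials gives $Q = \sum_\nu h_\nu L$, a sum of at most $s^{n/t}$ products in which every factor has degree at most $t$ (the $h_\nu$ split into degree-$1$ factors, and $L$'s factors have degree $< t$ by construction).

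Next I would apply Lemma~\ref{lem:lowdeg} to each summand $h_\nu L$. Each such summand is a product of polynomials, the sum of the degrees of any $k$ of which is at most $D := kt$ provided $k$ is at least the number of degree-$1$ factors coming from $h_\nu$ — i.e. one has to be slightly careful when $\deg(H)$ is comparable to $k$, but since we chose $k$ at the lower scale and $t$ at the next scale, $\deg(H) \le n$ and the relevant bound on the sum of degrees of any $k$ factors is $\le kt = k\cdot n^{(i+1)/m} = k\cdot k\cdot n^{1/m}$, which after the right choice of scale I want to be $\le 4\epsilon n$. Here is where the averaging over $i$ enters: I want $kt \le 4\epsilon n$, i.e. roughly $n^{i/m}\cdot n^{(i+1)/m} \le 4\epsilon n$, which holds for all $i$ below some cutoff; the ``bad'' indices are those in a window near $i \approx m/2$, and the width of this window is what must be bounded by $1/\epsilon$. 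Actually the cleaner way is the standard ``telescoping degree'' argument: consider the $m$ scales $n^{1/m}, n^{2/m}, \dots, n$; the quantities $\sum_{j : D_j \in (\text{scale } i, \text{scale } i{+}1]} D_j$ over $i$ sum to at most $n$, so for all but $1/\epsilon$ values of $i$ this sum is at most $\epsilon n$. For such a good $i$, the factors of degree in the band between consecutive scales contribute little, so we can afford to absorb the degree-$\ge t$ part: the number of high-degree factors (degree $> n^{(i+1)/m}$) times nothing — rather, the total degree sitting above the threshold, call it $d_{\text{hi}}$, satisfies a bound that forces $s^{d_{\text{hi}}/t}$-type control. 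I would set things up so that a good index yields simultaneously (a) the product of the factors above the threshold has at most $s^{k n^{-1/m}}$ monomials, and (b) the sum of degrees of any $k$ of the remaining (below-threshold) factors is at most $4\epsilon n - k + \ell$-friendly, i.e. at most $4\epsilon n$.

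Then, summing the bound of Lemma~\ref{lem:lowdeg} over the at most $s^{k n^{-1/m}}$ monomials of $H$, and using subadditivity of $\dim(\langle \partial^{=k}\cdot\rangle_{\leq\ell})$ over this sum, gives
$$\dim(\langle \partial^{=k} Q\rangle_{\leq \ell}) \leq s^{k\cdot n^{-1/m}}\cdot \binom{d'+k-1}{k}\binom{N + 4\epsilon n - k + \ell}{N},$$
where $d'$ is the number of factors of $h_\nu L$ feeding the product; since $L$ has at most $n/\epsilon$-ish, actually at most $n$ factors and $h_\nu$ contributes at most $n$ degree-$1$ factors, I would argue $d' \le k/\epsilon$ by choosing the scales so that below-threshold factors have degree $\ge \epsilon k / k = $ — more honestly, on a good index the below-threshold factors each have degree at least $n^{i/m} \cdot$(a constant), no: each below-threshold \emph{relevant} factor has degree at least the previous scale $n^{(i-1)/m} \ge k\cdot n^{-1/m}$, and together with $\deg \le 4\epsilon n$ this bounds their number by $4\epsilon n / (k n^{-1/m}) = 4\epsilon n^{1/m}\cdot n / k$ — I will need to reconcile this with the claimed $k/\epsilon$, presumably by also folding the degree-$1$ factors from $h_\nu$ into the count and using $\deg(H)\le k/\epsilon$ on good indices.

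The main obstacle I anticipate is precisely this bookkeeping: choosing the threshold $t$ as a function of the scale $i$ so that \emph{both} $s^{(\deg H)/t}\le s^{k n^{-1/m}}$ \emph{and} the effective number of factors after expansion is $\le k/\epsilon$ \emph{and} the effective total degree (sum of any $k$ remaining degrees) is $\le 4\epsilon n$ — all simultaneously, for all but $1/\epsilon$ values of $i$. The clean engine behind ``all but $1/\epsilon$ values'' is the pigeonhole/telescoping fact that the $m$ degree-bands have total degree $n$, so few of them can be heavy; the delicate part is translating a ``light band at scale $i$'' into the three quantitative guarantees above with the exact constants ($4\epsilon$, $k n^{-1/m}$, $k/\epsilon$) claimed in the statement. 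I would handle this by defining, for each $i$, the set of factors with degree in $(n^{(i-1)/m}, n^{i/m}]$, calling $i$ good if the total degree of factors with degree $> n^{i/m}$ divided by $n^{i/m}$ — hmm, rather calling $i$ good if the total degree contributed in band $i$ is $\le \epsilon n$, noting at most $1/\epsilon$ bad $i$, and then for a good $i$ taking $t = n^{i/m}$, putting the degree-$>t$ factors into $H$ (their count is $\le n/t = n^{1-i/m}$, so $\deg H \le n$ and monomials $\le s^{n^{1-i/m}}$; I'd then argue $n^{1-i/m}\le k n^{-1/m}$ fails in general, so the threshold must instead be the \emph{next} scale and the ``good'' condition must control the \emph{cumulative} tail — this is the subtle point the authors surely resolve with a more careful choice). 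Once the index-selection lemma is pinned down with the right thresholds, the rest is a mechanical application of Lemma~\ref{lem:lowdeg} plus subadditivity, exactly as in~\cite{GKKS12, KSS13}.
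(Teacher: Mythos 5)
Your high-level plan (sort the factors, split into a ``high'' part $H$ and a ``low'' part $L$, expand $H$ monomial by monomial, apply Lemma~\ref{lem:lowdeg} to each summand, use subadditivity, and argue via averaging that at most $1/\epsilon$ scales are bad) matches the paper's. But the concrete decomposition you propose has a genuine gap that you yourself flag near the end, and you do not close it.

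The problem is your choice of a \emph{degree} threshold for the high/low split. You put into $H$ the factors of degree $\geq t$ with $t \approx n^{(i+1)/m}$, and then bound the number of monomials of $H$ by $s^{n/t}$. That exponent $n/t$ is not $\leq k\cdot n^{-1/m} = n^{(i-1)/m}$ in general: one needs $n^{1-(i+1)/m} \leq n^{(i-1)/m}$, i.e. $i \geq m/2$, so for the lower half of the scales the bound $s^{k n^{-1/m}}$ simply fails, and no choice of degree threshold at scale $i$ fixes this. You notice exactly this (``$n^{1-i/m}\leq k n^{-1/m}$ fails in general'') but leave it as ``the subtle point the authors surely resolve.'' The paper's resolution is to replace the degree threshold with a \emph{rank} (count) threshold: after sorting $Q_1,\dots,Q_n$ in non-increasing degree order, let $H$ be the product of the first $n^{(i-1)/m}$ factors, period. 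Then $H$ is a product of exactly $n^{(i-1)/m} = k\cdot n^{-1/m}$ $s$-sparse polynomials and hence has at most $s^{k n^{-1/m}}$ monomials \emph{by construction}, regardless of how large their degrees are. The averaging is then over \emph{index bands}, not degree bands: define $D_i := \sum_{\,n^{(i-1)/m} < j \leq n^{i/m}} \deg Q_j$. These sum to $n$, so at most $1/\epsilon$ indices $i$ have $D_i \geq \epsilon n$. For a good $i$, the $k - k n^{-1/m}$ largest remaining factors are precisely those with index in $(n^{(i-1)/m}, n^{i/m}]$, whose degrees total at most $\epsilon n$; combining this with the grouping of tiny factors into blocks of degree $\Theta(\epsilon n / k)$ (so at most $k/\epsilon$ blocks, and any $k$ blocks have total degree $\leq 4\epsilon n$) makes Lemma~\ref{lem:lowdeg} applicable with $d \leq k/\epsilon$ and $D = 4\epsilon n$. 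With the rank-based cut, all three quantities you were struggling to control simultaneously fall out cleanly, and the rest of the proof is the mechanical subadditivity step you describe.
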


\begin{proof}
Since the $Q_i$'s are arranged in order of decreasing degree, $Q_1$ has highest degree and ${\cal Q}_n$ has the smallest degree. 

For $1 \leq i \leq m$, let $S_i = \{Q_j | j \leq n^{i/m}\} $ be the set of the first $n^{i/m}$ of the  $Q_j$'s. For each $i$, we will sum the degrees of the $Q_j$'s in $S_i \setminus S_{i-1}$. Let $$D_i = \sum_{j \mbox{ s.t.} Q_j \in S_i \setminus S_{i-1} } \deg(Q_j).$$

Then $\sum_{i=1}^{m} D_i = n$. Thus there are at most $1/\epsilon$ choices of $i$ for which $D_i \geq \epsilon n$. We will show that for all other choices of $i$, for $k = n^{i/m}$ and any integer $\ell \geq 0$, $\dim(\langle \partial^{=k} Q\rangle_{\leq \ell}) \leq s^{k\cdot n^{-1/m}}\cdot {k/\epsilon+ k-1 \choose k}{N + 4\epsilon n-k + \ell \choose N}.$

Let us fix $i$ such that $D_i \leq \epsilon n$. We will split up the various $Q_j$'s into those that are in $S_{i-1}$ and those that are not. For those $Q_j$ in $S_{i-1}$, we will exploit the fact that there aren't too many of them and they each have at most $s$ monomials, to show that they do not affect the dimension of shifted partial derivatives by too much. For the rest of the $Q_j$ we will take advantage of the fact that their degrees are not too large, and hence the sum of degrees of any $k$ of them is small, and thus we will be able to bound the span of shifted partial derivatives of their product using the argument presented in~\cite{GKKS12}. 

Let $H= \prod_{Q_j \in S_{i-1}} Q_j$, and let $Q_{\bar H} = Q/H$.  Since each $Q_i$ has at most $s$ monomials, thus $H$ has at most $s^{n^{(i-1)/m}}$ monomials. Hence we can express the polynomial $Q$ as the sum of at most $s^{n^{(i-1)/m}}$ polynomials $P_1, P_2, \ldots P_u$, where each of the polynomials is the product of some monomial (from $H$), and the product of all the $Q_j$ that are not in  $S_{i-1}$ (i.e. those in $Q_{\bar H}$) .
% Call these product gates $P_1, P_2, \ldots P_m$, where $m \leq s^{n^{(r-1)\epsilon}}$.

We will show that for each $P_j$, $1\leq j \leq u$, for $k = n^{i/m}$, $$\dim(\langle \partial^{=k} P_j\rangle_{\leq \ell}) \leq {k/\epsilon + k-1 \choose k}{N + 4\epsilon n-k + \ell \choose N}.$$ Since $u$ is at most the number of monomials in $H$, thus $u \leq s^{n^{(i-1)/m}} = s^{k\cdot n^{-1/m}}$. 
Since $Q = \sum_{j \in [u]} P_j$, the sub-additivity of $\dim(\langle \partial^{=k} \rangle_{\leq \ell})$ will imply that 
$$\dim(\langle \partial^{=k} Q\rangle_{\leq \ell}) \leq s^{k\cdot n^{-1/m}}\cdot {k/\epsilon+ k-1 \choose k}{N + 4\epsilon n-k + \ell \choose N}.$$

Let us focus our attention on any one of these polynomials $P_j$, and call it $P$.

Then $P = h\cdot Q_{\bar H} = h\cdot\prod_{j \geq n^{(i-1)/m} + 1 } Q_j$, where $h$ is a monomial of $H$ and can be thus written as a product of degree one homogeneous polynomials. Let us rename the degree $1$ polynomials in $h$ and the different $Q_j$ dividing $Q_{\bar H}$, so that $P = \hat P_{1}\hat P_{2}\cdots \hat P_{\ell}$. 

Consider all the polynomials $\hat P_i$ dividing $P$ which have degree at most $\epsilon n/k$, and group them together and multiply them so that each of the grouped polynomials now has degree at least $\epsilon n/k$ and at most $2\epsilon n/k$. Clearly this can be done. Call the new set of polynomials (the grouped ones and the ones that had degree at least $\epsilon n/k$ to start out with) $\tilde{P_1}, \tilde{P_2}, \ldots, \tilde{P_d}$. Since the sum of their degrees is at most $n$, thus the total number $d$ of these polynomials is at most $ k/\epsilon$.

\begin{prop}
The sum of the degrees of any $k$ of these $d$ polynomials $\tilde{P_1}, \tilde{P_2}, \ldots, \tilde{P_d}$ is at most $4\epsilon n$. 
\end{prop}
\begin{proof}
Out of the $k$ polynomials, we see what fraction lie among the ``grouped" polynomials, and what lie among the original ungrouped polynomials. 
Recall that by the choice of $i$, and setting $k = n^{\frac{i}{m}}$, the sum of degrees of any $k-kn^{\frac{-1}{m}}$ of the $\hat P_{i}$ dividing $P$ was 
at most $\epsilon n$. Since $m$ is $o(\log n)$, the sum of the degrees of any $k$ of them will be at most $2\epsilon n$. Thus, the contribution from the original ungrouped polynomials is at most $2\epsilon n$. 
Also, the contribution from the grouped polynomials can be at most $2\epsilon n$ since there are at most $k$ of them, and each has degree at 
most  $2\epsilon n/k$.  Thus the total sum of degrees is at most $4\epsilon n$.
\end{proof}

Thus, $P= \prod_{i = 1}^d\tilde{P_i}$ is a polynomial in $N$ variables such that the sum of the degrees of any $k$ of the $d$ polynomials $\tilde{P_1}, \tilde{P_2}, \ldots, \tilde{P_d}$ is at most 
$D = 4\epsilon n$. Recall also that $d \leq k/\epsilon$. Hence, by Lemma~\ref{lem:lowdeg}, for any integer $\ell \geq 0$,
$$\dim(\langle \partial^{=k} P\rangle_{\leq \ell}) \leq {k/\epsilon + k-1 \choose k}{N + 4 \epsilon n-k + \ell \choose N}.$$

%Also recall that $D_r \leq \epsilon n$. This implies that the sum of the $n^{r\epsilon}- n^{(r-1)\epsilon}$ choices of the  $Q_i$ of the highest degree  dividing $P$ is at most $\epsilon n$. Let $k = n^{r\epsilon}/2$. Then the sum of degrees of any $k$ of the $\hat P_{i}$ dividing $P$ is at most $\epsilon n$. 

%Let $t = 2\epsilon n/k$. For now $t$ is just a parameter- it does not have any significance or interpretation as it does in the GKKS paper. For instance the degrees of the $\hat P_{i}$ dividing $P$ is not bounded above by $t$. However as we shall see, in spite of this $t$ will play the same role as it did in the GKKS paper. 
 
%Consider all the polynomials $\hat P_i$ dividing $P$ which have degree at most $t/2$, and group them together and multiply them so that each of the grouped polynomials now has degree at least $t/2$ and at most $t$. Clearly this can be done. Call the new set of polynomials (the grouped ones and the ones that had degree at least $t/2$ to start out with) $\tilde{P_1}, \tilde{P_2}, \ldots, \tilde{P_d}$. Since the sum of their degrees is at most $n$, this the total number $d$ of these polynomials is at most $2n/t = k/\epsilon$.

% and $r$ is such that $\prod_{i = n^{(r-1)\epsilon}}^{n^{r\epsilon}} Q_i$ has degree at most $\epsilon n$, then for $k = n^{r\epsilon}/2$ and $d = k/\epsilon$, 

%$$\dim(\langle \partial^{=k} Q\rangle_{\leq \ell}) \leq s^{n^{(r-1)\epsilon}}\cdot {d+ k-1 \choose k}{N + 3\epsilon n-k + \ell \choose N}.$$

\end{proof}

\begin{thm}\label{thm:choosingk}
Let $C$ be a homogeneous $\Sigma\Pi\Sigma\Pi(r)$ circuit in $N$ variables, of size $s$ and of degree at most $n$. Then for all constants $\epsilon$, with $0< \epsilon < 1$, there exists a choice of $i$, with  $1 \leq i \leq 2r/\epsilon$, such that for $k = n^{\epsilon i/2r} $, 
and for all integers $\ell \geq 0$,  $$\dim(\langle \partial^{=k} C \rangle_{\leq \ell}) \leq r\cdot s^{k\cdot n^{-\epsilon/2r}}\cdot {k/\epsilon+ k-1 \choose k}{N + 4\epsilon n-k + \ell \choose N}.$$
\end{thm}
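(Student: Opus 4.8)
The plan is to combine Lemma~\ref{lem:sps}, applied to each of the $r$ product gates feeding the top $\Sigma$ gate, with the sub-additivity of the measure $\dim(\langle \partial^{=k} \cdot \rangle_{\leq \ell})$, using a counting/pigeonhole argument to find a single exponent $i$ that is "good" for all $r$ gates simultaneously. First I would set $m = 2r/\epsilon$ and apply Lemma~\ref{lem:sps} with parameter $\epsilon' = \epsilon/2$ (or some constant multiple of $\epsilon$, chosen so the constants in the final bound come out as stated) to each of the $r$ product gates $Q^{(1)}, \dots, Q^{(r)}$ of $C$. By that lemma, for each fixed gate $Q^{(a)}$ there are at most $1/\epsilon'$ values of $i \in \{1, \dots, m\}$ that are "bad" for it, i.e., for which the dimension bound fails. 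Taking the union over all $r$ gates, the total number of exponents $i$ that are bad for some gate is at most $r/\epsilon'$. Since $m = 2r/\epsilon = r/\epsilon'$, this does not immediately leave a good $i$ — so I would instead choose $m$ slightly larger, say $m = \lceil 2r/\epsilon \rceil + 1$ or absorb a factor of $2$, ensuring $m$ strictly exceeds $r/\epsilon'$; this guarantees at least one exponent $i^\ast \in \{1, \dots, m\}$ is simultaneously good for every product gate. (The statement allows $i$ to range up to $2r/\epsilon$, which gives the needed slack.)

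Having fixed such an $i^\ast$ and set $k = n^{\epsilon i^\ast / 2r} = n^{i^\ast/m}$, I would invoke Lemma~\ref{lem:sps} for each gate to get
$$\dim(\langle \partial^{=k} Q^{(a)}\rangle_{\leq \ell}) \leq s^{k\cdot n^{-\epsilon/2r}}\cdot {k/\epsilon+ k-1 \choose k}{N + 4\epsilon n-k + \ell \choose N}$$
for every $a \in [r]$ (here I would need to check that $k \cdot n^{-1/m} = k \cdot n^{-\epsilon/2r}$, which holds by the choice of $m$, and that the $4\epsilon' n \le 4 \epsilon n$ shift bound is compatible — a constant rescaling handles this). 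Then, since $C = \sum_{a=1}^r Q^{(a)}$, sub-additivity of the shifted-partial-derivative dimension gives
$$\dim(\langle \partial^{=k} C\rangle_{\leq \ell}) \leq \sum_{a=1}^r \dim(\langle \partial^{=k} Q^{(a)}\rangle_{\leq \ell}) \leq r\cdot s^{k\cdot n^{-\epsilon/2r}}\cdot {k/\epsilon+ k-1 \choose k}{N + 4\epsilon n-k + \ell \choose N},$$
which is exactly the claimed bound. A minor point to handle at the start: as in the proof of Theorem~\ref{thm:lb1}, each $Q_{ij}$ has at most $s$ monomials because it is computed by a $\Sigma\Pi$ subcircuit of $C$, so the hypothesis "$s$ sparse" needed by Lemma~\ref{lem:sps} is met, and we may pad each product gate to have exactly $n$ factors using constant-$1$ polynomials.

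The main obstacle is purely bookkeeping of constants: making sure the union bound over the $r$ gates still leaves a usable exponent $i$ within the stated range $1 \le i \le 2r/\epsilon$, while simultaneously ensuring the "bad set" size $1/\epsilon'$ per gate from Lemma~\ref{lem:sps}, the exponent arithmetic $k \cdot n^{-1/m}$, and the shift parameter $4\epsilon' n$ all line up with the $\epsilon$ appearing in the theorem statement. This is resolved by choosing the internal constant $\epsilon'$ in the application of Lemma~\ref{lem:sps} to be a suitable fixed fraction of $\epsilon$ (so that $r/\epsilon' < m$ strictly) and by noting that all the binomial and exponential factors are monotone, so replacing $\epsilon'$ by the larger $\epsilon$ only weakens the bound in the harmless direction. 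There is no genuinely hard step here — the real content was already extracted in Lemma~\ref{lem:sps}; Theorem~\ref{thm:choosingk} is the routine "glue the gates together" corollary, with the one subtlety being that a single choice of differentiation order $k$ must work for the whole circuit, which is exactly what the pigeonhole over the at-most-$1/\epsilon'$ bad exponents per gate delivers.
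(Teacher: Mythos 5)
Your overall strategy is exactly the paper's: apply Lemma~\ref{lem:sps} to each of the $r$ product gates $Q_1,\dots,Q_r$, pigeonhole to find a single exponent $i$ that is good for all of them simultaneously, and then close with sub-additivity of $\dim(\langle \partial^{=k}\cdot\rangle_{\leq\ell})$. However, your detour through a rescaled parameter $\epsilon' = \epsilon/2$ is both unnecessary and creates a bookkeeping error that your proposed patches do not actually repair. With $\epsilon'=\epsilon/2$, Lemma~\ref{lem:sps} gives at most $1/\epsilon' = 2/\epsilon$ bad indices per gate, so the union over $r$ gates gives $2r/\epsilon = m$ bad indices and hence no slack at all. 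Your suggested fix of enlarging $m$ breaks two other things simultaneously: the $s$-exponent from Lemma~\ref{lem:sps} is $k\cdot n^{-1/m}$, which \emph{exceeds} the target $k\cdot n^{-\epsilon/2r}$ once $m > 2r/\epsilon$, and the allowed range of $i$ then overruns the stated $1 \leq i \leq 2r/\epsilon$. Furthermore, your claim that ``replacing $\epsilon'$ by the larger $\epsilon$ only weakens the bound in the harmless direction'' is correct for the shift term $\binom{N+4\epsilon' n - k + \ell}{N}$ but goes the \emph{wrong} way for the first binomial: Lemma~\ref{lem:sps} yields $\binom{k/\epsilon' + k - 1}{k} = \binom{2k/\epsilon + k - 1}{k}$, which is \emph{larger} than the $\binom{k/\epsilon + k - 1}{k}$ in the theorem statement, so you cannot simply replace $\epsilon'$ by $\epsilon$ to recover the claimed bound.

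The resolution is to drop the rescaling entirely. Set $m = 2r/\epsilon$ and apply Lemma~\ref{lem:sps} to each gate with the \emph{same} $\epsilon$. Each gate then has at most $1/\epsilon$ bad indices, so the union has at most $r/\epsilon < m$ bad indices, leaving at least one good $i \in [m]$; with $k = n^{i/m}$, the lemma directly gives $s^{k n^{-1/m}} = s^{k n^{-\epsilon/2r}}$, $\binom{k/\epsilon + k - 1}{k}$, and $\binom{N + 4\epsilon n - k + \ell}{N}$ with no further adjustment. Summing over the $r$ gates completes the proof. In short: the needed slack already lives in the factor of $2$ inside $m = 2r/\epsilon$; there is no separate constant to tune.
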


\begin{proof}
Let $m = 2r/\epsilon$. Let $C = \sum_{j = 1}^r Q_j$. Let $i \in [m]$.

Then for each $Q_j$, by Lemma~\ref{lem:sps}, for all but $1/\epsilon$ choices of $i$, for $k = n^{i/m}$, 
$$\dim(\langle \partial^{=k} Q_j\rangle_{\leq \ell}) \leq s^{k\cdot n^{-1/m}}\cdot {k/\epsilon+ k-1 \choose k}{N + 4\epsilon n-k + \ell \choose N}.$$

Hence for each $Q_j$ we get at most $1/\epsilon$ choices of $i$ that may not work to get the bound above, and we call those choices ``bad" for $Q_j$.  We call the rest of the choices ``good" for $Q_j$.  Thus by the union bound there are at most $r/\epsilon$ choices of $i$ that are bad for some $Q_j$. Since $m > r/\epsilon$, thus there is a choice of $i \in [m]$ that is good for every $Q_j$.

Thus for any integer $\ell \geq 0$ and $k = n^{i/m}$, for all $j \in [r]$, $$\dim(\langle \partial^{=k} Q_j \rangle_{\leq \ell}) \leq s^{k\cdot n^{-1/m}}\cdot {k/\epsilon+ k-1 \choose k}{N + 4\epsilon n-k + \ell \choose N}.$$
 Hence  $$\dim(\langle \partial^{=k} C \rangle_{\leq \ell}) \leq r\cdot s^{k\cdot n^{-1/m}}\cdot {k/\epsilon+ k-1 \choose k}{N + 4\epsilon n-k + \ell \choose N}.$$
\end{proof}

We can observe that the choice of the threshold and $k$ for every product gate just depends upon the multiset of the degrees associated with the input feeding into it. In particular, if we start with a $\spsp^*$ circuit, then the value of the threshold and $k$ that works for one product gate also works for the circuit in general. Hence, we have the following theorem which gives us an upper bound on the dimension of the shifted partial derivative space of a $\spsp^*$ circuit.

\begin{thm}\label{thm:ubspspstar}
Let $C$ be a homogeneous $\Sigma\Pi\Sigma\Pi^*$ circuit in $N$ variables, of size $s$, top fan-in $r$ and of degree at most $n$. Then for all constants $\epsilon$, with $0< \epsilon < 1$, there exists a choice of $i$, with  $1 \leq i \leq m$, where $m = 1/\epsilon + 1$ such that for $k = n^{i/m} $, 
and for all integers $\ell \geq 0$,  $$\dim(\langle \partial^{=k} C \rangle_{\leq \ell}) \leq r\cdot s^{k\cdot n^{-1/m}}\cdot {k/\epsilon+ k-1 \choose k}{N + 4\epsilon n-k + \ell \choose N}.$$
\end{thm}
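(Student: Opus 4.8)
The plan is to observe that Theorem~\ref{thm:choosingk} already does almost all the work, and that the only reason it needs to enlarge the parameter $m$ to $2r/\epsilon$ (rather than the $1/\epsilon + 1$ claimed here) is to defeat the union bound over the $r$ product gates. For a $\spsp^*$ circuit this union bound is unnecessary: by definition, every product gate $Q_j$ has the \emph{same} associated multiset of degrees $(D_i, m_i)$. The whole point of Lemma~\ref{lem:sps}, as noted in the remark immediately preceding this theorem, is that the choice of threshold (equivalently, the choice of index $i$ that is ``good'') depends \emph{only} on the degree sequence of the polynomials feeding the product gate, and not on anything else about the gate. Hence a single index $i$ that is good for $Q_1$ is automatically good for every $Q_j$, $j \in [r]$.

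So the steps, in order, are as follows. First, set $m = 1/\epsilon + 1$ and apply Lemma~\ref{lem:sps} to a single product gate $Q_1 = \prod_{j \in [n]} Q_{1j}$ of the circuit: the lemma furnishes, for $k = n^{i/m}$, a bound $\dim(\langle \partial^{=k} Q_1 \rangle_{\leq \ell}) \leq s^{k \cdot n^{-1/m}} \binom{k/\epsilon + k - 1}{k}\binom{N + 4\epsilon n - k + \ell}{N}$ for all but at most $1/\epsilon$ choices of $i \in [m]$. Since $m = 1/\epsilon + 1 > 1/\epsilon$, there is at least one good choice $i^\ast \in [m]$. Second, invoke the structural fact that in a $\spsp^*$ circuit all product gates share the same degree multiset, so the bad set of indices produced by Lemma~\ref{lem:sps} is identical for every $Q_j$; therefore $i^\ast$ is good for all $j \in [r]$ simultaneously, and $\dim(\langle \partial^{=k} Q_j \rangle_{\leq \ell})$ obeys the same bound for each $j$ with $k = n^{i^\ast/m}$. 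Third, apply sub-additivity of $\dim(\langle \partial^{=k} \cdot \rangle_{\leq \ell})$ over the $r$ summands of $C = \sum_{j=1}^r Q_j$ to conclude $\dim(\langle \partial^{=k} C \rangle_{\leq \ell}) \leq r \cdot s^{k \cdot n^{-1/m}} \binom{k/\epsilon + k - 1}{k}\binom{N + 4\epsilon n - k + \ell}{N}$, which is exactly the claimed inequality.

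The main ``obstacle'' is really just the bookkeeping point of making precise that the set of bad indices depends only on the degree multiset. Concretely, one has to trace through the proof of Lemma~\ref{lem:sps}: the bad indices are exactly those $i$ for which $D_i = \sum_{Q_j \in S_i \setminus S_{i-1}} \deg(Q_j) \geq \epsilon n$, and the sets $S_i$ (the first $n^{i/m}$ polynomials in decreasing order of degree) are determined purely by the sorted list of degrees of the inputs to the gate. Two product gates with the same multiset $(D_i, m_i)$ have the same sorted degree list, hence the same $S_i$'s, hence the same $D_i$'s, hence the same bad set. Once this is spelled out, there is no union bound to pay, $m = 1/\epsilon + 1$ suffices, and the rest is a verbatim repetition of the argument in Theorem~\ref{thm:choosingk}. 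I would state this reduction in one or two sentences and then simply cite the earlier proof rather than re-deriving the binomial estimates.
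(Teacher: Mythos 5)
Your proposal is correct and takes essentially the same route as the paper: the paper's own proof of this theorem is precisely the observation, made in the sentence preceding the theorem statement, that the threshold index $i$ (equivalently, the set of ``bad'' $i$ with $D_i \geq \epsilon n$) is determined solely by the degree multiset of the product gate, so for a $\Sigma\Pi\Sigma\Pi^*$ circuit a single good $i$ works for every product gate and no union bound over the $r$ gates is required. You have simply spelled out explicitly why the bad set is the same for all gates (same multiset implies same sorted degree list, hence same $S_i$'s, hence same $D_i$'s), which the paper leaves as a one-line remark; the remaining steps (one good $i \in [m]$ exists since $m = 1/\epsilon + 1 > 1/\epsilon$, then sub-additivity) match the paper exactly.
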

It is important to note the difference between the bounds in Theorem~\ref{thm:choosingk} and Theorem~\ref{thm:ubspspstar}. In Theorem~\ref{thm:ubspspstar}, the exponent of $s$ is independent of the top fan-in $r$ as $m$ is a constant.
%\begin{sketch}
% The proof follows from the fact that in a $\spsp^*$, the multiset of degrees is the same for all product gates. So, any threshold and $k$ which works for one product gate works for the entire circuit. 
%\end{sketch}

%Now, by the results of~\cite{GKKS12}, we will bound the dimension of the space of shifted partial derivatives for the permanent and then use this bound to complete the proof of Theorem~\ref{thm:lb1}. For the rest of this proof, we will consider $NW_n$ under restrictions when all but one $y$ variable is set to $0$. In this case, if $y_t$ is non zero and set to a constant, we obtain $NW_{t,n}$ which was shown in~\cite{KSS13} to have a large shifted partial derivative span for some value of shift and a certain choice of the order of partial derivatives. The following theorem follows from the work of Kayal et al~\cite{KSS13}.
%Now, let us consider a depth 4 circuit $C$ which computes the $NW_n$
In order to complete the proof now, we will look at the shifted partial derivative complexity of the circuit as well as of the polynomial $NW_n$ under restrictions where all the variables $\{y_1, y_2, \ldots, y_n\}$ are set to constants. The partial derivatives as well the final shifts are just taken with respect to monomials in the $n^2$ variables $\{x_{1,1}, x_{1,2}, \ldots, x_{n,n}\}$. The following theorem tells us that under some restrictions of this type, $NW_n$ has large complexity. This happens because under the restriction where  $y_t = 1$ and $y_j = 0$ for every $j\neq t$, we obtain $NW_{t,n}$ from $NW_n$.  
\begin{thm}[\cite{KSS13}]\label{thm:gkss-perm}For any integers $t$, $k$, $\ell$ such that $\log^2 n \leq t \leq \frac{n}{100}$, $k = \lfloor \frac{n}{2t} \rfloor$, and $\ell = \lceil \frac{5n^2t}{\log n}\rceil$,
$$\dim(\langle \partial^{=k} NW_{t,n} \rangle_{\leq \ell}) \geq  \frac{1}{n^3}{{n^2 + \ell + n - k} \choose n^2}$$
\end{thm}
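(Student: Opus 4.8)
The plan is to pin down $\partial^{=k}NW_{t,n}$ exactly and then convert the dimension bound into a monomial-counting problem.

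The first ingredient is the Nisan--Wigderson design property: with $k=\lfloor n/(2t)\rfloor$, the polynomials $f$ indexing $NW_{t,n}$ have degree $<k$, so two distinct ones agree on at most $k-1$ points of $\F_n$; hence the degree-$n$ multilinear monomials $\nu_f=\prod_{i\in[n]}x_{i,f(i)}$ pairwise share at most $k-1$ variables, and any $n-k\ge k$ of the variables of $\nu_f$ already determine $f$. Using this I would compute $\partial^{=k}NW_{t,n}$ directly: a $k$-th order derivative $\partial_\mu NW_{t,n}$ vanishes unless $\mu=\prod_{i\in S}x_{i,c_i}$ is squarefree with its $k$ variables in distinct rows $S$, and in that case exactly one term of $NW_{t,n}$ survives --- the one for the unique interpolant $f$ of degree $\le k-1<k$ through $\{(i,c_i):i\in S\}$ --- giving $\partial_\mu NW_{t,n}=\prod_{i\in[n]\setminus S}x_{i,f(i)}$. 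Thus $\langle\partial^{=k}NW_{t,n}\rangle$ is the linear span of the pairwise-distinct multilinear degree-$(n-k)$ monomials $\mathcal M=\{\prod_{i\in[n]\setminus S}x_{i,f(i)}:|S|=k,\ \deg f<k\}$, and since shifting monomials gives monomials, $\dim\langle\partial^{=k}NW_{t,n}\rangle_{\le\ell}$ equals the number of monomials of degree $\le\ell+n-k$ in the $n^2$ variables that are divisible by at least one element of $\mathcal M$.

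It then remains to show this ``reachable'' count is at least $\tfrac1{n^3}\binom{n^2+\ell+n-k}{n^2}$, i.e.\ that a $\tfrac1{n^3}$ fraction of all degree-$(\le\ell+n-k)$ monomials are reachable. I would attack this with a first-moment / Bonferroni argument: for a suitable sub-collection $\mathcal F'$ of index polynomials let $R_f$ be the set of reachable $m$ obtained from the monomials $\prod_{i\notin S}x_{i,f(i)}$, $|S|=k$; then $\sum_{f\in\mathcal F'}|R_f|$ is a clean binomial count and is large, while the design property forces any $m\in R_{f_1}\cap R_{f_2}$ (for $f_1\ne f_2$) to be divisible by a monomial of degree $\ge 2(n-k)-(k-1)$, which keeps the pairwise overlaps small, so $|\bigcup_{f\in\mathcal F'}R_f|\ge\sum|R_f|-\sum_{\text{pairs}}|R_{f_1}\cap R_{f_2}|$ is a useful bound. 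The crux --- and the reason the statement demands precisely $k=\lfloor n/(2t)\rfloor$, $\ell=\lceil 5n^2t/\log n\rceil$ and $t\ge\log^2 n$ --- is to size $|\mathcal F'|$ so that the main term dominates the error: one needs $\ell$ large enough (roughly $\ell\gg n^2 k$) that the ``shift loss'' $\binom{n^2+\ell}{n^2}\big/\binom{n^2+\ell+n-k}{n^2}\approx e^{-\Theta(n^3/\ell)}$ is mild compared to the available slack, and $t\ge\log^2 n$ is exactly what keeps $k$ small enough that the forced overlap degree $2n-3k+1$ stays close to $2n$. Making this balance go through --- choosing the right sub-collection of derivatives, and, if a plain second-order Bonferroni still leaves too large an error in the regime $t\approx\log^2 n$, either pushing the inclusion-exclusion to higher order or instead bounding the non-reachable monomials directly via the list-decoding structure of Reed--Solomon codes (a monomial is non-reachable iff no degree-$(<k)$ codeword meets its ``allowed columns'' in $\ge n-k$ coordinates) --- is the part I expect to be the main obstacle; everything before it is routine.
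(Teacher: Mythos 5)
The paper does not prove this theorem---it is imported verbatim from \cite{KSS13}---so there is no in-paper proof to compare against. Your reconstruction follows the approach used in \cite{KSS13} (and, for the permanent, \cite{GKKS12}): exploit the Reed--Solomon distance of the design to show that each relevant $k$-th order derivative is a \emph{single} monomial, observe that the shifted span of a set of monomials has dimension exactly the number of degree-$(\le \ell+n-k)$ monomials divisible by at least one of them, and lower-bound that count by a second-order Bonferroni estimate in which the error term is controlled by the pairwise lcm degree $\ge 2(n-k)-(k-1)$ forced by the code distance. All of that is the right skeleton and matches the cited source.

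Two small wrinkles worth flagging. First, as you have written it, $R_f$ is the union over \emph{all} $|S|=k$ of the monomials divisible by $\prod_{i\notin S}x_{i,f(i)}$; that set's size is itself an inclusion--exclusion quantity, not the ``clean binomial count'' you assert. The fix is to pin down one set $S=S(f)$ per $f\in\mathcal F'$ (even a single common $S$ for all $f$ works), so that $|R_f|=\binom{n^2+\ell}{n^2}$ exactly and the pairwise intersection is bounded by $\binom{n^2+\ell-n+2k-1}{n^2}$ via the lcm-degree bound. Second, your worry that plain second-order Bonferroni might fail near $t\approx\log^2 n$ is unfounded: with the given parameters one has $\binom{n^2+\ell+n-k}{n^2}\approx n^{2k/5}\binom{n^2+\ell}{n^2}$ and $\binom{n^2+\ell-n+2k-1}{n^2}\approx n^{-2k/5}\binom{n^2+\ell}{n^2}$, so taking $|\mathcal F'|=\Theta(n^{2k/5-3})\le n^k$ makes the main term $\Theta(n^{2k/5-3})\binom{n^2+\ell}{n^2}$ while the subtracted pairs term is $\Theta(n^{2k/5-6})\binom{n^2+\ell}{n^2}$, comfortably smaller uniformly over $\log^2 n\le t\le n/100$; no higher-order inclusion--exclusion or list-decoding detour is needed.
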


We will also use the following result from~\cite{KSS13} in our calculations.
\begin{thm}[\cite{KSS13}]\label{thm: gkss-calc}
For any fixed constant $\alpha$ and $t$, $k$, $\ell$ such that $\log^2 n \leq t \leq \frac{n}{100}$, $k = \lfloor \frac{n}{2t} \rfloor$ and $\ell = \lceil \frac{5n^2t}{\log n}\rceil$, if 
$$E = \frac{\frac{1}{n^3}{{n^2 + \ell + n - k} \choose n^2}}{{\frac{\alpha n}{t} \choose k}{{n^2 + \ell + k(t-1)}\choose{n^2} }}$$
Then, $E \geq \exp(\Omega(\frac{n}{t}\log n))$.
\end{thm}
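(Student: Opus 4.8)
The statement is purely a binomial-coefficient estimate, and the plan is to isolate the single ratio that produces the gain and to show everything else is lower order. Write $N = n^2$, and let $s_{\mathrm{num}} = \ell + n - k$ and $s_{\mathrm{den}} = \ell + k(t-1)$ be the two ``shift'' amounts occurring in the numerator and denominator binomials, so that
\[
E \;=\; \frac{1}{n^3}\cdot\binom{\alpha n/t}{k}^{-1}\cdot R, \qquad R \;=\; \frac{\binom{N + s_{\mathrm{num}}}{N}}{\binom{N + s_{\mathrm{den}}}{N}}.
\]
First I would dispose of the two cheap factors. The factor $1/n^3$ only contributes $-O(\log n)$ to $\ln E$. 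For the middle factor, since $k = \lfloor n/(2t)\rfloor$ and $t \le n/100$ we have $\alpha n/t = \Theta(k)$, so $\binom{\alpha n/t}{k} \le (e\alpha n/(tk))^k = \exp(O(k)) = \exp(O(n/t))$. Hence it suffices to show $R \ge \exp(\Omega(\tfrac{n}{t}\log n))$ with a sufficiently large hidden constant; since $n/t \ge 100$ and $\log n \to \infty$, the quantity $\tfrac{n}{t}\log n$ dominates both $\log n$ and $n/t$, which then yields $\ln E = \Omega(\tfrac{n}{t}\log n)$.

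For $R$ the plan is a telescoping expansion. The key observation is that the two shifts differ by exactly $\delta := s_{\mathrm{num}} - s_{\mathrm{den}} = (n - k) - k(t-1) = n - kt$, and since $kt \le n/2$ this gives $\delta \ge n/2$ (while $kt \ge n/2 - t$ shows $\delta = \Theta(n)$). Then
\[
R \;=\; \prod_{i=1}^{\delta}\frac{N + s_{\mathrm{den}} + i}{s_{\mathrm{den}} + i} \;\ge\; \Bigl(1 + \frac{N}{s_{\mathrm{num}}}\Bigr)^{\delta},
\]
where I have used that every factor exceeds $1$, is decreasing in $i$, and at $i = \delta$ equals $1 + N/(s_{\mathrm{den}} + \delta) = 1 + N/s_{\mathrm{num}}$. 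It remains to estimate $N/s_{\mathrm{num}}$: because $t \ge \log^2 n$ we have $\ell = \lceil 5n^2 t/\log n \rceil \gg n$, so $s_{\mathrm{num}} = \ell + n - k = \Theta(n^2 t/\log n)$, whence $N/s_{\mathrm{num}} = \Theta(\log n / t)$, which is at most $1$ (again since $t \ge \log^2 n$). Thus $\ln(1 + N/s_{\mathrm{num}}) = \Omega(\log n/t)$, and so $\ln R \ge \delta\cdot\Omega(\log n/t) = \Omega(\tfrac{n}{t}\log n)$, as needed.

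I do not expect a genuine conceptual obstacle; the computation is elementary. The only care needed is in the bookkeeping: (i) verifying that $\ell$ really is the dominant term in both $s_{\mathrm{num}}$ and $s_{\mathrm{den}}$, which is precisely where the hypothesis $t \ge \log^2 n$ is used, guaranteeing $\ell = \Omega(n^2\log n) \gg n$; and (ii) making sure the hidden constant in $\ln R = \Omega(\tfrac{n}{t}\log n)$ is large enough to absorb the $-3\log n$ coming from $1/n^3$ and the $-O(n/t)$ coming from $\binom{\alpha n/t}{k}$ --- there is ample slack here because $\tfrac{n}{t}\log n \ge 100\log n$. (Alternatively one could push all four binomial coefficients through Stirling's formula, but the telescoping argument above avoids carrying error terms and is cleaner.)
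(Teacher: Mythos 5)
Your proposal is correct. Note that the paper itself does not reprove this estimate but cites \cite{KSS13}; the standard route there (and in \cite{GKKS12}) is exactly the one you took — expand the ratio of binomials as a telescoping product $\prod_i (1 + N/(s_{\mathrm{den}}+i))$, lower-bound each factor by the smallest one, observe that $\delta = n - kt = \Theta(n)$ and $N/s_{\mathrm{num}} = \Theta(\log n / t)$, and absorb the $1/n^3$ and $\binom{\alpha n/t}{k}$ terms using $n/t \ge 100$ and $\log n \to \infty$. Your bookkeeping of where $t \ge \log^2 n$ enters (to make $\ell$ dominate $n$) and the dominance argument for the error terms are both sound.
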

For the range of values of $t$ stated above, the value of $k$ lies in the range $200 \leq k \leq \frac{n}{2\log^2n}$. To complete the proof, we will argue that after setting the $y$ variables to a constant, there is a value of $k$ in this range and an $\ell$ such that the dimension of the shifted partial derivative span of the circuit is small . Based on this value of $k$, we will then invoke a particular projection $NW_{t,n}$ of $NW_n$ and then use the bound from Theorem~\ref{thm:gkss-perm}.  

\begin{proof}[Proof of Theorem~\ref{thm:lb1}]
Let us consider a $\spsp(r)$ circuit of size $s$ which computes the polynomial $NW_n$. As discussed, we will analyze the shifted partial derivative complexity of the circuit and the polynomial under the restriction that the $\{y_1, y_2, \ldots, y_n\}$ variables are set to constants. So, the degree of the polynomial computed is $n$ and the number of alive variables is $N = n^2$.
%Clearly the degree of the polynomial computed by $C$ is $n+1$ and the number of variables in $C$ is $n^2 + n$. Let us consider the behavior of $C$ with the $y$ variables set to constants, only the $x$ variables are alive. So, set $N = n^2$. 
%Let $\epsilon$ be a constant such that $0< \epsilon < \epsilon_0$, where $\epsilon_0$ is the same constant as in Theorem~\ref{thm:gkss-calc}. 
%Let $\epsilon' =\epsilon/4$.
Let $0<\epsilon<1$ be a constant. We will now show that we can choose a value of $k$ such that the conditions in Theorem~\ref{thm:choosingk} and Theorem~\ref{thm:gkss-perm} hold. From the proof of Theorem~\ref{thm:choosingk}, we know that there are at most $\frac{r}{\epsilon}$ many choices of integer
$0<i<\frac{2r}{\epsilon}$ that are bad i.e that $k = n^{\frac{\epsilon\cdot i}{2r}}$ does not give us the upper bound on the complexity of the shifted partial derivatives as stated in Theorem~\ref{thm:choosingk}. Now, all we need to show is that there is one such ``good'' $i$ such that $200 \leq k = n^{\frac{\epsilon\cdot i}{2r}} \leq \frac{n}{2\log^2 n}$. For this to hold, we need to show a ``good'' $i$ in the range $\frac{2r}{\epsilon\log n}\log 200 < i < \frac{2r}{\epsilon}(1-\frac{1+2\log\log n}{\log n})$. The number of integers in this range is at least $\frac{2r}{\epsilon}(1-3\frac{\log\log n}{\log n})$, while the number of bad $i$ is at most $\frac{r}{\epsilon}$. Hence, for $n$ large enough, there is an $i$ such that for the resulting $k$, the bound in Theorem~\ref{thm:choosingk} holds and $t = \frac{n}{2k}$ satisfies $\log^2 n \leq t \leq \frac{n}{100}$. %Therefore, for that value of $k$ and an $l$ chosen according to the hypothesis of Theorem~\ref{thm: gkss-calc},  
Let us fix such a good $k$. Let us now fix $t = \frac{n}{2k}$, $\ell = \frac{5n^2t}{\log n}$ and $\epsilon = \frac{1}{8}$. Now, let us consider the restriction of $C$ when just $y_t$ is set to $1$ and $y_j$ is set to zero for every $j\neq t$. In this case, the circuit just computes $NW_{t,n}$. From Theorem~\ref{thm:choosingk}, we get 
$$\dim(\langle \partial^{=k} C \rangle_{\leq \ell}) \leq r\cdot s^{k\cdot n^{-\epsilon/2r}}\cdot {k/\epsilon+ k-1 \choose k}{N + 4\epsilon n-k + \ell \choose N} $$ and from Theorem~\ref{thm:gkss-perm}, we get 
 $$\dim(\langle \partial^{=k} NW_{t,n} \rangle_{\leq \ell}) \geq  \frac{1}{n^3}{{n^2 +\ell+ n - k} \choose n^2}$$ 
So, if  $C$ computes $NW_n$, then $\dim(\langle \partial^{=k} C \rangle_{\leq \ell}) \geq \dim(\langle \partial^{=k} NW_{t,n} \rangle_{\leq \ell})  $. 
Thus $$r\cdot s^{k\cdot n^{-\epsilon/2r}}\cdot {k/\epsilon+ k-1 \choose k}{N + 4\epsilon n-k + \ell \choose N} \geq \frac{1}{n^3}{{n^2 +\ell+ n - k} \choose n^2}.$$
Substituting the parameters, we get 
$$ r\cdot s^{k\cdot n^{-1/16r}} \geq \frac{\frac{1}{n^3}{{n^2 +\ell+ n - k} \choose n^2}}{{\frac{\alpha n}{t} \choose k}{{n^2 +\ell+ k(t-1)}\choose{n^2} }}$$ for some appropriate constant $\alpha$ dependent on $\epsilon$. From theorem~\ref{thm: gkss-calc}, this implies that 
$$r\cdot s^{k\cdot n^{-1/16r}} \geq \exp(\Omega(\frac{n}{t}\log n)) = \exp(\Omega(k\log n)) $$
Using the fact that $r$ is at most $s$ (in fact it is much much smaller), we conclude that 
$$k\cdot n^{-1/16r}\cdot \log s \geq \Omega(k\log n).$$
Thus $$\log s \geq \Omega(n^{1/16r}\log n)$$ and hence $$s \geq \exp\left(n^{\Omega(1/r)}\log n\right).$$

\end{proof}

A very similar calculation lets us prove Theorem~\ref{thm:lb2}.

\begin{proof}[Proof of Theorem~\ref{thm:lb2}]

For a $\spsp^*$ circuit, the calculation will proceed exactly the same as above, and in the end, we will get 
$$s \geq \exp\left(n^{\Omega(1/m)}\right),$$ which on substituting $m = 1/\epsilon + 1$, completes the proof. 
%\begin{thm}~\label{thm:spspstarlb}
%Let $C$ be a $\Sigma\Pi\Sigma\Pi^*$ circuit that computes the $n\times n$ permanent.  
%Let $s$ be the size of $C$. Then $$s \geq \exp\left(n^{\Omega(\epsilon)}\right).$$ for a positive constant $\epsilon < 1$.
 %\end{thm}
Thus, we obtain exponential lower bounds for $\spsp^*$ circuits computing the polynomial $NW_n$ regardless of their top fan-in.
\end{proof}

%%%%%%%%%%%%%%%%%%%%%%%%%%%%%%%%%%%%%%%%%%%%%%%%%%%%%%%%%%%%TECH PART OF PAPER 1 ENDS HERE
%%%%%%%%%%%%%%%%%%%%%%%%%%%%%%%%%%%%%%%%%%%%%%%%%%%%%%%%%%%%

%%%%%%%%%%%%%%%%%%%%%%%%%%%%%%%%%%%%%%%%%%%%%%%%%%%%%%%%%%%%TECH PART OF PAPER 2 BEGINS HERE
%\section{}~\label{sec:proof}
%In this section, we will construct the prove our main result, which is Theorem~\ref{thm:mainthm1}. As described in the overview, in the first step, in Subsection~\ref{subsec:proofstep1}, we will construct the polynomials ${\cal P}_{p,t,n}(\overline{x})$ for every $\log^2 n\leq t\leq \frac{n}{40} $, for some absolute constant $\epsilon$. We wrap up the proof by constructing ${\cal Q}_n$ in Subsection~\ref{subsection:wrapup}. 

%%%%%%%%%%%%%%%%%%%%%%%%%%%%%%%%%%%%%%%%%%%%%%%%%%%%%%%%%%%%
%%%%%%%%%%%%%%%%%%%%%%%%%%%%%%%%%%%%%%%%%%%%%%%%%%%%%%%%%%%%
%%%%%%%%%%%%%%%%%%%%%%%%%%%%%%%%%%%%%%%%%%%%%%%%%%%%%%%%%%%%
%%%%%%%%%%%%%%%%%%%%%%%%%%%%%%%%%%%%%%%%%%%%%%%%%%%%%%%%%%%%
%%%%%%%%%%%%%%%%%%%%%%%%%%%%%%%%%%%%%%%%%%%%%%%%%%%%%%%%%%%%
%%%%%%%%%%%%%%%%%%%%%%%%%%%%%%%%%%%%%%%%%%%%%%%%%%%%%%%%%%%%
%%%%%%%%%%%%%%%%%%%%%%%%%%%%%%%%%%%%%%%%%%%%%%%%%%%%%%%%%%%%
%%%%%%%%%%%%%%%%%%%%%%%%%%%%%%%%%%%%%%%%%%%%%%%%%%%%%%%%%%%%
%%%%%%%%%%%%%%%%%%%%%%%%%%%%%%%%%%%%%%%%%%%%%%%%%%%%%%%%%%%%
%%%%%%%%%%%%%%%%%%%%%%%%%%%%%%%%%%%%%%%%%%%%%%%%%%%%%%%%%%%%
%%%%%%%%%%%%%%%%%%%%%%%%%%%%%%%%%%%%%%%%%%%%%%%%%%%%%%%%%%%%

 \section{Depth reduction is tight for $\spsp(\Omega (\log n))$ circuits}\label{sec:drtight}
In this section, we will show that the depth reduction procedure of Koiran and Tavenas~\cite{koiran, Tavenas13} as given in Theorem~\ref{thm:tavenas} is tight. On the way to this result, we will prove a {\it Hierarchy} theorem(Theorem~\ref{thm:hierarchy}) for formulas of depth 4 with bounded bottom fan-in. We will then build up on this proof, and prove Theorem~\ref{thm:dr} and Theorem~\ref{thm:regular}. We will first provide an overview of the proof. 

\paragraph{Proof Overview:}We will construct an infinite family of polynomials $\{{\cal Q}_n\}_n$  (here $n$ is a prime power), such that ${\cal Q}_n$ is a homogeneous polynomial in $N = \theta(n^2)$ variables of degree $n+1$ which can be computed by a polynomial sized homogeneous $\spsp(O(\log n))$ circuit. We will show that for each $\omega(\log n) \leq a \leq \frac{n}{800}$, ${\cal Q}_n$ requires homogeneous $\hsp$ circuits of top fan-in $2^{\Omega(\frac{n}{a}\log n)}$. In order to construct this polynomial family, we will construct for each $\omega(\log n)\leq t\leq \frac{n}{40}$, a family of polynomials $\{{\cal P}_{t,n}(\overline{x})\}$, such that each ${\cal P}_{t,n}(\overline{x})$ is a homogeneous polynomial in $n^2$ variables and of degree $n$, and can be computed by a polynomial sized homogeneous $\spsp^{[t]}$ circuit. Moreover, we will show that any homogeneous $\spsp^{[t/20]}$ circuit computing it must have  top fan-in at least $2^{\Omega(\frac{n}{t}\log n)}$. We will then apply the interpolation trick of~\cite{KSS13} to ${\cal P}_{t,n}$ for various $t$ to obtain the ${\cal Q}_n$.  The construction is heavily inspired by the idea of constructing hard polynomials using Nisan-Wigderson designs used in~\cite{KSS13}. To show the lower bound for each $t$, we will use ideas from~\cite{CM13} and~\cite{FLMS13}, and show that for suitable $k$, $\partial^{=k}(P(\overline{x}))$ has a large number of elements whose leading monomials are at a ``large distance'' from each other.

\subsection{Proof of Theorem~\ref{thm:hierarchy}}
For the rest of this section, we will assume that $n$ is a prime power. For each such $n$, we will identify the elements of the field ${\F}_n$ with the elements of the set  $[n] = \{1,2,\ldots, n\}$. For a parameter $t$ which is a positive integer less than $n$, let us now partition the set $[n]$ into $\lceil \frac{n}{t} \rceil$ parts which are roughly equal and each is of size about $t$. For brevity, we will indicate $\frac{n}{t}$ by $\tilde{t}$. We will let $C_i = \{t(i-1)+1, t(i-1) + 2, \ldots, ti\}$ denote the $i^{th}$ such partition.  Also, for every $j \in [\tilde{t}]$ and $i \leq t$, let $C_j^{i}$ be the set of the $i$ smallest elements in $C_j$. Let us also consider a parameter $p$ which we will later set to an appropriately chosen constant. Let ${{\cal S}_p}$ be the set of all univariate polynomials of degree $p$ over $\F$ and let ${{\cal S}_p}^{\tilde{t}}$ be the set of ordered $\tilde{t}$ tuples over ${{\cal S}_p}$. Clearly, $|{{\cal S}_p}|$ is $\theta(n^{p+1})$, when $p$ is a constant. In the rest of the paper, we will use $\overline{x}$ to denote the set of $n^2$ variables $\{x_{i,j} : i, j \in [n]\}$ and $\overline{y}$ to denote the set of variables $\{y_1, y_2, \ldots, y_n\}$. We will use the following notion of distance between two monomials as defined in~\cite{CM13}. 
\begin{define}[\cite{CM13}]
Let $m_1$ and $m_2$ be two monomials over a set of variables. Let $S_1$ and $S_2$ be the multiset of variables in $m_1$ and $m_2$ respectively, then the distance $\Delta(m_1, m_2)$ between $m_1$ and $m_2$ is the min$\{|S_1| - |S_1\cap S_2|, |S_2| - |S_1\cap S_2|\}$ where the cardinalities are the order of the multisets.   
\end{define} 

%\vspace{2mm}
%\noindent
We will also use the following notion of distance between strings or ordered tuples. For any two strings $s_1, s_2$ of the same length, the distance between them $\Delta(s_1, s_2)$ is the number of coordinates at which $s_1, s_2$ disagree with each other. For brevity, we will use $\alpha m$ to refer to $\lfloor \alpha m \rfloor$ for any positive integer $m$ and any real number $\alpha$.  

Based on the notations defined, we define the class of polynomials ${\cal P}_{p,t,n}$:

$${\cal P}_{p,t,n}(\overline{x}) = \prod_{j\in [\tilde{t}]}\sum_{f \in {{\cal S}_p}}\prod_{i \in C_j}x_{i, f(i)}$$

From the expression above, it follows that for every $n, t$ and a constant $p$, ${\cal P}_{p,t,n}$ can be computed by a polynomial sized $\Pi\Sigma\Pi$ formula. Observe that in fact it can be computed by a {\it regular} formula\footnote{When $t$ divides $n$ the formula will be exactly regular, and if not a simple modifcation could make it regular, but the details are simple and irrelevant.}. We summarize this observation below.

\begin{obs}~\label{obs:upperbound}
For every $n$ and a constant $p$, ${\cal P}_{p,t,n}$ can be computed by a $\Pi\Sigma\Pi$ regular formula of size polynomial in $n$.
\end{obs}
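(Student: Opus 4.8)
The plan is to read the desired formula directly off the defining expression for ${\cal P}_{p,t,n}$ and then verify, one by one, the three conditions in the definition of a regular formula.

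First I would write ${\cal P}_{p,t,n}(\overline{x}) = \prod_{j\in [\tilde{t}]} L_j(\overline{x})$, where $L_j(\overline{x}) = \sum_{f \in {{\cal S}_p}}\prod_{i \in C_j}x_{i,f(i)}$. Each inner term $\prod_{i\in C_j} x_{i,f(i)}$ is a product of $|C_j|$ variables (each a trivial linear form), so $L_j$ is computed by a $\sp$ subformula consisting of one $+$ gate of fan-in $|{{\cal S}_p}| = \theta(n^{p+1})$ feeding from $|{{\cal S}_p}|$ product gates, each of fan-in $|C_j| \le \lceil t\rceil$. Multiplying the $L_j$ together with a single top $\times$ gate of fan-in $\tilde{t} = \lceil n/t\rceil$ yields a $\psp$ formula computing ${\cal P}_{p,t,n}$. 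For the size bound: this formula has one top gate, $\tilde{t}$ sum gates, $\tilde{t}\cdot|{{\cal S}_p}|$ bottom product gates, and at most $\tilde{t}\cdot|{{\cal S}_p}|\cdot\lceil t\rceil$ leaves; since $p$ is a constant and $\tilde t\cdot\lceil t\rceil = O(n)$, the total size is $O(n^{p+2}) = \poly(n)$.

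Next I would check regularity. Condition (1), alternating sum and product layers, holds since the formula has the shape $\psp$. Condition (3), formal degree a constant multiple of $\deg {\cal P}_{p,t,n} = n$, holds because the formal degree of the formula is $\tilde{t}\cdot\max_j|C_j| \le \lceil n/t\rceil\cdot\lceil t\rceil = O(n)$. Condition (2), equal fan-in within each layer, is immediate for the top layer (one gate) and for the middle layer (every sum gate has fan-in exactly $|{{\cal S}_p}|$); for the bottom layer it holds provided all blocks $C_j$ have the same size, which is the case when $t\mid n$, giving $|C_j| = t$ for every $j$. This establishes the claim exactly when $t$ divides $n$.

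The one mildly delicate point — and the only step requiring care — is when $t\nmid n$, so the blocks $C_j$ have sizes $\lfloor n/\tilde t\rfloor$ or $\lceil n/\tilde t\rceil$ and the bottom product gates do not all have identical fan-in. Here I would pad: append to each short bottom product enough constant-$1$ factors (leaves labelled by a field element) so that every bottom product has fan-in exactly $\lceil n/\tilde t\rceil$. This changes neither the polynomial computed nor the asymptotic size, preserves the $\psp$ shape and the alternation of layers, makes all bottom fan-ins equal, and leaves the formal degree at $O(n)$, so conditions (1)--(3) all hold. This is precisely the "simple modification" referred to in the footnote; all other parts of the argument are a direct inspection of the explicit formula, so I do not anticipate any real obstacle.
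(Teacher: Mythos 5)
Your proposal is correct and follows exactly the same approach as the paper, which merely asserts the observation and defers the $t\nmid n$ case to a footnote mentioning a ``simple modification''; your padding of short bottom products with constant-$1$ leaves is precisely that modification, and your verification of the three regularity conditions and the size bound is routine and sound.
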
 

We now intend to use the setup introduced in~\cite{CM13} by Chillara and Mukhopadhyay to show that this polynomial requires  homogeneous $\hspn$ circuits of top fan-in at least $2^{\Omega(\frac{n}{t}\log n)}$. This forms the basis for our hierarchy theorem. In~\cite{CM13} the following theorem is proved, which gives a sufficient condition to show hardness for homogeneous $\spsp^{[\sqrt{n}]}$ circuits. 

\begin{thm}[Theorem 3 in~\cite{CM13}]
Let $f(X)$ be a polynomial of degree $n$ in $n^{O(1)}$ variables such that for some constant $\delta$ there are $n^{\delta k}$ different polynomials in $\partial^{=k} f$ for $k = \gamma\sqrt{n}$(where $0<\gamma<1$ is a constant) such that any two of their leading monomials have distance at least $d = \frac{n}{c}$ for a constant $c>1$. Then, any homogeneous $\spsp^{[\sqrt{n}]}$ circuit that computes $f(X)$ must have top fan-in at least $2^{\Omega(\sqrt{n}\log n)}$. 
\end{thm}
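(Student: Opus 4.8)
The plan is to run the standard shifted-partial-derivatives template, taking the space $\langle\partial^{=k}f\rangle_{\le\ell}$ of Definition~\ref{def:shiftedderivative} as the complexity measure with $k=\gamma\sqrt n$ and a shift parameter $\ell$ to be fixed only at the very end; the one nonstandard ingredient, following~\cite{CM13}, is that the lower bound on this measure for $f$ is read off \emph{only} from the hypothesised family $g_1,\dots,g_M\in\partial^{=k}f$, $M=n^{\delta k}$, with pairwise leading-monomial distance $\ge d$. Throughout I would fix a graded monomial order, so that $\mathrm{LM}(pq)=\mathrm{LM}(p)\,\mathrm{LM}(q)$, polynomials with distinct leading monomials are linearly independent, and hence $\dim W$ is at least the number of distinct leading monomials of any spanning set of $W$.

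\emph{Upper bound for the circuit.} Let $C=\sum_{i=1}^s\prod_j Q_{ij}$ be a homogeneous $\spsp^{[\sqrt n]}$ circuit computing $f$, so each $\deg Q_{ij}\le\sqrt n$ and $\sum_j\deg Q_{ij}=n$ for every $i$. First I would, gate by gate, group the $Q_{ij}$ into products $\tilde Q_{i1},\dots,\tilde Q_{id_i}$ with $d_i=O(\sqrt n)$ and each $\deg\tilde Q_{ij}=O(\sqrt n)$ (always possible, since no single $Q_{ij}$ exceeds $\sqrt n$ and the degrees total $n$); this does not change the product. Now the sum of the degrees of any $k$ of the $\tilde Q_{ij}$ is $O(k\sqrt n)$, so Lemma~\ref{lem:lowdeg} together with sub-additivity gives, for every $\ell$,
\[
\dim\langle\partial^{=k}C\rangle_{\le\ell}\;\le\; s\cdot\binom{O(\sqrt n)+k-1}{k}\binom{N+O(k\sqrt n)+\ell}{N}\;=\;s\cdot 2^{O(\sqrt n)}\cdot\binom{N+O(k\sqrt n)+\ell}{N}.
\]

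\emph{Lower bound for $f$.} Since $C$ computes $f$ we have $\partial^{=k}C=\partial^{=k}f$, so it suffices to lower bound $\dim\langle g_1,\dots,g_M\rangle_{\le\ell}$. I would consider the shifts $\rho\,g_\mu$ with $\deg\rho=\ell$; their leading monomials $\rho\,\mathrm{LM}(g_\mu)$ all have degree $n-k+\ell$, and for fixed $\mu$ the $\binom{N+\ell-1}{N-1}$ choices of $\rho$ give distinct leading monomials. For $\nu\ne\mu$, the equality $\rho\,\mathrm{LM}(g_\mu)=\rho'\,\mathrm{LM}(g_\nu)$ with $\deg\rho=\deg\rho'=\ell$ forces $\rho$ to be divisible by the fixed monomial $\mathrm{LM}(g_\nu)/\gcd(\mathrm{LM}(g_\mu),\mathrm{LM}(g_\nu))$, whose degree is $\Delta(\mathrm{LM}(g_\mu),\mathrm{LM}(g_\nu))\ge d$; hence at most $\binom{N+\ell-d-1}{N-1}$ choices of $\rho$ do this for a given $\nu$. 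Therefore the shifts of $g_\mu$ produce at least $\binom{N+\ell-1}{N-1}-M\binom{N+\ell-d-1}{N-1}$ leading monomials not produced by any $g_\nu$ with $\nu<\mu$, and summing over $\mu$ (these sets being disjoint) yields
\[
\dim\langle\partial^{=k}f\rangle_{\le\ell}\;\ge\; M\left(\binom{N+\ell-1}{N-1}-M\binom{N+\ell-d-1}{N-1}\right).
\]
Since $\binom{N+\ell-1}{N-1}\big/\binom{N+\ell-d-1}{N-1}\ge(1+d/\ell)^{N-1}\ge\exp(\Omega(Nd/\ell))$, choosing $\ell=\Theta\!\big(Nd/(\delta k\log n)\big)=\Theta\!\big(N\sqrt n/(c\delta\gamma\log n)\big)$ makes the bracket at least $\tfrac12\binom{N+\ell-1}{N-1}$, so $\dim\langle\partial^{=k}f\rangle_{\le\ell}\ge\tfrac{M}{2}\binom{N+\ell-1}{N-1}$.

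\emph{Putting it together, and the main obstacle.} Combining the two displays, if $C$ computes $f$ then $s\ge \tfrac{M}{2}\binom{N+\ell-1}{N-1}\big/\big(2^{O(\sqrt n)}\binom{N+O(k\sqrt n)+\ell}{N}\big)$, and what remains is the elementary estimate $\binom{N+\ell-1}{N-1}\big/\binom{N+O(k\sqrt n)+\ell}{N}\ge 2^{-O(c\delta\gamma^2\sqrt n\log n)-O(\log n)}$ for the above $\ell$ (each of the $O(k\sqrt n)$ extra factors in the denominator costs at most $1+N/\ell$, and $N\cdot O(k\sqrt n)/\ell=O(c\delta\gamma^2\sqrt n\log n)$). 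Hence $\log s\ge \delta k\log n-O(c\delta\gamma^2\sqrt n\log n)-O(\sqrt n)=\delta\gamma\,(1-O(c\gamma))\sqrt n\log n-O(\sqrt n)$, which is $\Omega(\sqrt n\log n)$ once $\gamma$ is a sufficiently small constant relative to $c$ (the hidden constants can be traced explicitly). The delicate point — and the only place real care is needed — is exactly this balancing of $\ell$: it must be large enough that the shifted monomial neighbourhoods around the far-apart leading monomials $\mathrm{LM}(g_\mu)$ are essentially disjoint (the inclusion–exclusion step), yet small enough that the binomial ratio $\binom{N+\ell-1}{N-1}/\binom{N+O(k\sqrt n)+\ell}{N}$ does not swallow the $n^{\delta k}$ gain; pinning down the window of admissible $\ell$ is precisely what ties together the constants $\gamma$, $c$ and $\delta$, and it requires only careful (but routine) binomial estimates rather than any new idea.
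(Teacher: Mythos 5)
Your argument is, in substance, the same one this paper relies on: the paper does not reprove Theorem~3 of~\cite{CM13} at all, but imports its quantitative core as Lemma~\ref{lem:lb} and then redoes exactly your parameter balancing in the proof of Theorem~\ref{thm:mainthm0}. Your inclusion--exclusion count of the leading monomials of the shifts $\rho g_\mu$ (at most $\binom{N+\ell-d-1}{N-1}$ collisions per pair, courtesy of the distance-$d$ hypothesis and a graded order) is precisely the content of that lemma, and your upper bound, obtained by regrouping the bottom factors of each product gate into $O(\sqrt n)$ chunks of degree $O(\sqrt n)$ and then applying Lemma~\ref{lem:lowdeg} together with sub-additivity, is the standard estimate on the circuit side. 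So the route is correct and matches the cited source rather than offering a genuinely different one.

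The one point to be explicit about is the caveat you yourself flag at the end: as written, your proof closes only when $c\gamma$ is below an absolute constant. The window for $\ell$ must be simultaneously large enough that $\exp(\Omega(Nd/\ell))\ge 2M$ (so the shifted neighbourhoods of the $M$ far-apart leading monomials are essentially disjoint) and small enough that the loss $\exp(O(Nk\sqrt n/\ell))$ from the binomial ratio does not swallow the gain $n^{\delta k}$; this window is nonempty only when the total degree $D=O(k\sqrt n)=O(\gamma n)$ of $k$ bottom chunks is a sufficiently small multiple of the distance $d=n/c$, i.e.\ roughly when $\gamma c$ is bounded below $1$ after tracking constants. The statement as quoted asserts the conclusion for arbitrary constants $0<\gamma<1$ and $c>1$, which your argument does not deliver. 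This is not a defect relative to the paper's own usage: the identical constraint surfaces in Lemma~\ref{lem:lb} through the hypothesis $\ell\le \frac{Nd}{2\ln(s\cdot N^2)}$ combined with the factor $e^{N(ru-r)/\ell}$ in the denominator, and the paper's applications satisfy it with room to spare (in Theorem~\ref{thm:mainthm0} one has $d\ge 0.89n$ while $ru=n/10$). But if you intend your write-up as a proof of the theorem exactly as stated, you should either add the required relation among $\gamma$, $c$, $\delta$ to the hypotheses, or note that the quoted statement is an informal rendering of~\cite{CM13} whose constants are implicitly assumed to satisfy such a relation.
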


Although the result is stated for $k$ being around $\sqrt{n}$, the theorem is also true for $k$ in a larger range of values of $k$. To show our lower bounds, we will argue that there are a ``large'' number of $k^{th}$ order partial derivatives of ${\cal P}_{p,t,n}$ for some appropriate $k$, whose leading monomials have the distance property stated above. 

To this end, we will now try and understand the monomial structure of the partial derivatives of an appropriately chosen order of ${\cal P}_{p,t,n}$. Now, from the definition of ${\cal P}_{p,t,n}$, every monomial in it can be identified by an ordered tuple of length $\tilde{t}$ over the set of polynomials in ${{\cal S}_p}$ and vice versa. So, for any $\overline{f} = (f_1, f_2, \ldots, f_{\tilde{t}}) \in {{\cal S}_p}^{\tilde{t}}$, let 
$$m_{\overline{f}} = \prod_{j\in [\tilde{t}]}\prod_{i \in C_j}x_{i, f_j(i)}$$ 

From the definitions above and that of ${\cal P}_{p,t,n}(\overline{x})$, it follows that 
$${\cal P}_{p,t,n}(\overline{x}) = \sum_{{\overline{f} \in {{\cal S}_p}^{\tilde{t}}}}m_{\overline{f}}$$

Let 
$$m_{\overline{f}}^{'} = \prod_{j\in [\tilde{t}]}\prod_{i \in C_j^{2p}}x_{i, f_j(i)}$$

When we finally set parameters, we will always have $p$ is a constant while $t$ increases with $n$. So, for $n$ large enough, $m_{\overline{f}}^{'}$ divides $m_{\overline{f}}$ and 

$$\frac{m_{\overline{f}}}{m_{\overline{f}}^{'}} = \prod_{j\in [\tilde{t}]}\prod_{i \in C_j\setminus C_j^{2p}}x_{i, f_j(i)}$$

Now, we set $k = 2p\tilde{t}$ and look at the partial derivatives of  ${\cal P}_{p,t,n}$ of order $k$.  For each $\overline{f} \in {{\cal S}_p}^{\tilde{t}}$, the degree of $m_{\overline{f}}^{'}$ equals $k$. Hence, $\partial^{=k} {\cal P}_{p,t,n}$ includes the set of partial derivatives of ${\cal P}_{p,t,n}$ with respect to $m_{\overline{f}}^{'}$ for each $\overline{f} \in {{\cal S}_p}^{\tilde{t}}$. From the definition of $m_{\overline{f}}$ and $m_{\overline{f}}^{'}$, and the fact that each polynomial in ${\cal S}_p$ has degree equal to $p$ and two distinct polynomials in ${\cal S}_p$ cannot agree on more than $p$ points, for any $\overline{f} \in {{\cal S}_p}^{\tilde{t}}$ and $\overline{g} \in {{\cal S}_p}^{\tilde{t}}$,

$$
{\partial_{m_{\overline{f}}^{'}} m_{\overline{g}}} = \left\{
        \begin{array}{ll}
            0 & \overline{f} \neq \overline{g} \\
           \frac{m_{\overline{f}}}{m_{\overline{f}}^{'}} & \overline{f} = \overline{g}
        \end{array}
    \right.
$$

From this discussion, the following lemma follows. 

\begin{lem}
For every $\overline{f} \in {{\cal S}_p}^{\tilde{t}}$, ${\partial_{m_{\overline{f}}^{'}} {\cal P}_{p,t,n}}$ is a monomial and equals $ \frac{m_{\overline{f}}}{m_{\overline{f}}^{'}}$. 
\end{lem}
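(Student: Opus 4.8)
The plan is to derive the lemma directly from the two facts already assembled just before its statement: the expansion ${\cal P}_{p,t,n}(\overline{x}) = \sum_{\overline{f} \in {{\cal S}_p}^{\tilde{t}}} m_{\overline{f}}$ together with the pointwise computation of $\partial_{m_{\overline{f}}'} m_{\overline{g}}$. Since the operator $\partial_{m_{\overline{f}}'}$ is $\F$-linear, I would write $\partial_{m_{\overline{f}}'} {\cal P}_{p,t,n} = \sum_{\overline{g} \in {{\cal S}_p}^{\tilde{t}}} \partial_{m_{\overline{f}}'} m_{\overline{g}}$ and observe that every term with $\overline{g} \neq \overline{f}$ vanishes while the single surviving term $\overline{g} = \overline{f}$ contributes $m_{\overline{f}}/m_{\overline{f}}'$, which is a monomial. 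Thus the entire content of the lemma is the case analysis of $\partial_{m_{\overline{f}}'} m_{\overline{g}}$.

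For that case analysis, the key step is the vanishing when $\overline{f} \neq \overline{g}$. Here I would use that $C_1,\dots,C_{\tilde{t}}$ partition $[n]$, so $m_{\overline{g}} = \prod_{j \in [\tilde{t}]}\prod_{i\in C_j} x_{i,g_j(i)}$ carries exactly one variable from each row $i$, with column index $g_j(i)$ (where $j$ is the block containing $i$); in particular $m_{\overline{g}}$ is squarefree and a variable $x_{i,c}$ divides it precisely when $c = g_j(i)$. Now if $\overline{f}\neq\overline{g}$, pick a coordinate $j$ with $f_j \neq g_j$: being distinct univariate polynomials of degree $p$, they agree on at most $p$ arguments, while $|C_j^{2p}| = 2p$ (well-defined once $t \geq 2p$), so some $i\in C_j^{2p}$ has $f_j(i)\neq g_j(i)$. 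Then the variable $x_{i,f_j(i)}$, which divides $m_{\overline{f}}'$, does not divide $m_{\overline{g}}$, hence $\partial_{m_{\overline{f}}'} m_{\overline{g}} = 0$. For $\overline{f} = \overline{g}$, since $m_{\overline{f}}$ is squarefree and $m_{\overline{f}}' = \prod_{j \in [\tilde{t}]} \prod_{i\in C_j^{2p}} x_{i,f_j(i)}$ divides it, the standard rule for differentiating a squarefree monomial by a squarefree sub-monomial gives $\partial_{m_{\overline{f}}'} m_{\overline{f}} = m_{\overline{f}}/m_{\overline{f}}'$ with coefficient $1$.

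I do not expect a genuine obstacle here: this is a bookkeeping lemma whose only purpose is to isolate the Nisan--Wigderson design property (low-degree polynomials coincide on few inputs) as the reason that each $m_{\overline{f}}'$ uniquely \emph{addresses} one monomial of ${\cal P}_{p,t,n}$. The two points that deserve a line of care are (i) squarefreeness, i.e.\ that the variables contributed by distinct rows really are distinct (immediate, since their first indices differ), which legitimizes the criterion ``$x_{i,c}$ divides $m_{\overline{g}}$ iff $c = g_j(i)$''; and (ii) recording the mild hypothesis $t \geq 2p$, which holds for all large $n$ since $p$ is a constant and $t = \omega(\log n)$. With these in hand the lemma is immediate, and it is exactly what is needed to produce the many order-$k$ derivatives (with $k = 2p\tilde{t}$) having pairwise far-apart leading monomials in the ensuing hardness argument.
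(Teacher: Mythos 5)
Your proposal is correct and takes essentially the same approach as the paper: the paper itself establishes the expansion ${\cal P}_{p,t,n} = \sum_{\overline{f}} m_{\overline{f}}$ and the displayed two-case formula for $\partial_{m_{\overline{f}}'} m_{\overline{g}}$ (using exactly the fact that distinct degree-$p$ polynomials agree on at most $p$ of the $2p$ points in each $C_j^{2p}$), and then states that the lemma follows. You have simply filled in the linearity step and the squarefreeness/divisibility bookkeeping that the paper leaves implicit, together with the mild proviso $t \geq 2p$ which the paper acknowledges via "for $n$ large enough."
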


At this point, we might hope to argue that for each $\overline{f} \in {{\cal S}_p}^{\tilde{t}}$ and $\overline{g} \in {{\cal S}_p}^{\tilde{t}}$ such that $\overline{f} \neq \overline{g}$, 
the distance between the monomials $\frac{m_{\overline{f}}}{m_{\overline{f}}^{'}}$ and $\frac{m_{\overline{g}}}{m_{\overline{g}}^{'}}$ is large. This statement in itself is not true, for if $\overline{f}$ and $\overline{g}$ differ in just one coordinate, then the distance between $\frac{m_{\overline{f}}}{m_{\overline{f}}^{'}}$ and $\frac{m_{\overline{g}}}{m_{\overline{g}}^{'}}$ could be as small as $t-3p$, which as it turns out is insufficient to achieve the desired bounds. Observe that if $\overline{f}$ and $\overline{g}$ differ in $i$ coordinates, then the distance between $\frac{m_{\overline{f}}}{m_{\overline{f}}^{'}}$ and $\frac{m_{\overline{g}}}{m_{\overline{g}}^{'}}$ is at least $it-3pi$. (We prove this fact in Lemma~\ref{lem:distderivative}). 
To prove the lower bound, we will show that there is a ``large'' nice subset ${\cal N} \subseteq {{\cal S}_p}^{\tilde{t}}$ such any $\overline{f}$ and $\overline{g}$ in ${\cal N}$ differ in a constant fraction of all coordinates. The following lemma, which just follows from the existence and properties of Reed-Solomon codes guarantees the existence of such an $\cal N$. 

\begin{lem}~\label{lem:reedsolomon}
Let $0 < \alpha < 1$ be any absolute constant and let $q$ be a prime power. For any alphabet $\Sigma$ of size $q$ and positive integer $m$ such that $m < q$, there is a set $\cal C$ of strings of length $m$ over $\Sigma$ of size $q^{{(1-\alpha)m}}$ such that any two strings in $\cal C$ are at a distance at least $\alpha m$ apart. 
\end{lem}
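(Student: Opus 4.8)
The statement to prove is Lemma~\ref{lem:reedsolomon}, which asserts the existence of a large code with good minimum distance over an alphabet of size $q$ (a prime power), with block length $m < q$. This is precisely the statement that Reed--Solomon codes achieve the Singleton bound, and the proof is a short standard argument; I would present it as follows.

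\textbf{Approach.} The plan is to construct $\cal C$ explicitly as a Reed--Solomon code and then verify the two claimed properties (size and minimum distance) directly. Identify the alphabet $\Sigma$ with the finite field $\F_q$, which is possible since $|\Sigma| = q$ is a prime power. Since $m < q$, we may fix $m$ distinct field elements $a_1, a_2, \ldots, a_m \in \F_q$ (the evaluation points). Set $k = \lfloor (1-\alpha)m \rfloor$ to be the message length. The code is then
$$ {\cal C} = \left\{ \bigl( g(a_1), g(a_2), \ldots, g(a_m) \bigr) : g \in \F_q[z], \ \deg(g) < k \right\}, $$
i.e. the set of evaluation vectors of all univariate polynomials of degree less than $k$.

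\textbf{Key steps.} First, the size: the map sending a polynomial $g$ of degree less than $k$ to its evaluation vector is injective, because two polynomials of degree less than $k \le m$ that agree on the $m \ge k$ distinct points $a_1, \ldots, a_m$ must be identical (a nonzero polynomial of degree less than $k$ has fewer than $k \le m$ roots). Hence $|{\cal C}| = q^k = q^{\lfloor (1-\alpha)m \rfloor}$, which is at least $q^{(1-\alpha)m}$ up to the floor; since the lemma as stated writes $q^{(1-\alpha)m}$ and the paper's convention (stated just before the lemma) is that $\alpha m$ denotes $\lfloor \alpha m \rfloor$, this matches. Second, the distance: take two distinct codewords coming from distinct polynomials $g_1 \ne g_2$, each of degree less than $k$. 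The number of coordinates where they agree equals the number of $a_i$ with $g_1(a_i) = g_2(a_i)$, i.e. the number of roots of $g_1 - g_2$ among $a_1, \ldots, a_m$. Since $g_1 - g_2$ is a nonzero polynomial of degree less than $k$, it has at most $k - 1$ roots, so the two codewords agree in at most $k-1$ coordinates and therefore disagree in at least $m - (k-1) = m - \lfloor(1-\alpha)m\rfloor + 1 \ge \alpha m$ coordinates. This gives the claimed minimum distance $\alpha m$.

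\textbf{Main obstacle.} Honestly there is no serious obstacle here; this is a textbook fact about Reed--Solomon codes. The only points requiring a little care are bookkeeping around the floor in $k = \lfloor(1-\alpha)m\rfloor$ to make sure both the size bound $q^{(1-\alpha)m}$ and the distance bound $\alpha m$ come out with the right rounding (both are fine given the paper's floor convention for $\alpha m$), and the hypothesis $m < q$, which is exactly what guarantees we can pick $m$ distinct evaluation points in $\F_q$. I would keep the proof to a single short paragraph invoking these two polynomial-root counting observations.
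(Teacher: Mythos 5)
Your proof is correct and takes the same route as the paper: both invoke Reed--Solomon codes of message length $(1-\alpha)m$ and block length $m$ over $\F_q$, with the size and distance read off from standard properties. The paper states these properties without unpacking them, while you supply the routine polynomial-root-counting justification; the two are substantively identical.
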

\begin{proof}
Let $\cal C$ be the set of codewords obtained when the set $\Sigma^{(1-\alpha)m}$ is encoded using Reed-Solomon codes of message length $(1-\alpha){m}$ and code length $m$. The distance of the code is $\alpha m$ and the number of codewords is $q^{{(1-\alpha)m}}$. Hence the set satisfies the properties stated in the statement.  
\end{proof}

Lemma~\ref{lem:reedsolomon} immediately implies the existence of a set $\cal N$, when invoked with parameters $\Sigma = {{\cal S}_p}$, $m = \tilde{t}$. So, we have the following corollary.

\begin{cor}~\label{lem:niceset}
For all $\alpha$ such that $0< \alpha < 1$, there exists ${\cal N} \subseteq {{\cal S}_p}^{\tilde{t}}$ of size equal to $n^{{(1-\alpha)(p+1)\tilde{t}}} $ such that for any distinct pair $\overline{f}$ and $\overline{g}$  in ${\cal N}$, $\overline{f}$ and $\overline{g}$ differ in at least $\alpha{\tilde{t}}$ coordinates.
\end{cor}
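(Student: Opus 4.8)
The plan is to obtain $\cal N$ as nothing more than the codeword set of a Reed--Solomon code, i.e. to apply Lemma~\ref{lem:reedsolomon} with the alphabet taken to be $\Sigma = {\cal S}_p$, the block length taken to be $m = \tilde{t}$, and the alphabet size $q = |{\cal S}_p|$. The lemma then hands us a set of strings of length $\tilde{t}$ over ${\cal S}_p$ — equivalently a subset ${\cal N} \subseteq {\cal S}_p^{\tilde{t}}$ — of size $q^{(1-\alpha)\tilde{t}}$ in which any two distinct strings disagree on at least $\alpha\tilde{t}$ coordinates. Since the notion of distance between tuples in ${\cal S}_p^{\tilde{t}}$ used in the corollary is literally the Hamming distance on length-$\tilde{t}$ strings over the alphabet ${\cal S}_p$, the distance guarantee transfers verbatim, and it only remains to check that $q^{(1-\alpha)\tilde{t}} = n^{(1-\alpha)(p+1)\tilde{t}}$ and that the hypotheses of Lemma~\ref{lem:reedsolomon} are met.

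The one place requiring a bit of care is that Lemma~\ref{lem:reedsolomon} wants $q$ to be a prime power and $m < q$. For the prime-power condition I would not insist on ``polynomials of degree exactly $p$'' but instead identify ${\cal S}_p$ with the $\F_n$-vector space of univariate polynomials of degree at most $p$, i.e. with $\F_n^{p+1}$; this keeps the only property we ever use (two distinct elements agree on at most $p$ points), has size exactly $n^{p+1}$, and is a prime power because $n$ is. Then $q^{(1-\alpha)\tilde{t}} = (n^{p+1})^{(1-\alpha)\tilde{t}} = n^{(1-\alpha)(p+1)\tilde{t}}$, as claimed. For the inequality $m < q$, note that $p$ is a constant while $\tilde{t} = n/t$ with $t = \omega(\log n)$, so $\tilde{t} = o(n/\log n)$, which is far below $q = n^{p+1}$ once $n$ is large; hence $m < q$ holds in the regime we care about.

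There is essentially no genuine obstacle here — the corollary is a direct instantiation of the previous lemma — and the only items to keep honest are the two bookkeeping points just mentioned: viewing the alphabet ${\cal S}_p$ as $\F_{n^{p+1}}$ so that it is a prime power of the right size, and the bound $\tilde{t} < n^{p+1}$, both of which follow immediately from the standing assumptions that $n$ is a prime power, $p$ is a constant, and $t = \omega(\log n)$.
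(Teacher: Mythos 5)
Your proof is correct and is essentially identical to the paper's: the paper simply invokes Lemma~\ref{lem:reedsolomon} with $\Sigma = {\cal S}_p$ and $m = \tilde{t}$, exactly as you do. The extra bookkeeping you supply (taking $|{\cal S}_p| = n^{p+1}$ so the alphabet size is a prime power, and checking $\tilde{t} < n^{p+1}$) addresses the two hypotheses of that lemma which the paper leaves implicit, and both checks are right.
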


Informally, the set $\cal N$ now gives us a large number of partial derivatives which are at a large distance from each other. We formalize this claim in the lemma below. 
%Therefore, the distance between the monomials $m_{\overline{f}}$ and $m_{\overline{g}}$, for distinct $\overline{f}$ and $\overline{g}$  in ${\cal N}$ is at least $\frac{\tilde{t}}{2}(t_1-p)$. Hence, after taking partial derivatives of order $k = (2p)\tilde{t}$, the distance between ${\partial_{m_{\overline{f}}^{'}} {\cal P}_{p,t,n}(\overline{x})}$ and ${\partial_{m_{\overline{g}}^{'}} {\cal P}_{p,t,n}(\overline{x})}$ is still at least $\frac{\tilde{t}}{2}(t_1-p) - \frac{\tilde{t}}{2}(2p) = \frac{\tilde{t}}{2}(t_1-2p-1)$. Therefore, we have the following lemma.

\begin{lem}~\label{lem:distderivative}
For $k = 2p\tilde{t}$, the set $\partial^{=k} {\cal P}_{p,t,n}$ has a subset $S$ of size at least $n^{(1-\alpha){(p+1)\tilde{t}}}$ such that every element in this subset is a monomial and any two such monomials are at a distance of at least $\alpha{\tilde{t}}(t-3p)$ from each other. 
\end{lem}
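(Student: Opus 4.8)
The plan is to combine the two structural facts already established: (1) by the preceding lemma, for each $\overline{f} \in {{\cal S}_p}^{\tilde t}$ the derivative ${\partial_{m_{\overline{f}}^{'}} {\cal P}_{p,t,n}}$ is exactly the monomial $m_{\overline{f}}/m_{\overline{f}}^{'} = \prod_{j\in[\tilde t]}\prod_{i\in C_j\setminus C_j^{2p}} x_{i,f_j(i)}$, so these monomials all lie in $\partial^{=k}{\cal P}_{p,t,n}$ for $k = 2p\tilde t$; and (2) by Corollary~\ref{lem:niceset}, there is a set ${\cal N}\subseteq {{\cal S}_p}^{\tilde t}$ of size $n^{(1-\alpha)(p+1)\tilde t}$ whose elements pairwise differ in at least $\alpha\tilde t$ coordinates. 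I would take $S = \{\, m_{\overline{f}}/m_{\overline{f}}^{'} : \overline{f}\in{\cal N}\,\}$. The size claim is then immediate provided the map $\overline{f}\mapsto m_{\overline{f}}/m_{\overline{f}}^{'}$ is injective on ${\cal N}$ — which it is, since distinct $\overline{f},\overline{g}$ differ in some coordinate $j$, and then the restriction of $m_{\overline{f}}/m_{\overline{f}}^{'}$ to the block of variables $\{x_{i,\cdot} : i\in C_j\setminus C_j^{2p}\}$ records the values $f_j(i)$, which differ from $g_j(i)$ on at least one point since two distinct degree-$p$ polynomials in ${\cal S}_p$ agree on at most $p$ points while $|C_j\setminus C_j^{2p}| = t - 2p > p$.

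The main content is the distance bound. First I would prove the auxiliary claim (the fact referenced in the text as "Lemma~\ref{lem:distderivative}" for the inter-coordinate count, but really the per-pair estimate): if $\overline{f}$ and $\overline{g}$ differ in exactly $a$ coordinates, then $\Delta\!\left(\dfrac{m_{\overline{f}}}{m_{\overline{f}}^{'}},\ \dfrac{m_{\overline{g}}}{m_{\overline{g}}^{'}}\right) \ge a(t-3p)$. To see this, note that both monomials are multilinear (each variable $x_{i,\cdot}$ appears at most once, since the indices $i$ range over disjoint sets $C_j\setminus C_j^{2p}$ and within a block the second index is a function value), so distances are governed by which variables appear. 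On a coordinate $j$ where $f_j = g_j$, the two monomials contribute identical variable-sets on block $C_j\setminus C_j^{2p}$. On a coordinate $j$ where $f_j\neq g_j$, the polynomials $f_j,g_j$ agree on at most $p$ of the $n$ points, hence on at most $p$ points of the size-$(t-2p)$ set $C_j\setminus C_j^{2p}$; so on this block $m_{\overline{f}}/m_{\overline{f}}^{'}$ has at least $(t-2p)-p = t-3p$ variables not shared with $m_{\overline{g}}/m_{\overline{g}}^{'}$. Summing the "private" variables over the $a$ disagreeing coordinates gives $|S_1| - |S_1\cap S_2| \ge a(t-3p)$, and symmetrically for $S_2$, so the minimum in the definition of $\Delta$ is at least $a(t-3p)$.

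Putting these together: for distinct $\overline{f},\overline{g}\in{\cal N}$ we have $a\ge\alpha\tilde t$, so the distance between the corresponding monomials is at least $\alpha\tilde t(t-3p)$, which is exactly the asserted bound $\alpha\tilde t(t-3p)$. Combined with $|S| = |{\cal N}| = n^{(1-\alpha)(p+1)\tilde t}$ this is the statement of the lemma. I expect the only delicate point to be the injectivity/multilinearity bookkeeping — making sure that "distance between monomials" really does decompose block-by-block and that nothing weird happens because a variable $x_{i,j}$ could in principle be indexed by the same $i$ in two different factors; but since within ${\cal P}_{p,t,n}$ the row indices $i\in[n]$ are partitioned across the blocks $C_j$, each $i$ lives in a unique block, so each variable's multiplicity in $m_{\overline{f}}/m_{\overline{f}}^{'}$ is at most $1$ and the decomposition is clean. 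A small care is also needed that $t-3p > 0$, which holds for $p$ constant and $t = \omega(\log n)$ in the eventual parameter regime.
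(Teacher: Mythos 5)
Your proof is correct and follows essentially the same route as the paper: take $S=\{m_{\overline f}/m_{\overline f}':\overline f\in{\cal N}\}$, use the fact that on each disagreeing block $C_j$ the two polynomials $f_j,g_j$ can agree on at most $p$ points, and sum the resulting $t-3p$ per-block disagreements over the $\ge\alpha\tilde t$ disagreeing coordinates. Your version is in fact a touch cleaner — you work directly with the $t-2p$ indices of $C_j\setminus C_j^{2p}$ and subtract $p$ agreements, whereas the paper first bounds the disagreement on $C_j$ by $t-p$ for $m_{\overline f}$ vs.\ $m_{\overline g}$ and then accounts for the $2p$ variables removed by $m_{\overline f}'$ — and you make the injectivity check explicit, which the paper leaves implicit.
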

\begin{proof}
Let us pick any two $\overline{f}$ and $\overline{g}$ in $\cal N$. Let $i \in [\tilde{t}]$ be an index such that $f_i \neq g_i$. Then over the set $C_i$, $f_i$ and $g_i$ can agree at at most $p$ points. Therefore, the monomials $m_{\overline{f}}$ and $m_{\overline{g}}$ differ in at least $t-p$ variables of the form $x_{h,j}$ for $h \in C_i$. Now, for each $i$ and each $\overline{f} \in {{\cal S}_p}^{\tilde{t}}$, $m_{\overline{f}}^{'}$ contains exactly  $2p$ variables $x_{h,j}$ with $h \in C_i$. Hence, $\frac{m_{\overline{f}}}{m_{\overline{f}}^{'}}$ and $\frac{m_{\overline{g}}}{m_{\overline{g}}^{'}}$ differ in at least $t-3p$ variables of the form $x_{h,j}$ for $h \in C_i$. So, each coordinate $i$ where $\overline{f}$ and $\overline{g}$ differ from each other contributes $t-3p$ to the distance between $\frac{m_{\overline{f}}}{m_{\overline{f}}^{'}}$ and $\frac{m_{\overline{g}}}{m_{\overline{g}}^{'}}$. Hence, for every $\overline{f}$ and $\overline{g}$ 
$\in {\cal N}$,  $\frac{m_{\overline{f}}}{m_{\overline{f}}^{'}}$ and $\frac{m_{\overline{g}}}{m_{\overline{g}}^{'}}$ are at a distance at least $\alpha{\tilde{t}}(t-3p)$ apart. The lemma now follows from the fact that the size of $\cal N$ is at least $n^{(1-\alpha){(p+1)\tilde{t}}}$. 
\end{proof}

We now essentially have all the ingredients we need for showing lower bounds for homogeneous $\hsp$ circuits computing ${\cal P}_{p,t,n}$.  We will use the following lemma which is implicit in~\cite{CM13}. A similar calculation also appears in~\cite{FLMS13}.

\begin{lem}[Implicit in~\cite{CM13}]~\label{lem:lb}
Let $Q = \sum_{i = 1}^{s^{'}} {Q_{i1}Q_{i2}\ldots Q_{iz}}$ where each $Q_{ij}$ is an $N$ variate polynomial of degree bounded by $u$. Also, for some $r \leq z$, suppose there are $s$ elements in $\partial^{=r} Q$ such that the distance between the leading monomial of any two of these is at least $d$. Then, for any positive integer $\ell$ such that $\ell \leq \frac{Nd}{2\ln(s.N^2)}$ and $(ru-r)^2 = o(\ell)$, 

$$s^{'} \geq \frac{s(1-\frac{1}{N^2})}{{z+r \choose r}{e^{\frac{N(ru-r)}{\ell}}}}$$. 

\end{lem}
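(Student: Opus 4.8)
The plan is to follow the now-standard shifted-partial-derivative counting argument, combining the upper bound on $\dim(\langle \partial^{=r} Q\rangle_{\leq \ell})$ coming from the $\spsp$-structure of $Q$ with a lower bound coming from the assumed collection of far-apart leading monomials. First I would bound the dimension from above. Each summand $Q_{i1}\cdots Q_{iz}$ is a product of $z$ polynomials of degree at most $u$; an order-$r$ partial derivative of such a product is, by the product rule, an $\F$-linear combination of terms of the form $(\partial^{=r_1}Q_{i1})\cdots(\partial^{=r_z}Q_{iz})$ with $\sum r_j = r$, and grouping these, one sees that $\partial^{=r}(Q_{i1}\cdots Q_{iz})$ lies in the span of at most $\binom{z+r}{r}$ polynomials, each of which is a product where the total degree has dropped by $r$, i.e. of degree at most $zu - r$. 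Shifting by monomials of degree $\leq \ell$ and using the standard estimate (as in~\cite{GKKS12, KSS13, FLMS13}) for how the number of monomials of a bounded-degree polynomial grows under shifts, one obtains
$$\dim(\langle \partial^{=r} Q\rangle_{\leq \ell}) \leq s' \cdot \binom{z+r}{r} \cdot \binom{N + \ell + zu - r}{N} \leq s' \cdot \binom{z+r}{r} \cdot \binom{N+\ell}{N}\cdot e^{N(ru-r)/\ell},$$
where the last inequality uses the hypothesis $(ru-r)^2 = o(\ell)$ to control the ratio $\binom{N+\ell+ru-r}{N}/\binom{N+\ell}{N}$ by $e^{N(ru-r)/\ell}$ (a routine estimate; note it is $ru-r$ and not $zu-r$ that appears because one only needs to compare against the bound one is about to prove for the hard polynomial, which has derivatives of degree $\approx ru$).

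Second I would bound the dimension from below using the $s$ far-apart leading monomials. This is the part imported from~\cite{CM13, FLMS13}: given $s$ elements $g_1,\dots,g_s \in \partial^{=r}Q$ whose leading monomials are pairwise at distance $\geq d$, one shows that the shifts $\{x^\beta g_i : |\beta|\leq \ell\}$ span a space of dimension at least $s(1 - 1/N^2)\binom{N+\ell}{N}$, provided $\ell \leq \frac{Nd}{2\ln(sN^2)}$. The idea is a leading-monomial / greedy-independence argument: order the $g_i$ by their leading monomials; when one shifts $g_i$ by all monomials of degree $\leq \ell$, one gets $\binom{N+\ell}{N}$ distinct leading monomials, and the distance-$d$ separation together with the bound on $\ell$ guarantees that only a $1/N^2$ fraction of these can collide with leading monomials arising from the shifts of the other $g_j$'s — a union-bound / counting estimate on how many degree-$\leq\ell$ shifts of one monomial can land within distance-$\ell$ of another monomial that is $d$-far. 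Hence the $s$ blocks of shifts are ``mostly'' independent and contribute the claimed dimension.

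Finally I would combine the two bounds:
$$s' \cdot \binom{z+r}{r} \cdot \binom{N+\ell}{N}\cdot e^{N(ru-r)/\ell} \;\geq\; \dim(\langle \partial^{=r}Q\rangle_{\leq\ell}) \;\geq\; s\Bigl(1-\tfrac{1}{N^2}\Bigr)\binom{N+\ell}{N},$$
and cancelling $\binom{N+\ell}{N}$ and rearranging yields exactly
$$s' \geq \frac{s(1-\frac{1}{N^2})}{\binom{z+r}{r}\, e^{N(ru-r)/\ell}},$$
as claimed. The main obstacle — the only genuinely nontrivial step — is the lower bound in the second paragraph: making the leading-monomial independence argument precise, in particular verifying that the distance hypothesis $\Delta \geq d$ combined with $\ell \leq \frac{Nd}{2\ln(sN^2)}$ really does force the overlap between the shift-blocks of distinct $g_i$ to be a negligible ($\leq 1/N^2$) fraction; this is where one needs the quantitative estimate on the number of monomials of degree $\leq \ell$ that can be obtained from a fixed monomial by moving at most $\ell$ variables, and it is exactly the computation carried out in~\cite{CM13} and~\cite{FLMS13}, so I would cite it rather than reprove it. The upper bound and the final arithmetic are routine given the tools already set up in the excerpt (Lemma~\ref{lem:lowdeg} is the analogous estimate for a single product gate).
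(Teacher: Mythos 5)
Your overall strategy --- upper-bound $\dim\langle\partial^{=r}Q\rangle_{\le\ell}$ via the $\spsp$-structure, lower-bound it via the $s$ far-apart leading monomials, and combine --- is the argument the paper is invoking (the paper gives no explicit proof of this lemma, deferring to~\cite{CM13} and~\cite{FLMS13}). Your second and third paragraphs are fine.

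However, your first paragraph (the upper bound) has a genuine gap. You write $\dim\langle\partial^{=r}Q\rangle_{\le\ell}\le s'\binom{z+r}{r}\binom{N+\ell+zu-r}{N}$, and then claim this is at most $s'\binom{z+r}{r}\binom{N+\ell}{N}\,e^{N(ru-r)/\ell}$, explaining the replacement of $zu-r$ by $ru-r$ with the remark that ``one only needs to compare against the bound one is about to prove for the hard polynomial.'' That remark does not make the inequality true: when $z\gg r$ (as in the applications here, where $z\approx n/u$ while $ru$ is a small fraction of $n$), the ratio $\binom{N+\ell+zu-r}{N}/\binom{N+\ell}{N}$ is roughly $e^{N(zu-r)/\ell}$, not $e^{N(ru-r)/\ell}$, so the displayed chain of inequalities is simply false as written. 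You cannot shift the full order-$r$ derivative of the product (which genuinely has degree $\approx zu-r$) and expect the excess degree to be ignorable because the other side of the comparison happens to be smaller.

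The correct move is precisely what Lemma~\ref{lem:lowdeg} encapsulates, applied with $D=ru$: each product-rule term $(\partial^{r_1}Q_{i1})\cdots(\partial^{r_z}Q_{iz})$ with $\sum_j r_j=r$ has at most $r$ indices with $r_j>0$, so it factors as $A\cdot B$ where $A=\prod_{r_j>0}\partial^{r_j}Q_{ij}$ has degree $\le ru-r$ and $B=\prod_{r_j=0}Q_{ij}$ is a \emph{fixed} polynomial. The span of $\{x^\beta\cdot A\cdot B:|\beta|\le\ell\}$ is the image, under the fixed linear map ``multiply by $B$,'' of the span of $\{x^\beta\cdot A:|\beta|\le\ell\}$, and hence has dimension at most the number of monomials of degree $\le\ell+ru-r$, namely $\binom{N+\ell+ru-r}{N}$; the parameter $z$ never enters the degree count. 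This gives $\dim\langle\partial^{=r}Q\rangle_{\le\ell}\le s'\binom{z+r}{r}\binom{N+\ell+ru-r}{N}$ directly, after which your binomial-ratio estimate using $(ru-r)^2=o(\ell)$ is legitimate and the rest of your argument goes through unchanged.
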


%We will now show a lower bounds 
%We will invoke Lemma~\ref{lem:lb} for ${\cal P}_{p,t,n}$ and obtain the following theorem.
%We are now all set to show the lower bound for ${\cal P}_{p,t,n}$ for $p = 1$. We will relabel the polynomial family ${\cal P}_{1,t,n}$
For $p = 1$, we will call the corresponding polynomial family $\{{\cal P}_{p,t,n}\}_n$ as $\{{\cal P}_{t,n}\}_n$. We will now prove the following lower bound for homogeneous $\hspn$ circuits computing ${\cal P}_{t,n}$.
\begin{thm}~\label{thm:mainthm0}
For any $ \omega(\log n) \leq t \leq \frac{n}{40}$, any homogeneous $\hspn$ circuit computing ${\cal P}_{t,n}$ has top fan-in at least $2^{\Omega{(\frac{n}{t}\log n})}$. 
\end{thm}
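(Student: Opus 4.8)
The plan is to apply Lemma~\ref{lem:lb} to a homogeneous $\hspn$ circuit $C = \sum_{i=1}^{s'} Q_{i1}\cdots Q_{iz}$ computing ${\cal P}_{t,n}$, using the large well-separated family of partial derivatives furnished by Lemma~\ref{lem:distderivative} (specialized to $p=1$). First I would fix the order of differentiation to $k = 2p\tilde{t} = 2\tilde{t} = 2n/t$, which by Lemma~\ref{lem:distderivative} gives a subset $S \subseteq \partial^{=k}{\cal P}_{t,n}$ of size at least $n^{(1-\alpha)(p+1)\tilde{t}} = n^{2(1-\alpha)n/t}$, all of whose elements are monomials pairwise at distance at least $d = \alpha \tilde{t}(t - 3p) = \alpha \tilde{t}(t-3)$, which is $\Omega(\alpha n)$ for the stated range of $t$. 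Here we are free to choose the constant $\alpha \in (0,1)$; I'd fix it to a convenient value (say $\alpha = 1/2$) once the arithmetic is clear. Since for a homogeneous $\hspn$ circuit the bottom fan-in is bounded by $a = t/20$ and each bottom $\Sigma\Pi$ gate has degree $\le t/20$, and the product fan-in $z$ at the middle layer is at most $n/(\text{minimum degree of a bottom factor})$; more carefully, since the circuit is homogeneous of degree $n$ and each factor has degree between $1$ and $t/20$, we have $z \le n$ and, more usefully, the relevant parameter $u$ in Lemma~\ref{lem:lb} is $u = t/20$ while $r = k = 2n/t$.

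Next I would plug these into the conclusion of Lemma~\ref{lem:lb}:
\begin{equation}
s' \;\ge\; \frac{|S|\,(1 - 1/N^2)}{\binom{z+r}{r}\, e^{N(ru-r)/\ell}}.
\end{equation}
The number of variables is $N = n^2$. The separation gives $d = \Omega(n)$, so the constraint $\ell \le Nd/(2\ln(|S|\cdot N^2))$ permits $\ell$ as large as roughly $n^2 \cdot n / (2\ln(n^{O(n/t)} \cdot n^4)) = \Omega(n^3 t / (n \log n)) = \Omega(n^2 t/\log n)$; I would set $\ell = \Theta(n^2 t/\log n)$, which also needs to satisfy $(ru - r)^2 = o(\ell)$ — here $ru - r = r(u-1) = (2n/t)(t/20 - 1) = \Theta(n)$, so $(ru-r)^2 = \Theta(n^2)$, and we need $n^2 = o(n^2 t/\log n)$, i.e. $t = \omega(\log n)$, which is exactly the hypothesis. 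With this choice, $N(ru-r)/\ell = \Theta(n^2 \cdot n / (n^2 t/\log n)) = \Theta(n\log n / t) = \Theta(k \log n)$, so the exponential denominator factor is $e^{\Theta(k\log n)} = n^{\Theta(k)}$. The binomial $\binom{z+r}{r} \le \binom{n + 2n/t}{2n/t} \le (e n t / (2n))^{2n/t} \cdot \text{(lower order)} = 2^{O((n/t)\log t)} = n^{O(n/(t)\cdot (\log t/\log n))}$, which is $n^{o(k)}$ since $\log t \le \log n$ — actually I should be careful: $\binom{z+r}{r} \le (e(z+r)/r)^r \le (e \cdot n t/(2n))^{2n/t}$ is wrong dimensionally; rather $(z+r)/r \le (n + 2n/t)/(2n/t) = t/2 + 1$, so $\binom{z+r}{r} \le (et/2 + e)^{2n/t} = 2^{O((n/t)\log t)}$, which is at most $n^{O((n/t))}$ and more precisely $n^{o(n/t \cdot \log n)}$... this is the step I'd need to watch. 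Combining, $\log s' \ge \log|S| - \log\binom{z+r}{r} - \Theta(k\log n) \ge 2(1-\alpha)(n/t)\log n - O((n/t)\log t) - \Theta((n/t)\log n)$, and by choosing $\alpha$ small enough and using $\log t \le \log n$ this is $\Omega((n/t)\log n)$, giving the claimed $s' \ge 2^{\Omega((n/t)\log n)}$ on the top fan-in. Since the size is at least the top fan-in, the theorem follows.

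The main obstacle I anticipate is getting the constants to align: one must simultaneously ensure (i) $\ell$ is large enough that the $e^{N(ru-r)/\ell}$ factor contributes only $n^{O(n/t)}$ rather than swamping $|S|$, (ii) $\ell$ is small enough to satisfy $\ell \le Nd/(2\ln(sN^2))$, and (iii) the side condition $(ru-r)^2 = o(\ell)$ holds — and all three pin down $\ell$ to the window $\Theta(n^2 t/\log n)$, which is non-empty precisely because $t = \omega(\log n)$. In particular the condition $(ru-r)^2 = o(\ell)$ is where the hypothesis $t = \omega(\log n)$ is genuinely used, so I'd make sure that dependence is transparent. A secondary point is bounding the middle product fan-in $z$ and the degree parameter $u$ for a genuine homogeneous $\hspn$ circuit as opposed to one in normal form; I'd first note that we may assume each $Q_{ij}$ has degree at least $1$ and the product of all of them has degree exactly $n$ (by homogeneity), so $z \le n$, and each $Q_{ij}$ has degree at most $t/20$, which is all that's needed. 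Everything else — invoking Lemma~\ref{lem:distderivative}, the binomial estimates — is routine.
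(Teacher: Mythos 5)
Your overall route — specialize Lemma~\ref{lem:distderivative} to $p=1$ to produce $n^{2(1-\alpha)\tilde t}$ pairwise-distant partial derivatives with $k = 2\tilde t$, then feed $r=k$, $u = t/20$, $\ell = \Theta(n^2 t/\log n)$ into Lemma~\ref{lem:lb} — is exactly the paper's proof. But there is a genuine gap in how you bound the middle-layer fan-in $z$, and you half-noticed it (``this is the step I'd need to watch'') without resolving it.

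You take $z \le n$, which gives $\binom{z+r}{r} \le \binom{n+2n/t}{2n/t} = 2^{\Theta((n/t)\log t)}$. For $t = n^{\Theta(1)}$ (which includes the main regime $t = \Theta(\sqrt n)$ and all of $t$ up to $n/40$), we have $\log t = \Theta(\log n)$, so this denominator factor is $2^{\Theta((n/t)\log n)}$ — of the same order as $|S|$. Concretely, with the paper's $\alpha=0.9$ and $t=n/40$: $|S| = n^{0.2\cdot 40} = n^{8}$ but $\binom{n+80}{80} \approx n^{80}$, so the quotient is $\ll 1$ and the lemma yields nothing. The fix is the normalization the paper sets up in the Preliminaries: by multiplying together bottom factors so that each grouped factor has degree $\Theta(t/20)$, one passes to a $\Sigma\Pi^{[O(n/t)]}\Sigma\Pi^{[t/20]}$ circuit with the \emph{same top fan-in}, hence $z = O(n/t)$. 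Then $\binom{z+r}{r} = 2^{O(n/t)}$, which is $n^{o(n/t)}$ and harmless. You must make this reduction explicit before invoking Lemma~\ref{lem:lb}.

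A second, smaller issue: your remark that one should ``choose $\alpha$ small enough'' points the wrong way. Shrinking $\alpha$ boosts $|S|$, but it also shrinks the distance $d = \alpha\tilde t(t-3)$, which tightens the constraint $\ell \le Nd/(2\ln(sN^2))$ and therefore \emph{inflates} the $e^{N(ru-r)/\ell}$ penalty. Tracing the constants, you need $\alpha$ bounded away from $0$ (roughly $\alpha \ge 4/5$ to make your own choice $\ell = n^2t/\ln n$ admissible); the paper takes $\alpha = 0.9$. Once $z=O(n/t)$ is in place, any such fixed large $\alpha$ works and the remaining arithmetic goes through exactly as you sketched, with $t = \omega(\log n)$ used precisely to satisfy $(ru-r)^2 = \Theta(n^2) = o(\ell)$.
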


\begin{proof}
We will invoke Lemma~\ref{lem:lb} after setting the parameters used in it appropriately. Let
\begin{itemize}
\item $p = 1$
\item $\alpha = 0.9$
\item $N = n^2$
\item $u = \frac{t}{20}$
\item $z = O(\frac{n}{t})$
\item $r = \frac{2n}{t}$
\end{itemize}
Now, Lemma~\ref{lem:distderivative} implies that there is a set $S$ of size $s = n^{0.2\tilde{t}}$ such that any two monomials in $S$ are at a distance at least $d = \alpha{\tilde{t}}(t-3p) = 0.9n - 2.7\frac{n}{t}$. Observe that for $t > 270$, we get $d \geq 0.89n$. We now need to set $\ell$ to a value which satisfies the constraints in the hypothesis of Lemma~\ref{lem:lb} and which also implies a non trivial lower bound on $s^{'}$. The hypothesis of Lemma~\ref{lem:lb} requires that $\ell \leq \frac{Nd}{2\ln(s.N^2)}$. Substituting the values of $p, N, s, d$, we require $\ell \leq \frac{n^2\times 0.89 n}{2(4 + \frac{0.2n}{t})\ln n}$. For $t < \frac{n}{20}$, any $\ell < \frac{0.89n^2t}{0.8\ln t}$ will satisfy this constraint since, $\frac{n^2\times 0.89 n}{2(4 + \frac{0.2n}{t})\ln n} \geq \frac{0.89n^2t}{0.8\ln t}$. Observe that the term ${z+r \choose r}$ is of the order $2^{O{(\frac{n}{t})}}$ by Shannon's entropy estimation. The other high order term in the denominator is $e^{\frac{Nru}{\ell}} = e^{\frac{n^3}{10\ell}}$. On the other hand, the highest order term in the numerator is $s = n^{0.2\frac{n}{t}}$. So, for a lower bound of the order  $n^{\Omega(\frac{n}{t})}$, we will ensure that $0.2\frac{n\ln n}{t} \geq 1.1{\frac{n^3}{10\ell}}$. This requires $\ell > \frac{0.55n^2t}{\ln n}$. So, we need $\frac{0.55n^2t}{\ln n} <\ell< \frac{0.89n^2t}{0.80\ln t}$ and $o(\ell) = n^2$. Let us set $\ell = \frac{n^2t}{\ln n}$. For $t$ being $\omega(\log n)$, this satisfies $o(\ell) = n^2$. Substituting all these values into the expression in Lemma~\ref{lem:lb}, we get $s^{'} \geq 2^{\Omega(\frac{n}{t}\log n)}$. 
\end{proof}
This completes the proof of Theorem~\ref{thm:hierarchy}. We will now build upon this proof to obtain Theorem~\ref{thm:dr}.
%\subsection{Wrapping Up}~\label{subsection:wrapup}
\subsection{Proof of Theorem~\ref{thm:dr}}
So far, we have constructed a polynomial family ${\cal P}_{t,n}$ such that ${\cal P}_{t,n}$ requires homogeneous $\hspn$ circuits with top fan-in at least $n^{\Omega(\frac{n}{t})}$. We can now build upon the construction of ${\cal P}_{t,n}$ described so far to construct a single polynomial family which is hard for any homogeneous $\hsp$ circuit for every $\omega(\log n) \leq a \leq \frac{n}{800}$. We will now use a variation of the interpolation trick described in Lemma 14 in~\cite{KSS13}. The idea now is just to take a linear combination of ${\cal P}_{a,n}$ for $O(\log n)$ many such values of $a$, with coefficients being the variables $\overline{y}$, such that for every $a$ such that $\omega(\log n) \leq a \leq \frac{n}{800}$, there is a $t$ such that $ 20a\leq t \leq 400a$ and such that ${\cal P}_{t,n}$ is in the linear combination. 

In particular let us define the following family of polynomials ${\cal Q}_{n}$:

$$ {\cal Q}_{n}(\overline{x}, \overline{y}) = \sum_{i = 0}^{O(\log n)} y_i\cdot{\cal P}_{{20}^i,n}(\overline{x})$$
Observe that ${\cal Q}_{n}$ can be computed by a polynomial size homogeneous $\spsp{(\log n)}$ circuit.

If  ${\cal Q}_{n}$ could be computed {\it efficiently} by a homogeneous $\spsp^{[a]}$ for some $a$, then so could any projection of the sum (i.e. we set all but one of the $y_i$ to $0$), i.e. so could ${\cal P}_{t,n}$. This contradicts Theorem~\ref{thm:hierarchy}. In particular we get that  every $\omega(\log n) \leq a \leq \frac{n}{800} $, any homogeneous $\hsp$ circuit computing ${\cal Q}_{n}$ must have top fan-in at least $2^{\Omega(\frac{n}{a}\log n)}$. 
This completes the proof of Theorem~\ref{thm:dr}.

\subsection{Proof of Theorem~\ref{thm:regular}}

The proof of Theorem~\ref{thm:regular} follows immediately from Theorem~\ref{thm:dr} along with Theorem 15 from~\cite{KSS13}.

\section{Discussion and future directions}~\label{sec:conclusion}
One of the main questions left open by our lower bounds on $\spsp(r)$ circuits is to remove the restriction on top fan-in, and to prove super polynomial lower bounds for all homogeneous $\spsp$ circuits. Currently, we have no nontrivial lower bounds for homogeneous  $\spsp(\log n)$ circuits, even in the further special case when the family of circuits is a sum of $\psp^{[t]}$ circuits for different values of $t$. We identify this as the simplest class of circuits/formulas for which we don't know how to prove lower bounds. While this would still not suffice in proving lower bounds for general arithmetic circuits, this seems to be an important step in that direction.  Another very interesting direction would be to give nontrivial PIT results for $\spsp(r)$ circuits when $r$ is a constant. So far, we only know how to derandomize PIT when the $\spsp(r)$ circuits are multilinear, and our lower bound for $\spsp(r)$ circuits could be viewed as a first step in this direction. 

%As we demonstrate, the bottleneck in proving such a result is the same as the bottleneck in proving lower bounds for general homogeneous arithmetic formulas. 
%We currently have no nontrivial lower bounds for homogeneous  $\spsp(\log n)$ circuits, even in the further special case when the family of circuits is a sum of $\psp^{[t]}$ circuits for different values of $t$. Indeed we identify this as the simplest class of circuits/formulas for which we don't know how to prove lower bounds.

%One of the major goals in this paper was to understand the limits of depth reduction for arithmetic formulas. 
%It was left as a very interesting open question by both the earlier works by~\cite{KSS13} and ~\cite{FLMS13} if one can get improved depth reduction for arithmetic formulas. More precisely, the question asked was if all degree $n$, $n^{O(1)}$-variate homogeneous formulas of size $\poly(n)$ can be computed by a $\spsp^{[t]}$ formula of size $n^{o(\frac{n}{t})}$. If this were to be true, then it would imply superpolynomial lower bounds for homogeneous arithmetic formulas. Unfortunately we show this to be false - we give an example of an explicit family of formulas whose complexity measure of shifted partial derivatives is similar to that of the Nisan-Wigderson polynomial and that of the IMM polyomial. As a consequence of our bounds, we conclude that we cannot hope to have the improved depth reduction for arithmetic formulas. Thus one would need a more refined complexity measure than that of shifted partial derivatives to prove lower bounds for formulas. 

One corollary of our results is a hierarchy theorem for $\spsp^{[t]}$ formulas. A very interesting question that we don't know how to answer is if there is a tighter hierarchy theorem. We believe that for every $t$, polynomial sized $\spsp^{[t]}$ formulas should be able to compute a much richer class of polynomials than polynomial sized $\spsp^{[t-1]}$ formulas. A special case that we do not know how to answer is the relative complexity of $\spsp^{[2]}$ formulas versus $\spsp^{[1]}$ formulas (which are basically depth 3 formulas). Another kind of hierarchy question that we don't fully understand but which we think would be very interesting is to understand the relative complexity of depth $d$ formulas versus depth $d+1$ formulas for constant $d$. Perhaps a refinement of the depth reduction techniques of Koiran and Tavenas would shed light on these questions. 

%The question we were really trying to understand which led us to the results in this paper is the complexity of transforming a general (possibly homogeneous) formula into a regular formula or a balanced formula (in the sense of~\cite{VSBR83}). In this paper we show this if one wants to convert a general homogeneous formula (or even a balanced formula) in $n$ variables and degree $n^{O(1)}$ of size $\poly(n)$ into a regular formula, one might need to blow up the size to at least $n^{\log n}$. We don't know if this result is tight. 
%It is also an open question if one can convert a general formula to a balanced formula with only a polynomial blow up in size. The corresponding transformation for circuits given by~\cite{VSBR83} would transform a formula into a balanced {\it circuit}. 
\section*{Acknowledgements}
We would like to thank Klim Efremenko, Amir Shpilka and Amir Yehudayoff  for helpful discussions at several stages of this work. We would also like to thank Swastik Kopparty and Avi Wigderson for many helpful comments on an earlier version of this paper.

\bibliography{refs}

\begin{thebibliography}{GKKS13b}

\bibitem[AV08]{AV08}
M.~Agrawal and V.~Vinay.
\newblock Arithmetic circuits: A chasm at depth four.
\newblock In {\em Proceedings of the 49th Annual FOCS}, pages 67--75, 2008.

\bibitem[CM13]{CM13}
Suryajith Chillara and Partha Mukhopadhyay.
\newblock Depth-4 lower bounds, determinantal complexity: A unified approach.
\newblock {\em CoRR}, abs/1308.1640v3, 2013.

\bibitem[FLMS13]{FLMS13}
Herv{\'e} Fournier, Nutan Limaye, Guillaume Malod, and Srikanth Srinivasan.
\newblock Lower bounds for depth 4 formulas computing iterated matrix
  multiplication.
\newblock {\em Electronic Colloquium on Computational Complexity (ECCC)},
  20:100, 2013.

\bibitem[GK98]{GrigorievKarpinski98}
D.~Grigoriev and M.~Karpinski.
\newblock An exponential lower bound for depth 3 arithmetic circuits.
\newblock In {\em Proceedings of the 30th Annual STOC}, pages 577--582, 1998.

\bibitem[GKKS13a]{GKKS12}
A.~Gupta, P.~Kamath, N.~Kayal, and R.~Saptharishi.
\newblock Approaching the chasm at depth four.
\newblock {\em In Proceedings of CCC}, 2013.

\bibitem[GKKS13b]{GKKS13}
Ankit Gupta, Pritish Kamath, Neeraj Kayal, and Ramprasad Saptharishi.
\newblock Arithmetic circuits: A chasm at depth three.
\newblock {\em Electronic Colloquium on Computational Complexity (ECCC)},
  20:26, 2013.

\bibitem[GKL12]{GuptaKL12}
Ankit Gupta, Neeraj Kayal, and Satyanarayana~V. Lokam.
\newblock Reconstruction of depth-4 multilinear circuits with top fan-in 2.
\newblock In {\em Proceedings of the 44th Annual STOC}, pages 625--642, 2012.

\bibitem[Kay]{kayal}
N.~Kayal.
\newblock Personal communication.

\bibitem[Kay12]{Kayal12}
Neeraj Kayal.
\newblock An exponential lower bound for the sum of powers of bounded degree
  polynomials.
\newblock {\em Electronic Colloquium on Computational Complexity (ECCC)},
  19:81, 2012.

\bibitem[KMSV10]{KMSV10}
Z.~Karnin, P.~Mukhopadhyay, A.~Shpilka, and I.~Volkovich.
\newblock Deterministic identity testing of depth 4 multilinear circuits with
  bounded top fan-in.
\newblock In {\em Proceedings of the 42nd Annual STOC}, pages 649--658, 2010.

\bibitem[Koi12]{koiran}
P.~Koiran.
\newblock Arithmetic circuits: The chasm at depth four gets wider.
\newblock {\em Theoretical Computer Science}, 448:56--65, 2012.

\bibitem[KSS13]{KSS13}
Neeraj Kayal, Chandan Saha, and Ramprasad Saptharishi.
\newblock A super-polynomial lower bound for regular arithmetic formulas.
\newblock {\em Electronic Colloquium on Computational Complexity (ECCC)},
  20:91, 2013.

\bibitem[NW95]{NW95}
N.~Nisan and A.~Wigderson.
\newblock Lower bounds on arithmetic circuits via partial derivatives.
\newblock In {\em Proceedings of the 36th Annual FOCS}, pages 16--25, 1995.

\bibitem[SV11]{SarafV11}
S.~Saraf and I.~Volkovich.
\newblock Black-box identity testing of depth-4 multilinear circuits.
\newblock In {\em Proceedings of the 43rd Annual STOC}, pages 421--430, 2011.

\bibitem[SW01]{SW01}
A.~Shpilka and A.~Wigderson.
\newblock Depth-3 arithmetic circuits over fields of characteristic zero.
\newblock {\em Computational Complexity}, 10(1):1--27, 2001.

\bibitem[SY10]{SY10}
A.~Shpilka and A.~Yehudayoff.
\newblock Arithmetic circuits: A survey of recent results and open questions.
\newblock {\em Foundations and Trends in Theoretical Computer Science},
  5(3-4):207--388, March 2010.

\bibitem[Tav13]{Tavenas13}
S{\'e}bastien Tavenas.
\newblock Improved bounds for reduction to depth 4 and depth 3.
\newblock In {\em MFCS}, pages 813--824, 2013.

\bibitem[Val79]{Valiant79}
L.~G. Valiant.
\newblock Completeness classes in algebra.
\newblock In {\em Proceedings of the 11th Annual STOC}, STOC '79, pages
  249--261, New York, NY, USA, 1979. ACM.

\bibitem[VSBR83]{VSBR83}
Leslie~G. Valiant, Sven Skyum, S.~Berkowitz, and Charles Rackoff.
\newblock Fast parallel computation of polynomials using few processors.
\newblock {\em SIAM Journal of Computation}, 12(4):641--644, 1983.

\end{thebibliography}

\appendix

\section{Completeness of the model of $\spsp(r)$ circuits}\label{app:completeness}

Depth 3 and depth 4 circuits with bounded top fan-in ($\sps(r)$ and $\spsp(r)$ respectively) have been extensively studied in the past especially in the context of polynomial identity testing (PIT). The question of lower bound for $\sps(r)$ circuits is almost uninteresting since it can be shown quite easily that for $r < n$, $\sps(r)$ circuits cannot compute the $n \times n$ permanent or determinant, {\it no matter what the size} of the circuit. Thus the class of $\sps(r)$ circuits is not complete, in the sense that
the class of circuits cannot even compute all polynomials. In contrast, the class of depth $2$ $\sp$ circuits (with no restriction on top fan-in) is complete, but lower bounds are trivial for this model since any polynomial with $m$ monomials needs a $\sp$ circuit of size at least $m$ to compute it. 

It was observed by Kayal~\cite{kayal} that if one considers the class of depth 4 circuits and one imposes the additional requirement that each product gate has at least $2$ nontrivial factors, then the class of $\spsp(r)$ circuits with $r< n/2$ circuits is not complete. This is because if $\alpha$ is a common root of at least two of the factors of each of the product gates, then it would be a zero of multiplicity $2$ of the polynomial computed by the circuit. Also if $r < n/2$ then such an $\alpha$ always exists. Hence if one starts with a polynomial that does not vanish at any point with multiplicity $2$, then it cannot be computed by such a circuit. Thus in this case we can prove lower bounds easily. 

The general class of depth $4$ $\spsp(r)$ circuits even when $r=1$ is a complete class, since it  contains the class of depth 2 $\sp$ circuits.  However lower bounds for $\spsp(r)$ circuits for $r \geq 2$  did not seem to be known prior to this work.

\end{document}